\documentclass{article}

\usepackage[top=1.36in, bottom=1.36in, left=1.35in, right=1.35in]{geometry}
\usepackage[english]{babel}
\usepackage{amsmath}
\usepackage{amssymb}
\usepackage{amsthm}
\usepackage{cite}    % does not work with elsarticle
\usepackage[usenames,dvipsnames,svgnames,table]{xcolor}
\usepackage{tikz}
\usepackage{xspace}
\usepackage{ifthen}
\usepackage{paralist}
\usepackage[colorlinks,final,citecolor=blue]{hyperref}
\usepackage{epstopdf}
\usepackage{comment}

\makeatletter
\newtheorem*{rep@theorem}{\rep@title}
\newcommand{\newreptheorem}[2]{%
\newenvironment{rep#1}[1]{%
 \def\rep@title{#2 \ref{##1}}%
 \begin{rep@theorem}}%
 {\end{rep@theorem}}}
\makeatother
\newreptheorem{theorem}{Theorem}
\newreptheorem{lemma}{Lemma}

\usetikzlibrary{%
  arrows,%
  positioning,%
  decorations.pathmorphing,%
  decorations.pathreplacing,%
}

%%%%%%%%%%%%%%%%%%%%%%%%%%%%%%%%%%%%%%%%
\usepackage{listings}
\newcommand\ignore[1]{}
\lstloadlanguages{Haskell}
\lstnewenvironment{code}
    {\lstset{}%
      \csname lst@SetFirstLabel\endcsname}
    {\csname lst@SaveFirstLabel\endcsname}
    \lstset{
      basicstyle=\small\ttfamily,
      flexiblecolumns=false,
      basewidth={0.5em,0.45em},
      literate={+}{{$+$}}2
{++}{{++}}2
{/}{{$/$}}2 {*}{{$*$}}1% {=}{{$=$}}2
{==}{{==}}2
               {>}{{$>$}}2 {<}{{$<$}}2 {\\}{{$\lambda$}}1
               {\\\\}{{\char`\\\char`\\}}1
               {->}{{$\rightarrow$}}2 {>=}{{$\geq$}}2 {<-}{{$\leftarrow$}}2
               {<=}{{$\leq$}}2 {=>}{{$\Rightarrow$}}2
               {\ .}{{$\circ$}}2 {\ .\ }{{$\circ$}}2
{\ .|.}{{\ .|.}}2
{\ .\&.}{{\ .\&.}}2
               {>>}{{>>}}2 {>>=}{{>>=}}2
               {|}{{$\mid$}}1
    }

%%%%%%%%%%%%%%%%%%%%%%%%%%%%%%%%%%%%%%%%

\newcommand{\north}{{\tt N}}
\newcommand{\west}{{\tt  W}}
\newcommand{\south}{{\tt S}}
\newcommand{\east}{{\tt  E}}

\newcommand{\sat}{{\scshape Sat}\xspace}

%%%%%%%%%%%%%%%%%%%%%%%%%%%%%%%%%%%%%%%%%%%%%%%%%%%%%%%%%%%%%%%%%%%%%%%%%%%%%%%%
% Theorem environments
\theoremstyle{plain}
\newtheorem{theorem}{Theorem}

\newtheorem{lemma}{Lemma}

\theoremstyle{definition}
\newtheorem{definition}[theorem]{Definition}
\newtheorem{example}[theorem]{Example}
\newtheorem{invariant}[theorem]{Invariant}

\newtheorem*{prob}{Problem}

\theoremstyle{remark}

\newcommand{\problem}[3]{%
	\begin{prob}[#3]\ \\ \vspace{-\baselineskip}
		\begin{compactdesc}
			\item{\sc Given:} #1;
			\item{\sc Output:} #2.
		\end{compactdesc}
	\end{prob}}

%%%%%%%%%%%%%%%%%%%%%%%%%%%%%%%%%%%%%%%%%%%%%%%%%%%%%%%%%%%%%%%%%%%%%%%%%%%%%%%%
% Lazy

% caligraphical

\newcommand{\cT}{\ensuremath{\mathcal{T}}\xspace}

%%%%%%%%%%%%%%%%%%%%%%%%%%%%%%%%%%%%%%%%%%%%%%%%%%%%%%%%%%%%%%%%%%%%%%%%%%%%%%%%
% Glues
\newcommand\bul{\ensuremath{{\vphantom u\raisebox{1pt}{$\scriptscriptstyle\circ$}}}\xspace}
\newcommand\wild{\ensuremath{{\#}}\xspace}

\newcommand\unc{\ensuremath{{\mathbf{\mathsf u}}}\xspace}

%%%%%%%%%%%%%%%%%%%%%%%%%%%%%%%%%%%%%%%%%%%%%%%%%%%%%%%%%%%%%%%%%%%%%%%%%%%%%%%%
% Tikz tiles
\newcommand{\Tile}[4]{
	\begin{scope}[xshift=#1,yshift=#2]
		\fill [#3] (-.45,-.45) rectangle (.45,.45);
		#4
		\draw (-.45,-.45) rectangle (.45,.45);
	\end{scope}
}

\newcommand{\Tug}[2]{
	\Tile{#1}{#2}{white}
	{\draw [Green,ultra thick] (0,-.45) -- (0,0);
	\draw [Red,dotted,ultra thick] (0,0) -- (0,.45);
	\node at (-.5,0) [anchor=west] {\unc};
	\node at (.5,0) [anchor=east] {\unc};}
}
\newcommand{\Tur}[2]{
	\Tile{#1}{#2}{lightgray}
	{\draw [Red,dotted,ultra thick] (0,-.45) -- (0,0);
	\draw [Green,ultra thick] (0,0) -- (0,.45);
	\node at (-.5,0) [anchor=west] {\unc};
	\node at (.5,0) [anchor=east] {\unc};}
}
\newcommand{\Tub}[2]{
	\Tile{#1}{#2}{lightgray}
	{\node at (-.5,0) [anchor=west] {\unc};
	\node at (.5,0) [anchor=east] {\unc};
	\node at (0,.5) [anchor=north] {\bul};
	\node at (0,-.5) [anchor=south] {\bul};}
}

\newcommand{\Tbg}[2]{
	\Tile{#1}{#2}{white}
	{\draw [Green,ultra thick] (0,-.45) -- (0,.45);
	\node at (-.5,0) [anchor=west] {\bul};
	\node at (.5,0) [anchor=east] {\bul};}
}
\newcommand{\Trg}[2]{
	\Tile{#1}{#2}{white}
	{\draw [Red,dotted,ultra thick] (-.45,0) -- (.45,0);
	\draw [Green,ultra thick] (0,-.45) -- (0,.45);}
}
\newcommand{\Tbr}[2]{
	\Tile{#1}{#2}{white}
	{\draw [Red,dotted,ultra thick] (0,-.45) .. controls (0,0) .. (.45,0);
	\node at (-.5,0) [anchor=west] {\bul};
	\node at (0,.5) [anchor=north] {\bul};}
}
\newcommand{\Trb}[2]{
	\Tile{#1}{#2}{white}
	{\draw [Red,dotted,ultra thick] (-.45,0) .. controls (0,0) .. (0,.45);
	\node at (0,-.5) [anchor=south] {\bul};
	\node at (.5,0) [anchor=east] {\bul};}
}
\newcommand{\Tbb}[2]{
	\Tile{#1}{#2}{white}{
	\node at (0,-.5) [anchor=south] {\bul};
	\node at (.5,0) [anchor=east] {\bul};
	\node at (-.5,0) [anchor=west] {\bul};
	\node at (0,.5) [anchor=north] {\bul};}
}

\newcommand{\Tsg}[2]{
	\Tile{#1}{#2}{white}
	{\draw [Blue,ultra thick,{-[}] (-.45,0) -- (.45,0);
	\draw [Green,ultra thick] (0,-.45) -- (0,.45);}
}
\newcommand{\Tsr}[2]{
	\Tile{#1}{#2}{white}
	{\draw [Blue,ultra thick] (-.45,0) -- (.45,0);
	\draw [Red,dotted,ultra thick] (0,-.45) -- (0,.45);}
}
\newcommand{\Tsb}[2]{
	\Tile{#1}{#2}{white}
	{\draw [Blue,ultra thick] (-.45,0) -- (.45,0);
	\node at (0,.5) [anchor=north] {\bul};
	\node at (0,-.5) [anchor=south] {\bul};}
}
\newcommand{\TSb}[2]{
	\Tile{#1}{#2}{white}
	{\draw [Blue,ultra thick,{]-}] (-.45,0) -- (.45,0);
	\node at (0,.5) [anchor=north] {\bul};
	\node at (0,-.5) [anchor=south] {\bul};}
}
\newcommand{\TSr}[2]{
	\Tile{#1}{#2}{white}
	{\draw [Red,dotted,ultra thick] (0,-.45) -- (0,0);
	\draw [Blue,ultra thick,{]-}] (-.45,0) -- (.45,0);
	\node at (0,.5) [anchor=north] {\bul};}
}

% tile placing without careing about coordinates
\newlength{\XCoord}
\newlength{\YCoord}

\newcommand{\InitCoordinates}{\setlength\XCoord{0cm}\setlength\YCoord{0cm}}
\newcommand{\LineUp}{\setlength\XCoord{0cm}\addtolength\YCoord{1cm}}

\newcommand{\MoveRight}{\addtolength\XCoord{1cm}}

\newcommand{\Pug}{\Tug{\XCoord}{\YCoord}\MoveRight}
\newcommand{\Pur}{\Tur{\XCoord}{\YCoord}\MoveRight}
\newcommand{\Pub}{\Tub{\XCoord}{\YCoord}\MoveRight}
\newcommand{\Pbg}{\Tbg{\XCoord}{\YCoord}\MoveRight}
\newcommand{\Prg}{\Trg{\XCoord}{\YCoord}\MoveRight}
\newcommand{\Pbr}{\Tbr{\XCoord}{\YCoord}\MoveRight}
\newcommand{\Prb}{\Trb{\XCoord}{\YCoord}\MoveRight}
\newcommand{\Pbb}{\Tbb{\XCoord}{\YCoord}\MoveRight}
\newcommand{\Psg}{\Tsg{\XCoord}{\YCoord}\MoveRight}
\newcommand{\Psr}{\Tsr{\XCoord}{\YCoord}\MoveRight}
\newcommand{\Psb}{\Tsb{\XCoord}{\YCoord}\MoveRight}
\newcommand{\PSb}{\TSb{\XCoord}{\YCoord}\MoveRight}
\newcommand{\PSr}{\TSr{\XCoord}{\YCoord}\MoveRight}

%%%%%%%%%%%%%%%%%%%%%%%%%%%%%%%%%%%%%%%%%%%%%%%%%%%%%%%%%%%%%%%%%%%%%%%%%%%%%%%%
% tiles and signals for math environment

\newcommand{\SQb}{{\color{Black}\blacksquare}}
\newcommand{\SQw}{{\square}}

% binary double rows
\newcommand{\Dr}{{{}_{\SQb}^{\SQw}}}
\newcommand{\Dg}{{{}_{\SQw}^{\SQb}}}
\newcommand{\Db}{{{}_{\SQb}^{\SQb}}}

% singals
\newcommand*{\RightArrow}[1][]{\mathbin{\tikz [baseline=-0.25ex,->,thick,#1] \draw (0,0.5ex) -- (.75em,0.5ex);}}
\newcommand*{\UpArrow}[1][]{\phantom{\SQw}\tikz [overlay,baseline=-0.25ex,->,thick,#1]{\draw (-4pt,0) -- (-4pt,.75em);}}

\newcommand{\SIs}{\RightArrow[Blue]}
\newcommand{\SIr}{\RightArrow[dotted,Red]}
\newcommand{\SIu}{\mathbin{\mathrm{\unc}}}
\newcommand{\SIb}{\RightArrow[Blue]}

\newcommand{\VGg}{\UpArrow[Green]}
\newcommand{\VGr}{\UpArrow[dotted,Red]}
\newcommand{\VGb}{\phantom{\SQw}\tikz [overlay,baseline=-0.25ex,->,thick]{\node at (-4pt,.375em) {\bul};}}

%%%%%%%%%%%%%%%%%%%%%%%%%%%%%%%%%%%%%%%%%%%%%%%%%%%%%%%%%%%%%%%%%%%%%%%%%%%%%%%%
% Math

% sets { ... | ... }
\newcommand{\sett}[2]{\left\{#1\mathrel{\left|\vphantom{#1}\vphantom{#2}\right.}#2\right\}}
\newcommand{\set}[1]{\left\{\mathinner{#1}\right\}}

% brackets etc
\newcommand{\abs}[1]{\left|\mathinner{#1}\right|}

% numbers
\newcommand{\N}{\mathbb{N}}

% oh and Oh
\newcommand{\Oh}{\mathcal{O}}
 % {\mathcal{o}}

% sets

%%%%%%%%%%%%%%%%%%%%%%%%%%%%%%%%%%%%%%%%%%%%%%%%%%%%%%%%%%%%%%%%%%%%%%%%%%%%%%%%
% Languages

% complexity / language classes

\newcommand{\NP}{\ensuremath{\mathbf{NP}}\xspace}

\newcommand{\pats}{{\sc Pats}\xspace}

\newcommand{\SAT}{{\sc Sat}\xspace}
\newcommand{\modSAT}{{\sc M-Sat}\xspace}

\newcommand\kpats{{\sc \mbox{$k$-{\pats}}}\xspace}

\newcommand{\dom}[1]{{\rm dom}(#1)}

%%%%%%%%%%%%%%%%%%%%%%%%%%%%%%%%%%%%%%%%%%%%%%%%%%%%%%%%%%%%%%%%%%%%%%%%%%%%%%%%
% comments
%\newcommand{\sk}[1]{{\em\color{blue}[Steffen: #1]}}
%\newcommand{\shin}[1]{{\em\color{blue}[Shin: #1]}}
%\newcommand{\pe}[1]{\marginpar{\tiny\color{blue}[PE: #1]}}

%%%%%%%%%%%%%%%%%%%%%%%%%%%%%%%%%%%%%%%%%%%%%%%%%%%%%%%%%%%%%%%%%%%%%%%%%%%%%%%%
%%%%%%%%%%%%%%%%%%%%%%%%%%%%%%%%%%%%%%%%%%%%%%%%%%%%%%%%%%%%%%%%%%%%%%%%%%%%%%%%

\begin{document}

\title{Binary pattern tile set synthesis is NP-hard}

\author{
	Lila Kari\thanks{Department of Computer Science, University of Western Ontario, London ON N6A 1Z8, Canada. {\tt \{lila,steffen\}@csd.uwo.ca}.
These authors' research was supported by the NSERC Discovery Grant R2824A01 and UWO Faculty of Science grant to L.~K.}\and
	Steffen Kopecki\footnotemark[1]\and
	Pierre-\'Etienne Meunier\thanks{Aix Marseille Universit\'e, CNRS, LIF UMR 7279, 13288, Marseille, France, {\tt pierre-etienne.meunier@lif.univ-mrs.fr}. Supported in part by National Science Foundation Grant CCF-1219274.}\and
	Matthew J. Patitz\thanks{Department of Computer Science and Computer Engineering, University of Arkansas, Fayetteville, AR, USA. {\tt mpatitz@self-assembly.net}. This author's research was supported in part by National Science Foundation Grant CCF-1117672.}\and
	Shinnosuke Seki\thanks{Department of Information and Computer Science, Aalto University, P.~O.~Box 15400, FI-00076, Aalto, Finland. {\tt shinnosuke.seki@aalto.fi}. This author's research was supported in part by Academy of Finland, Postdoctoral Researcher Grant 13266670/T30606.}
}

\maketitle

\begin{abstract}

In the field of algorithmic self-assembly, a long-standing unproven conjecture
has been that of the \NP-hardness of binary pattern tile set synthesis
(2-\pats).  The {$k$-\pats} problem is that of designing a tile assembly
system with the smallest number of tile types which will self-assemble an input
pattern of $k$ colors.  Of both theoretical and practical significance,
%(for instance in the accurate placement of nano-particles on a patterned substrate)
{$k$-\pats} has
been studied in a series of papers which have shown {$k$-\pats} to be \NP-hard
for $k=60$, $k=29$, and then $k=11$.  In this paper, we close the fundamental
conjecture that $2$-\pats is \NP-hard, concluding this line of study.

While most of our proof relies on standard mathematical proof techniques, one crucial lemma
makes use of a computer-assisted proof, which is a relatively novel but increasingly utilized paradigm for
deriving proofs for complex mathematical problems.  This tool is especially powerful
for attacking combinatorial problems, as exemplified by the proof of the
four color theorem by Appel and Haken (simplified later by Robertson, Sanders,
Seymour, and Thomas) or the recent important advance on the Erd\H{o}s
discrepancy problem by Konev and Lisitsa using computer programs.
We utilize a massively parallel algorithm and thus turn an otherwise intractable portion of our proof into a program which requires approximately a year of computation time, bringing the use of computer-assisted proofs to a new scale.  We fully detail the algorithm employed by our code, and make the code freely available online.

\end{abstract}

%-----------------------------------------------------------------
	\section{Introduction}
%-----------------------------------------------------------------
%\noindent{\bf Self-assembly.}
Self-assembly is the process through which
disorganized, relatively simple components au\-ton\-o\-mous\-ly coalesce according to simple local
rules to form more complex target structures. Despite sounding simple, self-assembly can
produce extraordinary results. For example, and beyond the many examples occurring in nature, researchers have been able to
self-assemble a wide variety of nanoscale structures experimentally, such
as regular arrays~\cite{WinLiuWenSee98}, fractal
structures~\cite{RoPaWi04,FujHarParWinMur07}, smiley
faces~\cite{rothemund2006folding,wei2012complex}, DNA
tweezers~\cite{yurke2000dna}, logic
circuits~\cite{seelig2006enzyme,qian2011scaling}, neural
networks~\cite{qian2011neural}, and molecular
robots\cite{DNARobotNature2010}. These examples are fundamental because they
demonstrate that self-assembly can, in principle, be used to manufacture
specialized geometrical, mechanical and computational objects at the
nanoscale. Potential future applications of nanoscale self-assembly include the
production of new materials with specifically tailored properties (electronic, photonic, etc.)
%smaller, more efficient microprocessors
and medical technologies which are capable of diagnosing and even treating diseases in vivo and at the cellular level.

Controlling nanoscale self-assembly for the purposes of manufacturing atomically
precise components will require a bottom-up, hands-off strategy. In other words,
the self-assembling units themselves will have to be ``programmed'' to direct
themselves to assemble efficiently and correctly.  Molecular self-assembly is
rapidly becoming a ubiquitous engineering paradigm, and robust
theory is necessary to inform us of its algorithmic capabilities and ultimate limitations.

In 1998, Erik Winfree \cite{Winfree_PhDthesis} introduced the abstract Tile Assembly Model
(aTAM), a simplified discrete mathematical model of algorithmic DNA nanoscale
self-assembly pioneered by Seeman \cite{Seem82}. The aTAM is essentially an asynchronous
nondeterministic cellular automaton that models crystal growth processes. Put
another way, the aTAM augments classical Wang tiling \cite{Wang61}
with a mechanism for sequential growth of a tiling. This contrasts with classical Wang
tiling in which only the existence of a valid mismatch-free tiling is
considered, and \emph{not} the order of tile placement. In the aTAM, the
fundamental components are translatable but un-rotatable square \emph{tiles}
whose sides are labeled with colored \emph{glues}, each with an integer
\emph{strength}.  Two tiles that are placed next to each other \emph{interact}
if the glue colors on their abutting sides match, and they \emph{bind} if the
strengths on their abutting sides match and sum to at least a certain (integer)
\emph{temperature}. Self-assembly starts from an initial \emph{seed assembly}
and proceeds nondeterministically and asynchronously as tiles bind to the
seed-containing-assembly. Despite its deliberate simplification, the aTAM is a
computationally expressive model in which simulations of arbitrary Turing
computations have been built \cite{Winfree_PhDthesis} and complex series of computations performed \cite{jSADS,jCCSA}. It has even been shown recently to be intrinsically universal
\cite{IUSA,USA,2HAMIU,Meunier-2014,woods2013intrinsic,Demaine-2012}.

%\hfill

%\noindent{\bf 2-PATS.}

The problem we study in this paper is the optimization of the design of tile assembly systems in the aTAM which self-assemble to form input colored patterns.  The input for this problem is a rectangular pattern consisting of $k$ colors, and the output is a tile set in the aTAM which self-assembles the pattern.  Essentially, each type of tile is assigned a ``color'', and the goal is to design a system consisting of the minimal number of tile types such that they deterministically self-assemble to form a rectangular assembly in which each tile is assigned the same color as the corresponding location in the pattern.  This problem was introduced in \cite{MaLombardi2008}, and has since then been extensively studied \cite{GoosOrponen10,LempCzeilOrp11,CzeizlerPopaPATS,Seki2013,JohnsenKaoSeki2013,JohnsenKaoSeki2014}.  The interest is both theoretical, to determine the computational complexity of designing efficient tile assembly systems, and practical, as the goal of self-assembling patterned substrates onto which a potentially wide variety of molecular components could be attached is a major experimental goal.  Known as $k$-\pats, where $k$ is the number of unique colors in the input pattern, previous work has steadily decreased the value of $k$ for which $k$-\pats has been shown to be \NP-hard, from $60$ \cite{Seki2013} to $29$ \cite{JohnsenKaoSeki2013} to $11$ \cite{JohnsenKaoSeki2014}. (Additionally, in a variant of $k$-\pats where the number of tile types of certain colors is restricted, is has been proven to be \NP-hard for $3$ colors \cite{kari2013}.) However, the foundational and previously unproven conjecture has been that for $k=2$, i.e. $2$-\pats, the problem is also \NP-hard.  This is our main result, which is thus the terminus of this line of research and a fundamental result in algorithmic self-assembly.

%a nanoscale system of DNA tiles with nanoparticles that aggregate into a 2D-lattice in such a way that their nanoparticles are properly located and spaced to display their collective properties (such systems can be implemented {\it in vitro} to assemble periodic lattices \cite{ZhLiKeYa2006} as well as aperiodic ones \cite{RothemundPapadakisWinfree2004}).
%Our ability to accurately produce structures including nano-particles is fundamental
%in the production of e.g. nano-scale computing, as exemplified by \cite{thikhomirov2011}.

%\hfill

%\noindent{\bf Computer-assisted proofs.}

Our proof of the $2$-\pats conjecture requires the solution of a massive combinatorial problem, and thus one of the lemmas upon which it relies has a computer-assisted proof.  That is, it relies upon the output of a computer program which exhaustively searches through a combinatorially explosive set of all possible tile sets which can self-assemble a particular, carefully designed input pattern.  (In fact, during each run our, verification program generated over $66\cdot 10^{12}$ partial tile assemblies to be inspected.)  Although not yet commonplace in theoretical results, computer-assisted proofs are becoming more widespread, driven by the exponential growth of computing power which can be applied to previously and otherwise intractable problems.
%Massive parallel computation environment
%available nowadays greatly assists us in working on hard mathematical problems.
Nonetheless, as early as 1976, Appel and Haken proved the four color theorem
\cite{AppelHaken1977-I,AppelHaken1977-II} by using a computer program to check whether
each of the thousands of possible candidates for the smallest-sized counter example
to this theorem were actually four-colorable or not (simplified in \cite{RoSaSeTh1996}).
Since then, important problems
in various fields have been solved (fully or partially)
with the assistance of computers: the discovery of Mersenne primes
\cite{Tuckerman1971},
the 17-point case of the happy ending problem
\cite{SzekeresPeters2006}, the \NP-hardness of minimum-weight triangulation
\cite{MulzerRote2008}, a special case of Erd\H{o}s' discrepancy conjecture
\cite{KonevLisitsa2014}, the ternary Goldbach conjecture
\cite{Helfgott2013}, and Kepler's conjecture \cite{Hales2000,Marchal2011}, among others.
However, to the best of our knowledge, the scale of our computation is much greater than all of those and others which have been published.  Our program required approximately one year of computation time on very modern, high-end machines (as a sum total over several hundred distributed cores) to complete and verify the correctness of the lemma.
Thus, in this paper, we take this approach to a new order of magnitude.  Such techniques
create new possibilities, spanning beyond the aesthetic concerns usually associated
with the domain (see quotations from Paul Erd\H{o}s in
\cite{Hoffman1998}). Indeed, our proof takes several days to complete on a
massively parallel computer, which makes it essentially impossible for a human to verify.
This kind of method is likely to become
common when it comes to proving correctness of biological or chemical systems, due to the
complexity of these objects. Moreover, the ``natural proofs'' line of research
\cite{razborov94,rudich97,Allender2010,Chow2011} also suggests that our ability
to produce and verify large proofs is likely to become fundamental in complexity
theory.

The computer program that we used in our proof was written in C++, and we have made the code freely available online\footnote{\url{http://self-assembly.net/wiki/index.php?title=2PATS-tileset-search}}.  Furthermore, we provide a full technical description and justification of the main algorithm utilized by the code in Section~\ref{programmatic}. \footnote{We have also implemented the same algorithm in a client-server system written in Haskell and included a rigorous proof of its correctness as the appendix, since as a functional programming language Haskell lends itself more readily to formal proofs of program correctness.  However, at the time of submission, due to decreased efficiencies as compared to C++, the Haskell program has not yet completed.}

\subsection{Main result}

Our result solves a long-standing open problem in the field of DNA self-assembly, the
so-called {\it binary pattern tile set synthesis} (2-{\pats}) problem
\cite{MaLombardi2008,Seki2013}.  In the general $k$-{\pats} for $k \ge 2$, given a placement of $k$ different kinds of nanoparticles, represented in the model as a $k$-colored rectangular pattern, we are asked to design a tile assembly system with colored tile types that self-assembles the pattern.
The periodic placement of Au (gold) nanoparticles on 2D DNA nanogrid \cite{ZhLiKeYa2006} can be considered a 2-colored (i.e., binary) rectangular pattern on which the two colors specify the presence/absence of an Au nanoparticle at the position.
Another example, shown in Figure~\ref{fig:SA_counter}, is a binary counter pattern that is self-assembled using 4 tile types with two colors.
%In this problem, we are given $n$ different kinds of nanoparticles, represented in the model
%as $n$ different colors, to be placed on a rectangular pattern such as the one on Figure~\ref{fig:SA_counter}, representing a (2-colored) binary counter pattern using 4 tile types: two blue tiles and two orange tiles.
%The 2-\pats problem is the problem of finding the minimum number of tile types to achieve this placement.
2-\pats has been conjectured to be \NP-hard\footnote{This problem was claimed to be so in the succeeding paper by the authors of \cite{MaLombardi2008} but what they proved was the {\bf NP}-hardness of a related but different problem.}.
In \cite{Seki2013}, Seki proved for the first time that the \NP-hardness of 60-{\pats}, whose inputs are allowed to have 60 colors, and the result has been strengthened to that of 29-{\pats} \cite{JohnsenKaoSeki2013}, and further to that of 11-{\pats} \cite{JohnsenKaoSeki2014} recently.

Our main theorem closes this line of research by lowering the number of colors allowed for input patterns to only two.
Although a number of terms it contains have not yet been defined, we state it now:

\begin{theorem}
\label{thm:2pats_NPhard}
The $2$-\pats optimization problem of finding, given a $2$ colored rectangular
pattern $P$, the minimal colored tileset that produces a single terminal
assembly where the color arrangement is exactly the same as in $P$, is NP-hard.
\end{theorem}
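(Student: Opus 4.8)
The plan is to prove \NP-hardness by a polynomial-time many-one reduction from a suitable \NP-complete variant of satisfiability. A natural source is \oitSAT, since its ``exactly one true literal per clause'' requirement is a \emph{local counting} condition that a deterministic tile assembly can check cell-by-cell; a monotone fragment such as \modSAT would serve equally well if it simplifies the gadgets. Concretely, I would describe a map sending each formula $\varphi$ to a pair $(P_\varphi, B_\varphi)$, where $P_\varphi$ is a $2$-colored rectangular pattern and $B_\varphi$ is an integer bound, so that the minimal colored tileset assembling $P_\varphi$ has size at most $B_\varphi$ if and only if $\varphi$ is satisfiable. Since the minimization is over the number of tile types, this reduction establishes hardness of the optimization problem via its natural decision version.

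The construction itself would proceed by assembling $P_\varphi$ from reusable \emph{gadget} sub-patterns on top of a fixed boundary. The role of the boundary (seed) frame, assembled from only a constant number of tile types, is to force the interior to grow in a single deterministic order---row by row from the seed row---so that the tile placed at each position, and hence its color, is a function of a bounded amount of information carried across its glues. On this scaffold I would lay out: variable gadgets, each a small block whose two-coloring forces a binary choice of tile type and thereby encodes a truth value; propagation gadgets that copy and route these choices horizontally and vertically through the rectangle without introducing new tile types; and clause gadgets, placed where the relevant wires cross, which are colorable using only already-existing tile types precisely when the one-in-three condition holds. The target bound $B_\varphi$ is then calibrated by counting the tile types forced by the frame together with the variable and propagation gadgets, so that a satisfying assignment lets every clause gadget be realized with no fresh tile type, whereas an unsatisfying assignment provably costs at least one extra type.

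The main obstacle---and what distinguishes the $2$-\pats case from the earlier $k$-\pats results---is the \emph{soundness} direction: given any tileset of size at most $B_\varphi$, I must extract a satisfying assignment. With only two colors the pattern supplies almost no direct constraint on the underlying glues, so proving that a gadget \emph{cannot} be assembled with fewer tile types than intended requires ruling out every conceivable ``cheating'' tileset, and there are super-exponentially many small candidates. I expect this to be infeasible to verify by hand for the core gadget. The plan is therefore to isolate a single finite combinatorial statement---that one carefully engineered small pattern admits only the intended minimal tilesets, up to glue relabeling---and to discharge it by an exhaustive computer search that enumerates all partial tile assemblies consistent with the pattern and certifies that none yields a smaller tileset. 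This is exactly the computer-assisted lemma advertised in the introduction; designing the gadget so that this search is provably sound and terminates in finite (if enormous) time is the delicate part on which the whole reduction rests.
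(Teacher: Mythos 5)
Your overall architecture is the paper's: reduce from a \sat variant (the paper uses \modSAT), build a $2$-colored pattern containing one fixed gadget subpattern whose minimal tileset is certified by an exhaustive computer search, and encode satisfiability in whether the remainder of the pattern can be assembled within the bound. But two pieces of your plan are under-specified in ways that matter. First, your bound $B_\varphi$ is ``calibrated by counting the tile types forced by the frame together with the variable and propagation gadgets,'' which reads as instance-dependent. For the single computer-checked lemma about one fixed small pattern to pin down the tileset for \emph{every} instance, the bound must be an absolute constant ($13$ in the paper, Lemma~\ref{prop:gadget}), and consequently the truth assignment cannot be encoded in which tile types get created --- the gadget forces the tileset to be exactly $T$ up to isomorphism, leaving no freedom there. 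The paper instead hides the assignment in the otherwise unconstrained west-seed glues of $k+n$ all-white rows (presence or absence of a ``blue signal'' per row), and your phrasing that an unsatisfying assignment ``costs at least one extra type'' also quantifies the wrong thing: what must be shown is that \emph{no} tileset of size at most the bound assembles $P_\varphi$, which the paper gets by first forcing the tileset to be $T$ and then proving that $T$, with \emph{any} seed, fails unless $\varphi$ is satisfiable.

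Second, even granting the computer-verified gadget, the claim that clause gadgets ``are colorable using only already-existing tile types precisely when the [clause] condition holds'' is asserted without a mechanism, and with only two colors this is precisely the hard mathematical content that remains after the search. The paper supplies it via a concrete signal system encoded in the glues of $T$ (horizontal blue, vertical green, diagonal red signals, with red destroyed exactly when it crosses a green-then-blue configuration), a quantitative lemma on diagonal signal displacement (Lemma~\ref{lem:red:signal}) used both to show each clause's red signal must be destroyed and to force exactly $k$ hidden blue signals (enforcing the ``few true variables'' side of \modSAT), plus an argument that the two-row ``uncover'' encoding of configurations leaves no alternative tilings of the blank interior. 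Without designing such a mechanism and proving both directions of the equivalence for the fixed tileset, the reduction is not yet a proof.
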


\subsection{Overview of our proof technique}

The main idea of our proof is similar to the strategies adopted by
\cite{Seki2013,JohnsenKaoSeki2013,JohnsenKaoSeki2014}.  We embed the computation
of a verifier of solutions for an NP-complete problem (in our case, a variant of
\sat, which we call \modSAT) in an assembly, which is relatively straightforward in Winfree's aTAM. One can indeed engineer a tile assembly system (TAS) in this model, with
colored tiles, implementing a verifier of solutions of the variant of \sat, in which a
formula $F$ and a variable assignment $\phi \in \{0,1\}^n$ are encoded in the seed
assembly, and a tile of a special color appears after some time if and only if
$F(\phi)=1$.
In our actual proof, reported in Section~\ref{sec:2pats}, we design a set $T$ of 13 tile types and a reduction of a given instance $\phi$ of \modSAT to a rectangular pattern $P_F$ such that
%%Therefore, the reduction needs to encode a formula $\phi$ in a pattern $P_\phi$, such that:

%% Our proof for the \NP-hardness of 2-{\pats}, reported in
%% Section~\ref{sec:2pats}, basically adopts the strategy based on tile assembly
%% deciders .  That is,
%% verifiers for variants of \sat are encoded in a tileset $T$ of $n$ tile types to
%% decide it and a reduction of an instance $\phi$ of the problem chosen to a
%% rectangular pattern $P(\phi)$ are designed in such a manner that
\begin{enumerate}[Property 1.]
\setlength{\parskip}{1mm}
\setlength{\itemsep}{0mm}
\item	A TAS using tile types in $T$ self-assembles $P_F$ if and only if $F$ is satisfiable.
\item	Any TAS of at most 13 tile types that self-assembles $P_F$ is isomorphic to $T$.\label{toughTask}
\end{enumerate}

Therefore, $F$ is solvable if and only if $P_F$ can be self-assembled
using at most 13 tile types.  In previous works
\cite{Seki2013,JohnsenKaoSeki2013,JohnsenKaoSeki2014}, significant portions of the
proofs were dedicated to ensuring their analog of Property~\ref{toughTask}, and many colors were
wasted to make the property ``manually'' checkable (for reference, 33 out of 60
colors just served this purpose for the proof of \NP-hardness of 60-{\pats} \cite{Seki2013} and 2 out of 11 did that for 11-{\pats}
\cite{JohnsenKaoSeki2014}).  Cutting this ``waste'' causes a combinatorial explosion of cases to test and motivates us to use
a computer program to do the verification instead.

%% In our design, a subpattern of the reduced pattern $P_\phi$, called the {\it gadget}, endows $P(\phi)$ with the second property.
%% While the verification of Property~1 is just an exercise, it has turned out quite hard to design such a gadget and to verify the gadget with respect to Property~2 \cite{Seki2013,JohnsenKaoSeki2013,JohnsenKaoSeki2014}.

Apart from the verification of Property~\ref{toughTask} (in Lemma~\ref{prop:gadget}), the rest of our proof can be verified as done in traditional mathematical proofs; our proof is in Section~\ref{sec:2pats}.
The verification of Property~\ref{toughTask} is done by an algorithm described in Section~\ref{subsect:algorithm}, which, given a pattern and an integer $n$, searches exhaustively for all possible sets of $n$ tile types that self-assemble the pattern.  The correctness of the algorithm is proven, and the C++ code implementing the algorithm is made freely available for inspection.

%%  The following material until the end of this file is commented out until/unless the Haskell code completes

%Following the approach advocated by Gonthier \cite{Gonthier2008} for the rigidity of computer-assisted proofs, we implemented the software in Haskell and C++, and ran them in different environments.

%In particular, the Haskell version has motivated the writing of a generic tool,
%mathematically proven in Section~\ref{programmatic}.  This tool, implementing a
%client-server architecture for this kind of parallel tasks, is freely available
%at \url{http://parry.lif.univ-mrs.fr}, for additional \emph{intensive proofs}.

%----------------------------------------------------------------
	\section{Preliminaries}
	\label{sec:preliminaries}
%----------------------------------------------------------------

This section is divided into three subsections:
necessary notions and notation on rectilinear tile
assembly system, definitions related to the \pats problem, and then a
preliminary lemma about the variant of \SAT that we use.

\subsection{Pattern assembly}

Let $\mathbb{N}$ be the set of nonnegative integers, and for $n \in \mathbb{N}$, let $[n] = \{0, 1, 2, \ldots, n{-}2, n{-}1\}$.
For $k \ge 1$, a {\it $k$-colored pattern} is a partial function from $\mathbb{N}^2$ to the set of (color) indices $[k]$, and a {\it $k$-colored rectangular pattern} (of width $w$ and height $h$) is a pattern whose domain is $[w] \times [h]$.

\begin{figure}[tb]
\begin{center}
\begin{minipage}{0.4\linewidth}
\includegraphics[scale=0.45]{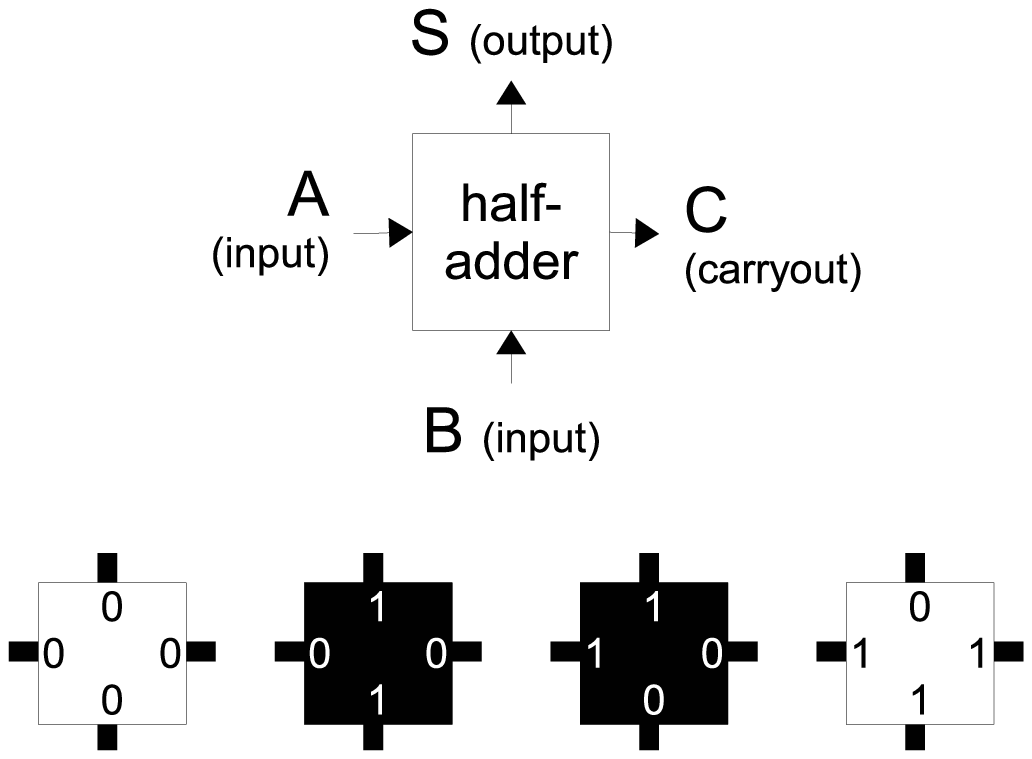}
\end{minipage}
\begin{minipage}{0.05\linewidth}
\hspace*{5mm}
\end{minipage}
\begin{minipage}{0.5\linewidth}
\includegraphics[scale=0.3]{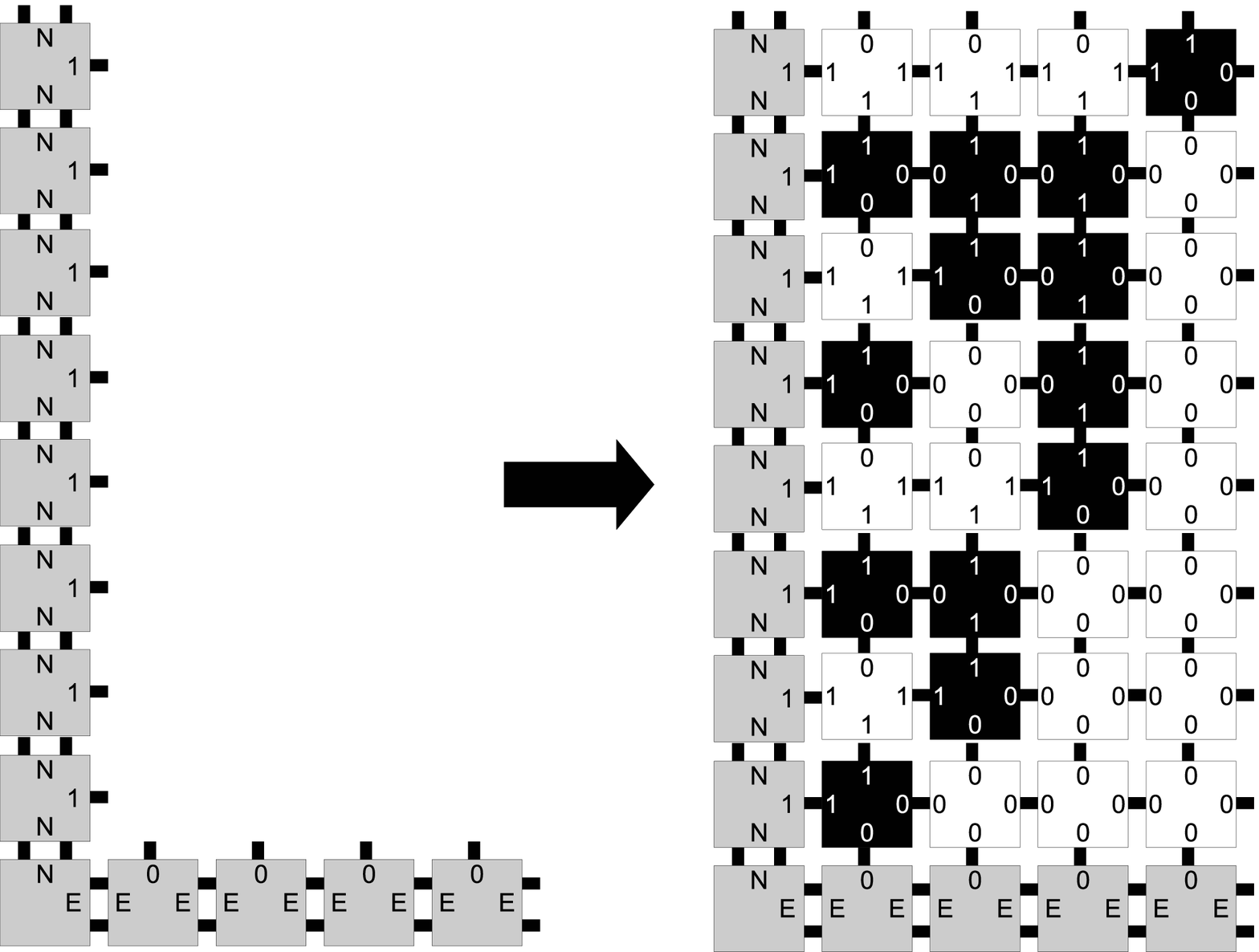}
\end{minipage}
\end{center}
\caption{
	(Left) Four tile types implement together the half-adder with two inputs A, B from the west and south, the output S to the north, and the carryout C to the east.
	(Right) Copies of the ``half-adder'' tile types turn the L-shape seed into the binary counter pattern.
}
\label{fig:SA_counter}
\end{figure}

Let $\Sigma$ be a glue alphabet.
A {\it (colored) tile type} $t$ is a tuple $(g_\north, g_\west, g_\south, g_\east, c)$, where $g_\north, g_\west, g_\south, g_\east \in \Sigma$ represent the respective north, west, south, and east glue of $t$, and $c \in \mathbb{N}$ is a color (index) of $t$.
For instance, the right black tile type in Figure~\ref{fig:SA_counter} (Left) is (1, 1, 0, 0, black).
We refer to $g_\north, g_\west, g_\south, g_\east$ as $t(\north), t(\west), t(\south), t(\east)$, respectively, and by $c(t)$ we denote the color of $t$.
For a set $T$ of tile types, an {\it assembly} $\alpha$ over $T$ is a partial function from $\mathbb{N}^2$ to $T$.
(When algorithms and computer programs will be explained in Section~\ref{programmatic}, it is convenient for the tile types in $T$ to be indexed as $t_0, t_1, \ldots, t_{\ell{-}1}$ and consider the assembly rather as a partial function to $[\ell]$.)
Its pattern, denoted by $P(\alpha)$, is such that ${\rm dom}(P(\alpha)) = \dom\alpha$ and $P(\alpha)(x, y) = c(\alpha(x, y))$ for any $(x, y) \in \dom\alpha$.
Given another assembly $\beta$, we say $\alpha$ is a {\it subassembly} of $\beta$ if $\dom\alpha \subseteq \dom\beta$ and, for any $(x, y) \in \dom\alpha$, $\beta(x, y) = \alpha(x, y)$.

A {\it rectilinear tile assembly system} (RTAS) is a pair $\mathcal{T} = (T, \sigma_L)$ of a set $T$ of tile types and an L-shape seed $\sigma_L$, which is an assembly over another set of tile types disjoint from $T$ such that $\dom\sigma_L = \{(-1, -1)\} \cup ([w] \times \{-1\}) \cup (\{-1\} \times [h])$ for some $w, h \in \mathbb{N}$.
Its {\it size} is measured by the number of tile types employed, that is, $|T|$.
According to the following general rule all RTASs obey, it tiles the first quadrant delimited by the seed:

\vspace*{3mm}
\noindent
{\bf RTAS's tiling rule:}
To a given assembly $\alpha$, a tile $t \in T$ can attach at position $(x, y)$ if
\begin{enumerate}
\setlength{\itemsep}{0mm}
\item	$\alpha(x, y)$ is undefined,
\item	both $\alpha(x{-}1, y)$ and $\alpha(x, y{-}1)$ are defined,
\item	$t(\west) = \alpha(x{-}1, y)[\east]$ and $t(\south) = \alpha(x, y{-}1)[\north]$.
\end{enumerate}
The attachment results in a larger assembly $\beta$ whose domain is $\dom\alpha \cup \{(x, y)\}$ such that for any $(x', y') \in \dom\alpha$, $\beta(x', y') = \alpha(x, y)$, and $\beta(x, y) = t$.
When this attachment takes place in the RTAS $\mathcal{T}$, we write $\alpha \to_1^{\mathcal{T}} \beta$.
Informally speaking, the tile $t$ can attach to the assembly $\alpha$ at $(x, y)$ if on $\alpha$, both $(x{-}1, y)$ and $(x, y{-}1)$ are tiled while $(x, y)$ is not yet, and the west and south glues of $t$ match the east glue of the tile at $(x{-}1, y)$ and the north glue of the tile at $(x, y{-}1)$, respectively.
This implies that, at the outset, (0,~0) is the sole position where a tile may attach.

For those who are familiar with the Winfree's aTAM \cite{Winfree_PhDthesis}, it should be straightforward that an RTAS is a temperature-2 tile assembly system all of whose glues are of strength 1.

\begin{example}\label{ex:SA_counter}
	See Figure~\ref{fig:SA_counter} for an RTAS with 4 tile types, aiming at self-assembling the binary counter pattern.
	To its L-shape seed shown there, a black tile of type (1,~1,~0,~0,~black) can attach at (0, 0), while no tile of other types can due to glue mismatches.
	The attachment makes the two positions (0, 1) and (1, 0) attachable.
	Tiling in RTASs thus proceeds from south-west to north-east {\it rectilinearly} until no attachable position is left.
\end{example}

The set $\mathcal{A}[\mathcal{T}]$ of {\it producible} assemblies by $\mathcal{T}$ is defined recursively as follows: (1) $\sigma_L \in \mathcal{A}[\mathcal{T}]$, and (2) for $\alpha \in \mathcal{A}[\mathcal{T}]$, if $\alpha \to_1^{\mathcal{T}} \beta$, then $\beta \in \mathcal{A}[\mathcal{T}]$.
A producible assembly $\alpha \in \mathcal{A}[\mathcal{T}]$ is called {\it terminal} if there is no assembly $\beta$ such that $\alpha \to_1^{\mathcal{T}} \beta$.
The set of terminal assemblies is denoted by $\mathcal{A}_\Box[\mathcal{T}]$.
Note that the domain of any producible assembly is a subset of $(\{-1\} \cup [w]) \times (\{-1\} \cup [h])$, starting from the seed $\sigma_L$ whose domain is $\{(-1, -1)\} \cup ([w] \times \{-1\}) \cup (\{-1\} \times [h])$.

A tile set $T$ is {\em directed} if for any distinct tile types $t_1, t_2 \in T$, $t_1(\west) \neq t_2(\west)$ or $t_1(\south) \neq t_2(\south)$ holds.
An RTAS $\mathcal{T} = (T, \sigma_L)$ is {\it directed} if its tile set $T$ is directed (the directedness of RTAS was originally defined in a different but equivalent way).
It is clear from the RTAS tiling rule that if $\mathcal{T}$ is directed, then it has the exactly one terminal assembly, which we let $\gamma$.
Let $\gamma$ be the subassembly of the terminal assembly such that $\dom\gamma \subseteq \mathbb{N}^2$, that is, the tiles on $\gamma$ did not originate from the seed $\sigma_L$ but were tiled by the RTAS.
Then we say that {\it $\mathcal{T}$ uniquely self-assembles the pattern $P(\gamma)$}.

\subsection{The \pats problem}

The {\em pattern self-assembly tile set synthesis} ({\pats}), proposed by Ma and
Lombardi \cite{MaLombardi2008}, aims at computing the minimum size directed RTAS
that uniquely self-assembles a given rectangular pattern. The solution to
{\pats} is required to be directed here, but not originally. 
However, in \cite{GoosOrponenDNA16}, it was proved that among all the RTASs that uniquely self-assemble the pattern, the minimum one is directed. 
%This, however, does not change the problem, as being observed in \cite{GoosOrponenDNA16}.
%\pe{This is a very natural question. I wanted to ask that, can it change the tile complexity? More (informal) details!}

To study the algorithmic complexity of this problem on ``real size'' particle
placement problems, a first
restriction that can be placed is on the number of colors allowed for the input
patterns, thereby defining the $k$-{\pats} problem:

\problem{A $k$-colored pattern $P$}
	{a smallest directed RTAS that uniquely self-assembles $P$}
	{{\sc $k$-colored Pats} (\kpats)}

The {\bf NP}-hardness of {\kpats} follows from the \NP-hardness of its decision variant.

\problem{A $k$-colored pattern $P$ and an integer $m$}
	{``yes'' if $P$ can be uniquely self-assembled by an RTAS whose tileset contains at most $m$ tile types}
	{decision variant of \kpats}

In the rest of this paper, we use the terminology $k$-{\pats} to refer to the decision problem, unless otherwise noted.

	\subsection{Monotone Satisfiability Problem}

We now prove the \NP-completeness of a modified version of \SAT, which we will
use in our proof.  In the {\em monotone satisfiability with few true variables}
(\modSAT), we consider a number $k$ and a boolean formula $F$ in conjunctive
normal form \emph{without negations} and ask whether or not $F$ can be satisfied
by only allowing $k$ variables to be true.

Note that formulae with no negations are easily satisfied by setting all the
variables to $1$.

\begin{lemma}
\modSAT is \NP-complete.
\begin{proof}

First, for any formula $F$ and assignment $x$ of its variables, it can be
checked in polynomial time that $F(x)=1$. Therefore, \modSAT is in \NP.
\NP-hardness follows from an easy reduction from {\sc Vertex Cover}: let
$G=(V,E)$ be a graph, and for any $e=(v,v')\in E$, let $x_v$ and $x_{v'}$ be two
variables, and $C_e$ be the clause defined by $C_e=x_v\vee x_{v'}$.
Finding an assignment with a minimal number of true variables in
$\bigwedge_{e\in E} C_e$
is then equivalent to finding a minimal vertex cover of $G$.
\end{proof}
\end{lemma}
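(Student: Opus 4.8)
The plan is to establish the two halves of \NP-completeness separately. Membership in \NP is immediate: I would take a satisfying assignment as the certificate. Given a formula $F$ and an assignment $x \in \{0,1\}^n$, one evaluates $F(x)$ in time polynomial in the size of $F$, and simultaneously counts the number of variables set to $1$; the instance is accepted iff $F(x)=1$ and this count is at most $k$. Since both checks are polynomial, $\modSAT \in \NP$.

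For \NP-hardness, I would reduce from {\sc Vertex Cover}, which is well known to be \NP-complete. Given a graph $G=(V,E)$ and a budget $k$, I introduce one variable $x_v$ per vertex $v \in V$, and for each edge $e=\{u,v\} \in E$ I form the two-literal clause $C_e = x_u \vee x_v$. The output instance of \modSAT is the formula $F = \bigwedge_{e \in E} C_e$ together with the same budget $k$. This formula is in CNF and contains no negations, so it is a legal \modSAT instance, and the construction is clearly computable in polynomial time.

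The correctness of the reduction rests on a direct correspondence between assignments and vertex subsets. Writing $S = \{v : x_v = 1\}$ for the set of true variables under an assignment $x$, I would argue that the clause $C_e$ is satisfied exactly when at least one endpoint of $e$ lies in $S$; hence $F(x)=1$ iff $S$ is a vertex cover of $G$, while the number of true variables equals $|S|$. Consequently $F$ is satisfiable using at most $k$ true variables iff $G$ has a vertex cover of size at most $k$, which establishes both directions of the reduction at once.

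I do not expect a genuine obstacle here; the argument is routine. The only point deserving care is conceptual rather than technical: because the formula has no negations it is trivially satisfiable by setting every variable to $1$ (as already remarked), so the entire difficulty of the problem is concentrated in the true-variable budget $k$. It is precisely this budget that mirrors the size constraint of {\sc Vertex Cover}, and keeping the parameter $k$ identical on both sides of the reduction is what makes the equivalence exact.
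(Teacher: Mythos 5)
Your proposal is correct and follows essentially the same route as the paper: membership in \NP{} via the assignment-as-certificate, and \NP-hardness via the standard reduction from {\sc Vertex Cover} with one variable per vertex and one negation-free two-literal clause per edge, with the budget $k$ carried over unchanged. Your write-up is in fact somewhat more explicit than the paper's (which states the equivalence between minimal true-variable assignments and minimal vertex covers without spelling out the correspondence), but the content is identical.
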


\section{\texorpdfstring{{2-\pats} is \NP-hard}{2-PATS is NP-hard}}
\label{sec:2pats}

We will prove that \pats is \NP-hard for binary patterns ($2$-colored patterns).
First, we present a binary gadget pattern $G$ such that among all tilesets of
size at most $13$, exactly one self-assembles $G$. Let $T$ be this tileset.

Then, we present a reduction from \modSAT to the problem of deciding whether an
input pattern can be assembled with $T$.
The \NP-hardness of 2-{\pats} follows, since we ensure that the input pattern $P$ contains the gadget pattern $G$ as a subpattern, hence forcing the use of $T$ to assemble $P$.

The reduction is proved in Theorem \ref{thm:2pats_NPhard}, and the claim that it
is the only tileset of size at most 13 that can assemble $G$ is proved in Lemma
\ref{prop:gadget}, using a programmatic proof. $T$ is shown on
Figure~\ref{fig:tiles}; it contains eleven white tile types and two black tile
types.

\begin{figure}[ht]
\centering
\begin{tikzpicture}
	\Tub{0}{0}
	\Tug{1.25cm}{0cm}
	\Tur{2.5cm}{0cm}

	\Tsb{0cm}{1.25cm}	
	\Tsg{1.25cm}{1.25cm}	
	\Tsr{2.5cm}{1.25cm}
	\TSb{3.75cm}{1.25cm}
	\TSr{5cm}{1.25cm}

	\Tbb{0}{2.5cm}
	\Tbg{1.25cm}{2.5cm}
	\Tbr{2.5cm}{2.5cm}	
	\Trg{3.75cm}{2.5cm}	
	\Trb{5cm}{2.5cm}
\end{tikzpicture}
\caption{The tile set $T$, where the background depicts the color of each tile type and the labels and signals depict the glues.
We refer to the tile types with a gray background as the black tile types.}
\label{fig:tiles}
\end{figure}
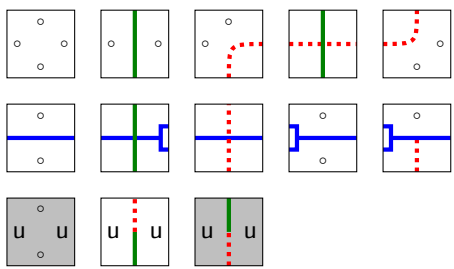

We interpret the glues in the tile set $T$ as follows.
Ten of the white tile types (first and second rows in Figure~\ref{fig:tiles}) simulate three types of signals which interact with each other:
\begin{compactenum}
	\item blue signals running from left to right,
	\item green signals running from bottom to top, and
	\item red signals running diagonally from bottom left to top right in a wavelike line.
\end{compactenum}
For better visibility in printouts the red signals are dotted; blue and green signals can easily be distinguished as blue signals run only horizontally while green signals run only vertically.
When any two of the signals meet, they simply cross over each other, where the red signal is displaced upwards or rightwards when crossing a blue or green signal, respectively.
Only when the blue signal crosses a green signal immediately before crossing a red signal, the red signal is destroyed.
In order to recognize this configuration, the blue signal is {\em tagged} when it crosses a green signal; in Figure~\ref{fig:tiles}, the tagging is displayed by the fork in the blue signal.
Let us stress that the signals are encoded in the glues of the tiles, they are not visible.
The other three tile types are used to uncover the green and red signals in two consecutive rows: two black tiles above each other stand for no signal, a white tile below a black tile stands for a green signal, a black tile below a white tile stands for a red signal, and blue signals are never uncovered.
Note that green (respectively, red) signals switch to red (respectively, green) in the first uncover row, but they switch back to their original state in the second uncover row.
An example assembly is shown in Figure~\ref{fig:example}.
% TODO: add the plain pattern of the example and explain

\begin{figure}[ht]
\centering
\begin{tikzpicture}
	\InitCoordinates
	\Pub\Pug\Pur\Pub\Pug\Pub\Pur\Pug\Pub\Pub
	\LineUp
	\Pub\Pur\Pug\Pub\Pur\Pub\Pug\Pur\Pub\Pub
	\LineUp
	\Pbb\Pbg\Pbr\Prb\Pbg\Pbb\Pbr\Prg\Prb\Pbb
	\LineUp
	\Psb\Psg\PSb\Psr\Psg\PSb\Psb\Psg\PSr\Psb
	\LineUp
	\Pbb\Pbg\Pbb\Pbr\Prg\Prb\Pbb\Pbg\Pbb\Pbb
	\LineUp
	\Pbb\Pbg\Pbb\Pbb\Pbg\Pbr\Prb\Pbg\Pbb\Pbb
	\LineUp
	\Pub\Pug\Pub\Pub\Pug\Pub\Pur\Pug\Pub\Pub
	\LineUp
	\Pub\Pur\Pub\Pub\Pur\Pub\Pug\Pur\Pub\Pub
\end{tikzpicture}
\caption{Example interactions of the signals in the tile set $T$ with uncovering of the configurations.}
\label{fig:example}
\end{figure}

Clearly, the green signals always progress upwards from one glue to the next and blue signals progress rightwards from one glue to the next.
Let us formalize the progression of the red signal under the assumption that it is not destroyed.

\begin{lemma}\label{lem:red:signal}
Let the south glue of a tile in position $(x,y)$ be a red signal.
If this red signal progresses up-/rightwards to the south glue of a tile in position $(x',y')$ while crossing $i$ green signals, and $j$ blue signals, and no uncovering rows, then
\[
	x'-x-i = y'-y-j.
\]
\end{lemma}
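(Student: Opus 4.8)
The plan is to follow the (undestroyed) red signal along its monotone staircase through the first quadrant and reduce the claimed identity to a conservation law. First I would record the available hypotheses: since there are no uncovering rows strictly between $(x,y)$ and $(x',y')$ and the red signal is never destroyed, it only ever passes through the white tile types of $T$ that carry red \emph{without} uncovering. By inspecting Figure~\ref{fig:tiles} I would check that there are exactly four such types: two \emph{crossing} tiles (red meeting a blue, where red runs straight south-to-north; and red meeting a green, where red runs straight west-to-east) and two \emph{turning} tiles (red entering from the south and leaving to the east; and red entering from the west and leaving to the north). The crucial structural facts extracted here are that red crosses a blue only while moving vertically and crosses a green only while moving horizontally, and that the two turning tiles carry \emph{no} crossing at all; thus crossings and direction changes are realized by disjoint tile types.

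Next I would decompose the path from the south glue at $(x,y)$ to the south glue at $(x',y')$ into unit \emph{up-moves} (red exits a tile to the north) and unit \emph{right-moves} (red exits to the east), so that $x'-x$ equals the number of right-moves and $y'-y$ equals the number of up-moves. By the tile inventory above, every right-move is either a green-crossing or a south-to-east turn, and every up-move is either a blue-crossing or a west-to-north turn. Writing $R_t$ for the number of south-to-east turns and $U_t$ for the number of west-to-north turns, this gives $x'-x = i + R_t$ and $y'-y = j + U_t$, so the lemma is equivalent to the single statement $R_t = U_t$.

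Finally I would prove $R_t = U_t$ by an alternation argument on the orientation of the signal. Red is in ``vertical mode'' exactly when it occupies the south glue of a tile and in ``horizontal mode'' exactly when it occupies the west glue; the only tiles that switch mode are the two turns, the south-to-east turn switching from vertical to horizontal and the west-to-north turn switching from horizontal to vertical, while all other red-carrying tiles preserve the mode. Since the signal both starts and ends in vertical mode (it sits on the south glues of $(x,y)$ and of $(x',y')$), the two kinds of switches occur equally often, i.e.\ $R_t = U_t$. Equivalently, and more compactly, one can maintain the invariant $\Phi = (a-b)-(g-h)$ evaluated whenever red occupies the south glue of a tile at position $(a,b)$, where $g,h$ count the green and blue signals crossed so far; a one-step case check (a blue-crossing, versus a south-to-east turn followed by a horizontal run of green-crossings ending in a west-to-north turn) shows $\Phi$ is unchanged between consecutive south-occupancies, and equating its values at $(x,y)$ and $(x',y')$ yields $x'-x-i = y'-y-j$ at once. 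The only genuine obstacle is the bookkeeping of the first step: one must enumerate precisely the red-carrying tiles permitted under the ``no uncovering, not destroyed'' hypothesis and confirm the clean separation of crossings from turns, after which both the parity argument and the invariant are immediate.
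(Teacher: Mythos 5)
Your proof is correct, but it is organized differently from the paper's. The paper accounts for the horizontal displacement row by row: in each of the $j$ rows where the red signal crosses a blue signal it does not move horizontally, while in each of the remaining $y'-y-j$ rows it moves right by exactly one position plus one per green signal crossed in that row; summing over rows gives $x'-x = i + (y'-y-j)$ in two lines. You instead account move by move: after inventorying the four red-carrying, non-uncovering, non-destroying tile types of Figure~\ref{fig:tiles}, you obtain $x'-x = i + R_t$ and $y'-y = j + U_t$ (crossings plus turns), and you then need the additional identity $R_t = U_t$, which you supply via the alternation of vertical and horizontal modes between the two endpoints (both of which are south glues). Both arguments express the same conservation law, and your tile classification and parity step check out against the tile set: red exits east only on the south-to-east turn tile and on the tile where it crosses a green, and exits north only on the west-to-north turn tile and on the tile where it crosses a blue. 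The paper's row grouping gets $R_t=U_t$ for free, since each blue-free row contains exactly one south-to-east turn paired with one west-to-north turn; your version pays for that with an explicit alternation lemma but in exchange makes explicit the tile-inventory facts (crossings and turns occur on disjoint tile types; blue is crossed only while moving vertically and green only while moving horizontally) that the paper's terser proof uses implicitly.
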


\begin{proof}
In every row where the red signal crosses a blue signal, the red signal remains at its horizontal position.
Thus, only in the $y'-y-j$ rows without blue signal the red signal moves rightwards.
In each of these rows, we move one positions rightwards plus one position for every green signal that is crossed on the total way.
We conclude that $x' = x + i + (y'-y-j)$.
\end{proof}

By using a computer-aided search through all tile assignments with 13 tile types of the pattern shown in Figure~\ref{fig:gadget}, we obtain the following lemma:

\begin{lemma}\label{prop:gadget}
%If the RSA signatures of all messages used when checking the proof were not counterfeit, then
The gadget pattern $G$, shown in Figure~\ref{fig:gadget}, can
only be self-assembled with 13 tile types if a tile set is used which is
isomorphic to $T$.  No smaller tile set can self-assemble $G$.
\begin{proof}
Details about the program that we used for the computer-aided search are
presented in Section~\ref{programmatic}.
%This Lemma itself is proven in Section \ref{subsect:programmatic-final}.
\end{proof}
\end{lemma}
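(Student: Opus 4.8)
The plan is to reduce the lemma to a finite, exhaustive case analysis and then argue that this analysis is both sound and complete. The key observation is that a directed RTAS tiles the first quadrant deterministically: once the tile set and the glues presented by the L-shape seed are fixed, the produced pattern is determined. Conversely, fixing the target pattern $G$ constrains, at each position, the color of the tile placed there and thereby couples the four glues surrounding that position. I would therefore recast the problem as a search over the glue labels assigned to the edges of the grid underlying $G$, subject to three constraints: the color of the tile at each position must match $G$; directedness must hold, meaning that whenever two positions carry the same west glue and the same south glue they must carry exactly the same tile type (hence the same north glue, east glue, and color); and the number of distinct non-seed tile types must be at most $13$.

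First I would scan the positions of $G$ in rectilinear order—left to right within each row, and bottom to top across rows—maintaining a partial assignment of glues to the already-visited edges together with the partial tile set it induces. At each newly visited position the color is dictated by $G$; the west glue is forced to equal the east glue of the tile already placed to its left (or, in the leftmost column, is a free glue supplied by the seed), and the south glue is forced analogously by the tile below (or by the seed in the bottom row). The only genuine degrees of freedom are the north and east glues of the new tile, so I would branch on these, in each branch either reusing a glue seen so far or introducing a fresh one. When the resulting $(\west,\south)$ pair already occurs elsewhere, directedness forces the entire tile—its remaining glues and its color—to agree, so no branching is needed there; otherwise a new tile type is created. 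A branch is pruned as soon as it contradicts the color prescribed by $G$, violates directedness, or pushes the running tile count above $13$.

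To make the tree finite and free of redundant work I would quotient by the only symmetry present: glue names are meaningful only up to equality, so I would canonicalize by introducing glues strictly in order of first appearance, collapsing every glue-renaming isomorphism to a single representative. Under this normalization the branching factor at each position is bounded by the number of glues already seen plus one, the depth equals the number of positions of $G$, and the search tree is finite. Enumerating it to exhaustion is exactly what proves the lemma: if the only surviving leaves are tile sets isomorphic to $T$, each of size exactly $13$ and none of size $12$ or fewer, then every tile set of at most $13$ types that self-assembles $G$ is isomorphic to $T$, and no smaller tile set succeeds.

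The main obstacle is not conceptual but the sheer size of this search tree: even with aggressive pruning the number of partial assemblies to inspect is enormous—on the order of $66\cdot 10^{12}$ in the actual run—so the real difficulty lies in engineering the search to be tractable (ordering the branching to prune early and distributing the independent subtrees across many cores) and, crucially, in proving that the pruning is \emph{sound}, never discarding a valid tile set, and that the canonicalization is \emph{complete}, never missing an isomorphism class. I would defer the detailed algorithm and its correctness proof to Section~\ref{programmatic}; the enabling design choice is that of $G$ itself, which must be large enough to force every tile of $T$ together with all of its signal interactions, yet small enough that the tree above can be exhausted.
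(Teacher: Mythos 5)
Your proposal is correct and takes essentially the same approach as the paper: the lemma is established by an exhaustive, isomorphism-quotiented computer search over all partial tile assignments of $G$ with at most $13$ tile types, pruned by color, directedness, and the tile-count bound, with the algorithm's correctness argued separately in Section~\ref{programmatic}. The only divergences are implementation details—the paper traverses positions in a compact alternating row/column order rather than row-by-row, and branches on which tile type to place (reusing, glue-merging, or creating one) rather than on the north/east glues directly—and the paper itself notes that correctness is independent of the placement strategy.
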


\begin{figure}[ht]
\begin{align*}
\SIu\ &\SQb\SQb\SQw\SQb\SQw\SQb\SQb\SQb\SQb\SQb\SQb\SQb\SQw\SQb\SQb\SQb\SQb\SQb\SQb\SQb\SQb\\[-.7em]
\SIu\ &\SQw\SQb\SQb\SQb\SQb\SQw\SQb\SQw\SQb\SQb\SQw\SQb\SQb\SQw\SQb\SQb\SQb\SQb\SQw\SQb\SQb\\[-.7em]
\SIb\, &\SQw\SQw\SQw\SQw\SQw\SQw\SQw\SQw\SQw\SQw\SQw\SQw\SQw\SQw\SQw\SQw\SQw\SQw\SQw\SQw\SQw\\[-.7em]
\SIu\  &\SQb\SQb\SQw\SQb\SQw\SQb\SQb\SQb\SQb\SQb\SQb\SQb\SQw\SQb\SQb\SQb\SQb\SQb\SQb\SQw\SQb\\[-.7em]
\SIu\ &\SQw\SQb\SQb\SQb\SQb\SQw\SQb\SQw\SQb\SQb\SQw\SQb\SQb\SQw\SQb\SQb\SQb\SQb\SQw\SQb\SQb\\[-.7em]
\bul\ &\SQw\SQw\SQw\SQw\SQw\SQw\SQw\SQw\SQw\SQw\SQw\SQw\SQw\SQw\SQw\SQw\SQw\SQw\SQw\SQw\SQw\\[-.7em]
\SIu\ &\SQb\SQw\SQb\SQw\SQb\SQb\SQb\SQb\SQb\SQb\SQb\SQw\SQb\SQb\SQb\SQb\SQb\SQw\SQb\SQb\SQb\\[-.7em]
\SIu\ &\SQw\SQb\SQb\SQb\SQb\SQw\SQb\SQw\SQb\SQb\SQw\SQb\SQb\SQw\SQb\SQb\SQb\SQb\SQw\SQb\SQb\\[-.7em]
\SIr\, &\SQw\SQw\SQw\SQw\SQw\SQw\SQw\SQw\SQw\SQw\SQw\SQw\SQw\SQw\SQw\SQw\SQw\SQw\SQw\SQw\SQw\\[-.7em]
\SIu\ &\SQb\SQb\SQw\SQb\SQb\SQb\SQb\SQb\SQb\SQw\SQb\SQb\SQb\SQb\SQb\SQb\SQw\SQb\SQb\SQb\SQb\\[-.7em]
\SIu\ &\SQw\SQb\SQb\SQb\SQb\SQw\SQb\SQw\SQb\SQb\SQw\SQb\SQb\SQw\SQb\SQb\SQb\SQb\SQw\SQb\SQb\\[-.7em]
\SIu\ &\SQb\SQb\SQw\SQb\SQb\SQb\SQb\SQb\SQb\SQw\SQb\SQb\SQb\SQb\SQb\SQb\SQw\SQb\SQb\SQb\SQb\\[-.7em]
\SIu\ &\SQw\SQb\SQb\SQb\SQb\SQw\SQb\SQw\SQb\SQb\SQw\SQb\SQb\SQw\SQb\SQb\SQb\SQb\SQw\SQb\SQb\\[-.7em]
\SIb\, &\SQw\SQw\SQw\SQw\SQw\SQw\SQw\SQw\SQw\SQw\SQw\SQw\SQw\SQw\SQw\SQw\SQw\SQw\SQw\SQw\SQw\\[-.7em]
\SIu\ &\SQb\SQb\SQw\SQb\SQb\SQb\SQb\SQb\SQb\SQw\SQb\SQb\SQb\SQb\SQb\SQb\SQw\SQb\SQb\SQb\SQb\\[-.7em]
\SIu\ &\SQw\SQb\SQb\SQb\SQb\SQw\SQb\SQw\SQb\SQb\SQw\SQb\SQb\SQw\SQb\SQb\SQb\SQb\SQw\SQb\SQb\\[-.7em]
\SIu\ &\SQb\SQb\SQw\SQb\SQb\SQb\SQb\SQb\SQb\SQw\SQb\SQb\SQb\SQb\SQb\SQb\SQw\SQb\SQb\SQb\SQb\\[-.7em]
\SIu\ &\SQw\SQb\SQb\SQb\SQb\SQw\SQb\SQw\SQb\SQb\SQw\SQb\SQb\SQw\SQb\SQb\SQb\SQb\SQw\SQb\SQb\\[-.7em]
\SIb\, &\SQw\SQw\SQw\SQw\SQw\SQw\SQw\SQw\SQw\SQw\SQw\SQw\SQw\SQw\SQw\SQw\SQw\SQw\SQw\SQw\SQw\\[-.7em]
\SIu\ &\SQb\SQb\SQw\SQb\SQb\SQb\SQw\SQb\SQb\SQw\SQb\SQb\SQb\SQb\SQb\SQb\SQw\SQb\SQb\SQb\SQb\\[-.7em]
\SIu\ &\SQw\SQb\SQb\SQb\SQb\SQw\SQb\SQw\SQb\SQb\SQw\SQb\SQb\SQw\SQb\SQb\SQb\SQb\SQw\SQb\SQb\\[-.7em]
\bul\ &\SQw\SQw\SQw\SQw\SQw\SQw\SQw\SQw\SQw\SQw\SQw\SQw\SQw\SQw\SQw\SQw\SQw\SQw\SQw\SQw\SQw\\[-.7em]
\SIu\ &\SQb\SQw\SQb\SQb\SQw\SQb\SQb\SQb\SQw\SQb\SQb\SQb\SQb\SQb\SQb\SQw\SQb\SQb\SQb\SQb\SQb\\[-.7em]
\SIu\  &\SQw\SQb\SQb\SQb\SQb\SQw\SQb\SQw\SQb\SQb\SQw\SQb\SQb\SQw\SQb\SQb\SQb\SQb\SQw\SQb\SQb\\[-.45em]
&\VGg\VGr\VGb\VGb\VGr\VGg\VGb\VGg\VGr\VGb\VGg\VGb\VGb\VGg\VGb\VGr\VGb\VGb\VGg\VGb\VGb
\end{align*}
\caption{The binary gadget pattern $G$: the pattern can only be self-assembled by 13 tile types if we use the tile set $T$ and the input glues as shown in the pattern.
The bottom row in the pattern was not actually included in the search.}
\label{fig:gadget}
\end{figure}

In order to prove the \NP-hardness of $2$-\pats we encode an instance of \modSAT into the problem of deciding whether or not an initial configuration of red and green signals can be transformed into a target configuration of red and green signals after a certain number of steps.
The clue is that the hidden positions of blue signals refer to variables which are true in a satisfying variable assignment of the \modSAT instance.

\begin{reptheorem}{thm:2pats_NPhard}
	$2$-\pats is \NP-hard.
\end{reptheorem}

\begin{proof}
Let $k\in \N$ and $F$ be a set of $m$ clauses which is an instance of \modSAT.
For convenience, we assume that $F$ is defined over the $n$ variables $V = \set{1,2,\ldots, n}$.
We design a pattern $P$ based on $k$ and $F$ such that $P$ can be self-assembled with no more than $13$ tile types if and only if $F$ is satisfiable with only $k$ positive variables.
The pattern $P$, schematically presented in Figure~\ref{fig:pattern:scheme}, contains two rows defining the initial configuration $c_0$ and two rows containing the target configuration $c_t$.
These two rows are separated by $k + n$ completely white rows.
The gadget pattern $G$ is attached in the top left corner of the pattern and is separated by 11 white rows from the target configuration.
The area right of the gadget pattern $G$ is composed of the progression of the red and green signals which appear in the target configuration and where we consider the 11 white rows to have no blue signal and the signals in the following 24 rows are as required by the gadget pattern.

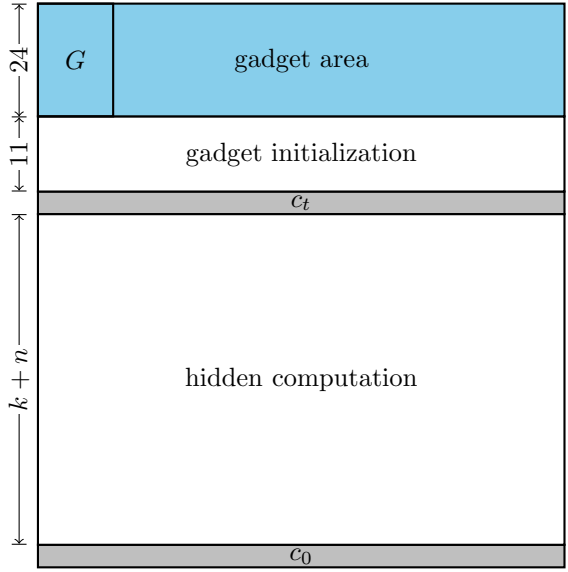
\begin{figure}[ht]
\centering
\begin{tikzpicture}

\draw [thick,fill=lightgray] (0,1) rectangle (7,1.3);
\draw [thick,fill=white] (0,1.3) rectangle (7,5.7);
\draw [thick,fill=lightgray] (0,5.7) rectangle (7,6);
\draw [thick,fill=white] (0,6) rectangle (7,7);
\draw [thick,fill=SkyBlue] (0,7) rectangle (7,8.5);
\draw [thick] (0,7) rectangle (1,8.5);

\node at (3.5,1.15) {$c_0$};
\node at (3.5,5.85) {$c_t$};
\node at (3.5,7.75) {gadget area};
\node at (3.5,6.5) {gadget initialization};
\node at (3.5,3.5) {hidden computation};
\node at (.5,7.75) {$G$};

\draw[|<->|] (-.25,1.3) --
	node [rotate=90,inner sep=1pt,fill=white] {$k+n$} (-.25,5.7);
\draw[|<->|] (-.25,6) --
	node [rotate=90,inner sep=1pt,fill=white] {$11$} (-.25,7);
\draw[<->|] (-.25,7) --
	node [rotate=90,inner sep=1pt,fill=white] {$24$} (-.25,8.5);

\end{tikzpicture}
\caption{The pattern $P$, consisting of $k+n+39$ rows.}
\label{fig:pattern:scheme}
\end{figure}

Since the target and initial configurations are represented by two rows of black or white pixels and cannot contain a pair of white pixels above each other, they can be interpreted as words over the three-letter alphabet $\set{\Db,\Dg,\Dr}$, where $\Db$ stands for no signal, $\Dg$ encodes a green signal, and $\Dr$ encodes a red signal.
The target and initial configuration rows are
\begin{align*}
	c_t &= w_{G} (\Db^{k+n}\Dr) (\Db\Db w_1) (\Db\Db w_2) \cdots (\Db\Db w_m) \\
	c_0 &= w_{G} (\Db^k\Dr\Db^n) (\Db\Dr w_1) (\Db\Dr w_2) \cdots (\Db\Dr w_m)
\end{align*}
where
\[
	w_{G} = \Dg\Db\Db\Db\Db\Dg\Db\Dg\Db\Db\Dg\Db\Db\Dg\Db\Db\Db\Db\Dg\Db\Db
\]
encodes the green signals which appear in the gadget pattern;
and for the $i$-th clause $C_i$ in $F$, the sequence $w_i$ contains $n$ pairs of black pixels $\Db$ and $\abs{C_i}$ pixel pairs $\Dg$, which represent green signals, such that for variable $x\in C$ we find a green signal in $w_i$ which is preceded by a total of $x-1$ black pixel pairs (and possibly some green signals) in $w_i$:
\[
	w_i = u \Dg v \quad \iff \quad \abs{u}_\Db + 1\in C
\]
(for a sequence $u$ and a symbol $s$, the integer $\abs{u}_s$ equals the number of occurrences of $s$ in $u$ ignoring all other symbols in $u$).
Note that the width of the pattern $P$ is in $\Oh(m\cdot n)$.
The pattern $P$ can be computed from $k$ and $F$ in polynomial time.
See Example~\ref{ex:formula:to:pattern} at the end of this section for the conversion of a short formula into a pattern.

Suppose that the pattern $P$ can be self-assembled with $13$ tile types.
As $P$ includes the subpattern $G$, we only have to focus on the question of whether or not $P$ can be self-assembled by the tile set $T$, shown in Figure~\ref{fig:tiles}; see Lemma~\ref{prop:gadget}.
Since we only have two black tile types in $T$ and these can only occur in an uncover row, it is clear that all tiles in $c_0$ and $c_t$ have the glue \unc on all of their east and west edges, and that they are the only tiles with that glue on their east and west edges.
Obviously, these three uncover tile types, uniquely define the assembly of the initial and target configuration.
Also, the white uncover tile type cannot be used in any position in the white area between the initial and target configuration; otherwise, there had to be a complete white row with only this tile type which implied that all rows above had to have red signals as south input, and some of these red signals ultimately had to appear in the target configuration.
Also the glue sequence on the south part of the seed is determined by $c_0$.

Observe that the following sequence of vertical glues can be used as part of the west seed which lies above the target configuration from bottom to top
\[
	{}\SIr\bul^4\SIr\bul\SIr\bul^2\SIr
	\SIu^2\bul\SIu^2\SIs\SIu^4\SIb\SIu^4\SIr\SIu^2\bul\SIu^2\SIs\SIu^2{}.
\]
The first 11 glues in this sequence together with the green signals in $w_G$ form the south input of the gadget pattern $G$, thusly, allowing the gadget pattern to self-assemble in the top left corner of $P$. (This is because the pattern grows from the bottom left to the top right, and red signals which are input into the south of the gadget pattern grow diagonally up and right, starting from this sequence, with some of them moved further to the right when crossing green signals.)

After we showed that the part of the pattern $P$ which lies above the two rows $c_t$ can be self-assembled, we will prove that the lower part of the pattern up to and including the two rows $c_t$ can be self-assembled if and only if $F$ is satisfiable.
Let the sequence $z$ denote the $k + n$ west glues in between the rows $c_0$ and $c_t$ from bottom to top.
Note that this is the only part of the pattern that is not fixed yet; it will carry the information of how to choose the variable assignment in order to satisfy $F$.
We are interested only in the blue signals which are hidden in $z$ and we do not care about any red signals which might be included in $z$; hence, when convenient we assume $z\in \set{\wild,\SIs}^*$ where $\wild$ represents no signal or a red signal.
Consider the sequence $v_0 = \Db^k\Dr \Db^n$ in $c_0$ which turns into $v_t = \Db^{k+n}\Dr$ in $c_t$.
Note that the red signal in $v_t$, encoded by $\Dr$, is is the only red signal in the entire final configuration $c_t$.
Since there are no green signals encoded in $v_0$ and by Lemma~\ref{lem:red:signal}, this red signal has to be initiated at most $k+n$ positions to the left in $c_0$.
Therefore, the origin of the red signal in $v_t$ is indeed the red signal encoded in $v_0$.
Furthermore, as this red signal moves by $n$ positions to the right from $c_0$ to $c_t$, we conclude that exactly $k$ of the white rows contain a hidden blue signal, by Lemma~\ref{lem:red:signal}.

Now, we will show that if $F$ is satisfiable, then the pattern $P$ can be self-assembled by $T$.
Let $\phi\colon V\to \set{0,1}$ be a satisfying variable assignment for $F$ with $k  = \abs{\sett{x\in V}{\phi(x) = 1}}$.
We let $z$ contain $n$ non-signal glues $\bul$ and for each variable $x$ with $\phi(x) = 1$ we add a blue signal in $z$ which is preceded by $x$ \bul glues and possibly other blue signals:
\[
	z = u \SIs v \quad \iff \quad \phi(\abs{u}_\bul) = 1.
\]
Note that $\abs z = n+k$ and that $z$ contains $k$ blue signals as $k$ variables have to be true in $\phi$.
Consider $\Db\Dr w_i$ in $c_0$ which represents the clause $C_i\in F$.
In $c_t$ this sequence turns into $\Db\Db w_i$ where all the green signals appear in the same positions, but the red signal does not appear anymore.
Note that if this red signal in $c_0$ were not destroyed, then it had show up as the last pixel of $w_i$ in $c_t$, by Lemma~\ref{lem:red:signal}.
As $\phi$ satisfies $F$ there is a variable $x\in C_i$ with $\phi(x) = 1$.
By definition of $w_i$, we have $w_i = u' \Dg v'$ with $\abs{u'}_\Db + 1 = x$ and we have $z = u \SIs v$ with $\abs{u}_\bul = x$.
Suppose that the red signal does not get destroyed before it passes the green signal representing $x$.
After passing this green signal, the red signal has moved $x$ columns without signals to the right (the last of these columns lies right of the green signal), therefore, the signal also traveled $x$ rows without signals upwards and it has to cross the blue signal which corresponds to $\phi(x) = 1$ in the next step, see Figure~\ref{fig:destroy:red}.
We see that the red signal is destroyed on the way from $c_0$ to $c_t$, and hence, this sequence in $c_0$ can successfully be transformed into the corresponding sequence in $c_t$.
As this argument holds for every clause and we already showed how the remainder of the pattern $P$ can be self-assembled, we conclude that if $F$ is satisfiable with $k$ true variables, then $P$ can be self-assembled with $13$ tile types.

\begin{figure}[ht]
\centering
\begin{tikzpicture}
	\InitCoordinates	
	\Pub\Pur\Pub\MoveRight\Pub\Pug\Pub
	\LineUp	
	\Pub\Pug\Pub\MoveRight\Pub\Pur\Pub
	\LineUp
	\Pbb\Pbr\Prb\MoveRight\Pbb\Pbg\Pbb
	\LineUp
	\LineUp
	\Pbb\Pbb\Pbb\MoveRight\Pbr\Prg\Prb
	\LineUp
	\Psb\Psb\Psb\MoveRight\Psb\Psg\PSr

	\node at (3,1) {$\cdots$};
	\node at (3,4.5) {$\cdots$};
	\node at (1,3) {$\vdots$};
	\node at (5,3) {$\vdots$};

\draw[|<->|] (1.5,-.75) --
	node [inner sep=1pt,fill=white] {$x$ plain columns} (6.5,-.75);
\draw[|<->|] (-.75,1.5) --
	node [rotate=90,inner sep=1pt,fill=white] {$x$ plain rows} (-.75,4.5);

\end{tikzpicture}
\caption{The red signal is destroyed by the signals representing the variable $x$ with $\phi(x) = 1$. By a plain row or column we mean a row or column without signal.}
\label{fig:destroy:red}
\end{figure}

Vice versa, suppose $P$ can be self-assembled by the tile set $T$ and an appropriately chosen seed.
We define a variable assignment $\phi$ based on the sequence of west glues $z$, as defined above.
We let $\phi(x) =1$ if and only if there is a blue signal in $z$ which is preceded  by $x$ glues $\wild$ (recall that, \wild stands for everything other than a blue signal):
\[
	z = u \SIs v \quad \iff \quad \phi(\abs{u}_\wild) = 1.
\]
As there are $k$ blue signals in $z$ we have that the number of positive variable assignments in $\phi$ is at most $k$.
Due to the monotone nature of the formula $F$ (it does not contain negations), if we prove that $F$ can be satisfied with at most $k$ positive variables in the assignment, then $F$ can also be satisfied with exactly $k$ positive variables in the assignment.
As we already observed above, the pixel sequence $\Db\Dr w_i$ in $c_0$ which represents the clause $C_i\in F$ is transformed into $\Db\Db w_i$ in $c_t$, and hence, the red signal has to be destroyed within the white area of the pattern.
Note that a blue signal in the first position of $z$ will not destroy the red signal as the red signal does not have a green signal as left neighbor.
Thus, the red signal is destroyed when it hits one signal which stands for a positively assigned variable $x$ in $\phi$.
The green signal which is also involved in the destruction of the red signal is separated by $x-1$ black pixel pairs $\Db$ from the red signal in $\Db\Dr w_i$; see Figure~\ref{fig:destroy:red} again.
The existence of this green signal implies that $x\in C_i$ and the clause is satisfied by $\phi$.
We conclude that $\phi$ satisfies every clause in $F$, and hence, $F$ itself.
\end{proof}

\begin{example}\label{ex:formula:to:pattern}
Figure~\ref{fig:formula:to:pattern} shows a tile assignment for the pattern which is generated from the formula $(x\lor y)\land (y\lor z)$ with $k = 1$.
The part of the pattern which is needed for the gadget attachment is ignored in this example.
\begin{figure}[ht]
\resizebox{\linewidth}{!}{
\begin{tikzpicture}
	\InitCoordinates
	\Pub\Pur\Pub\Pub\Pub\Pub\Pur\Pug\Pub\Pug\Pub\Pub\Pub\Pur\Pub\Pug\Pub\Pug\Pub
	\LineUp
	\Pub\Pug\Pub\Pub\Pub\Pub\Pug\Pur\Pub\Pur\Pub\Pub\Pub\Pug\Pub\Pur\Pub\Pur\Pub
	\LineUp
	\Pbb\Pbr\Prb\Pbb\Pbb\Pbb\Pbr\Prg\Prb\Pbg\Pbb\Pbb\Pbb\Pbr\Prb\Pbg\Pbb\Pbg\Pbb
	\LineUp
	\Pbb\Pbb\Pbr\Prb\Pbb\Pbb\Pbb\Pbg\Pbr\Prg\Prb\Pbb\Pbb\Pbb\Pbr\Prg\Prb\Pbg\Pbb
	\LineUp
	\Psb\Psb\Psb\Psr\Psb\Psb\Psb\Psg\PSb\Psg\PSr\Psb\Psb\Psb\Psb\Psg\PSr\Psg\PSb
	\LineUp
	\Pbb\Pbb\Pbb\Pbr\Prb\Pbb\Pbb\Pbg\Pbb\Pbg\Pbb\Pbb\Pbb\Pbb\Pbb\Pbg\Pbb\Pbg\Pbb
	\LineUp
	\Pub\Pub\Pub\Pub\Pur\Pub\Pub\Pug\Pub\Pug\Pub\Pub\Pub\Pub\Pub\Pug\Pub\Pug\Pub
	\LineUp
	\Pub\Pub\Pub\Pub\Pug\Pub\Pub\Pur\Pub\Pur\Pub\Pub\Pub\Pub\Pub\Pur\Pub\Pur\Pub
	
	\draw [decorate,decoration={brace,mirror}] (-.45,-.6) -- node [below=2pt] {blue signal counter (here, $k=1$)} (4.45,-.6);
	\draw [decorate,decoration={brace,mirror}] (4.55,-.6) -- node [below=2pt] {$(x\lor y)$} (11.45,-.6);
	\draw [decorate,decoration={brace,mirror}] (11.55,-.6) -- node [below=2pt] {$(y\lor z)$} (18.45,-.6);
\end{tikzpicture}
}
\caption{Pattern for the formula $(x\lor y)\land (y\lor z)$ with $k = 1$.
The position of the blue signal represents the satisfying variable assignment $\phi(y) = 1$.}
\label{fig:formula:to:pattern}
\end{figure}
\end{example}

\section{Programmatic search for the minimal tile set}
\label{programmatic}

We present our algorithm for finding all tile sets of a given size $\ell$ which self-assemble a given $k$-colored pattern on a rectangle (in our case $\ell = 13$ and $k = 2$).
In Section~\ref{subsect:assemblies}, we show that it is sufficient to generate all {\em valid tile assemblies} for the given pattern, which use at most $\ell$ tile types, rather than generating all tile sets with all possible $L$-shaped seeds.
We present our algorithm which generates these valid tile assemblies and the corresponding tile sets in Section~\ref{subsect:algorithm} and discuss several methods which we implemented to speed up the algorithm.
Lastly, we discuss the parallel implementation and the performance of the algorithm in the two programming languages C++ (Section~\ref{subsect:cpp}) and Haskell (Section~\ref{subsect:haskell}).

\subsection{RTASs defined by assemblies}
\label{subsect:assemblies}

Consider the $k$-colored pattern $P\colon [m]\times[n] \to [k]$.
Recall that for a tile set $T = \set{t_0,\ldots,t_{\ell-1}}$ a terminal assembly (without the seed structure) of $P$ is a mapping $\alpha\colon [m]\times [n] \to [\ell]$ such that
\begin{enumerate}[\ (a)\ ]
\item\label{color:match}
if $\alpha(x,y) = \alpha(x',y')$, then $P(x,y) = P(x',y')$ for all $x,x'\in [m]$ and $y,y'\in[n]$,
\item\label{vertical:match}
$t_{\alpha(x,y)}(\south) = t_{\alpha(x,y-1)}(\north)$ for all $x\in[m]$ and $y\in\set{1,\ldots,n-1}$, and
\item\label{horizontal:match}
$t_{\alpha(x,y)}(\west) = t_{\alpha(x-1,y)}(\east)$ for all $x\in\set{1,\ldots,m-1}$ and $y\in[n]$.
\end{enumerate}
Condition~(\ref{color:match}) implies that every tile type can only have one color and conditions~(\ref{vertical:match}) and~(\ref{horizontal:match}) ensure that there are no vertical or horizontal glue mismatches in the assembly.
An (partial) assembly is a partial mapping $\alpha\colon [m]\times [n] \to_p [\ell]$ which satisfies the three conditions for all positions which are defined in $\alpha$.
Every RTAS $\cT = (T,\sigma_L)$ that self-assembles $P$ yields a terminal assembly $\alpha\colon [m]\times [n] \to [\abs{T}]$ by enumerating the tiles in $T$ where the seed is implicitly constituted by the south glues of the tiles $\alpha(x,0)$ for $x\in[m]$ and the west glues of the tiles $\alpha(0,y)$ for $y\in[n]$.

Conversely, every mapping $\alpha\colon [m]\times[n] \to [\ell]$ which satisfies condition~(\ref{color:match}) yields a (not necessarily directed) tile set $T_\alpha = \set{t_0,\ldots,t_{\ell-1}}$ where condition~(\ref{vertical:match}) imposes equivalence classes on the vertical glues and condition~(\ref{horizontal:match}) imposes equivalence classes on the horizontal glues (e.g.\ the glue $t_{\alpha(0,0)}(\east)$ belongs to the same equivalence class as the glue $t_{\alpha(1,0)}(\west)$).
For each of these equivalence classes we reserve one unique glue label in $T_\alpha$; in particular, no vertical glue gets the same label as a horizontal glue.
Thus, $\alpha$ is a terminal assembly of $P$ for $T_\alpha$.
Next, we show that if $\alpha$ is a terminal assembly of $P$ for a tile set $T$, then $T$ is a morphic image of $T_\alpha$; that is, there exists a bijection of tile types $h\colon T_\alpha\to T$, and a morphisms $g$ from the glues of $T_\alpha$ to the glues of $T$ such that for all $t\in T_A$ and $d\in\set{\north,\east,\south,\west}$ we have $c(t) = c(h(t))$ and $g(t(d)) = h(t(d))$.
Let $T = \set{t_0,\ldots,t_{\ell-1}}$ and $T_\alpha = \set{s_0,\ldots,s_{\ell-1}}$ be the chosen tile enumerations with respect to the assembly $\alpha$, then the bijection $h$ is chosen such that $h(s_i) = t_i$ for all $i\in[\ell]$.
Since both, $T$ and $T_\alpha$, have to satisfy (\ref{color:match}), we obtain that $c(s_i) = c(h(s_i)) = c(t_i)$ as desired.
Furthermore, $T_\alpha$ was defined such that it satisfies the minimal requirements for $\alpha$ to be an assignment according to conditions~(\ref{vertical:match}) and~(\ref{horizontal:match}).
Because $T$ must also satisfy these two conditions, it is clear that the morphism $g$ can be defined.

Note that the fact that $T$ is a morphic image of $T_\alpha$ implies that if $T$ is a directed tile set, then $T_\alpha$ is directed as well (though, the converse does not necessarily hold).
Henceforth, we call an assembly $\alpha$ {\em valid} if it is terminal and its corresponding tile set $T_\alpha$ is directed.
The algorithm that we present next lists all valid assemblies of $P$ together with their corresponding directed tile sets with at most $\ell$ tile types.
Therefore, up to morphic images of these solution tile sets, it lists all directed tile sets which can self-assemble $P$.
Also note that if a directed tile set $S$ is a morphic image of our tile set $T$ shown in Fig.~\ref{fig:tiles}, then $T$ and $S$ are isomorphic.
This can easily be verified as every tile set which is obtained by combining any two horizontal glues or any two vertical glues in $T$ is an undirected tile set.

\subsection{The algorithm}
\label{subsect:algorithm}

Instead of fully generating every terminal assembly $\alpha$ of the pattern $P$ and then checking whether or not the corresponding tile set $T_\alpha$ is directed, we generate partial assemblies tile by tile while adapting a generic tile set in each step such that it satisfies conditions~(\ref{color:match}) through~(\ref{horizontal:match}) from Section~\ref{subsect:assemblies}.
If a tile set $T_\alpha$ which corresponds to an assembly $\alpha$ is not directed, then we do not have to place any further tiles into this assembly because any larger assembly $\beta$ which contains $\alpha$ as subassembly has a corresponding undirected tile set $T_\beta$ and, hence, $\alpha$ cannot be completed to become a valid assembly.
This procedure can be illustrated in a tree spanning the search space where every node is a partial assembly with corresponding tile set.
Its root is the empty assembly (no tiles are placed) whose corresponding tile set consists of $\ell = 13$ tile types with every glue of every tile type unique and all tiles un-colored.
Leaves in this tree are either solutions, valid assemblies of $P$ with a corresponding directed tile set, or breakpoints, nodes whose tile sets are not directed.

The tiles are placed according to a tile placing strategy; that is, each position in $\alpha$ has a successor position where the next tile is placed.
The correctness of the algorithm does not depend on the tile placing strategy, however, the performance of the algorithm highly depends on this strategy.
Our strategy is to keep the area that is covered by tiles as compact as possible.
Performance tests on small patterns confirmed that the average depth of paths in the tree spanning the search space is smaller when using our strategy as compared to the naive row-by-row or column-by-column approaches.
The ordering of positions is illustrated in Fig.~\ref{fig:order}, and is intuitively
defined by \emph{``the alternative addition of a row and a column''}, starting as shown in the figure.
Formally, this amounts to defining a sequence of coordinates
%$(P_n)_{n\in\mathbb{N}}$, by the following recursion:
$(x_i,y_i)_{n\in\mathbb{N}}$ inductively by $(x_0,y_0) = (0,0)$ and
\[
	(x_{i+1},y_{i+1}) =
	\begin{cases}
		(0,y_i +1) & \text{if } x_i = y_i, \\
		(x_i +1,0) & \text{if } x_i = y_i -1, \\
		(x_i,y_i +1) & \text{if } x_i > y_i, \\
		(x_i +1,y_i) & \text{otherwise}.
	\end{cases}
\]
The cases in the formula can be interpreted as follows, from top to bottom:
start a new row, start a new column, add one tile to an existing column, add one tile to an existing row.
This simplified formula suggests that the pattern has to be a square, but it can also be interpreted as ordering on all positions in the rectangular pattern $P$ by simply skipping positions which lie outside of $P$.

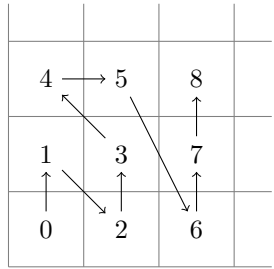
\begin{figure}[ht]
  \begin{center}
    \begin{tikzpicture}

      \foreach \x in {0,...,3}
        {\draw[gray](\x,0)--(\x,3.5);};
      \foreach \y in {0,...,3}
        {\draw[gray](0,\y)--(3.5,\y);};
      \draw(0.5,0.5)node(a){$0$}
      (0.5,1.5)node(b){$1$}
      (1.5,0.5)node(c){$2$}
      (1.5,1.5)node(d){$3$}
      (0.5,2.5)node(e){$4$}
      (1.5,2.5)node(f){$5$}
      (2.5,0.5)node(g){$6$}
      (2.5,1.5)node(h){$7$}
      (2.5,2.5)node(i){$8$};
      \draw[->](a)--(b);
      \draw[->](b)--(c);
      \draw[->](c)--(d);
      \draw[->](d)--(e);
      \draw[->](e)--(f);
      \draw[->](f)--(g);
      \draw[->](g)--(h);
      \draw[->](h)--(i);
    \end{tikzpicture}
  \end{center}
  \caption{Order on positions}
  \label{fig:order}
\end{figure}

Let $\alpha$ be a partial assembly in which exactly the first $i \in [n\cdot m-1]$ positions, according to the tile placing strategy, are covered with tiles and let $T_\alpha$ be the corresponding tile set which we assume to be directed.
Therefore, $(\alpha,T_\alpha)$ can be viewed as a node in the tree spanning the search space which is not a leave.
We try out all possible tile types in the empty position $\alpha(x_{i+1},y_{i+1})$ as follows:
\begin{enumerate}
\item
If there is a tile type $t$ in the current tile set $T$ which fits (i.e.\ its glues match those adjacent to the position and its color matches $c(t) = P(x_{i+1},y_{i+1})$), a tile of that type is placed in $\alpha(x_{i+1},x_{i+1})$.
Note that due to the ordering of tile placements, the adjacent glues (if any) will always be to the west and/or south, ensuring that those are the input sides.
If the location is on the bottom (left) edge of the pattern, there will not be an input glue on the south (west).\label{already}
\item
Else if, for each tile type $t$ which has already been placed somewhere in the assembly and which has color $c(t) = P(x_{i+1},y_{i+1})$, the west and south glues of $t$ can be changed to match those adjacent to the current position, wherever they occur throughout the current tiles of the assembly (modifying additional tile types as necessary).
If the tile set remains directed (i.e.\ no tile types have the same input glues), then the glue changes are made and $t$ is placed in the current location.\label{merge}
\item Else, if the number of tile types which are used in the partial assembly are less than $\ell = 13$, change the glues of one unused type so that it matches those adjacent to the current location, assign the color $c(t) \gets P(x_{i+1},y_{i+1})$, and place a tile of that type in position $\alpha(x_{i+1},y_{i+1})$.\label{newtile}
\end{enumerate}

Note that this procedure is optimized such that it will not generate two assemblies which are permutations of each other because we do not try several {\em unused tile types in the same position}.
The tree spanning the search space which is defined through this procedure is recursively traversed in a depth-first manner.

If this procedure finds a valid assignment $\alpha$ of $P$ with a corresponding directed tile set $T_\alpha$, then we output $(\alpha,T_\alpha)$ as solution.
As discussed in Section~\ref{subsect:assemblies}, this algorithm will output all directed tile sets which can self-assemble $P$ up to morphic images.
Both, the Haskell and C++ version of our program, are parallelized implementations of the algorithm described here.

\subsection{Implementation in C++}
\label{subsect:cpp}

The C++ code uses MPI \footnote{The implementation is Open MPI: \url{http://www.open-mpi.org}} as its communication protocol and a simple strategy for sharing the work among the cores.
The master process generates a list containing all partial assemblies in which exactly 14 positions are covered by tiles and whose corresponding tile sets are directed.
This list contains 271,835 partial assemblies, or {\em jobs}, in the case of our gadget pattern $G$ from Section~\ref{sec:2pats}.
The master sends out one of these jobs to each of the client processes.
Afterwards, the master process only gathers the results of jobs that were finished by clients and assigns new jobs to clients on request.
When the list is empty the master sends a kill signal to each client process that requests a new job.

A client process which got assigned the partial assembly $\alpha$ generates all valid assemblies of $P$ which contain $\alpha$ as a subassembly with corresponding directed tile set, using the algorithm described in Section~\ref{subsect:algorithm}.
When one job is finished, possible solutions are transmitted to the master and a new job is requested by the client.
In this implementation we did not address the computational bottleneck that emerges when client processes finish the last jobs and then have to idle until the last client process is finished.
There is no concept of sharing a job after it has been assigned to a client.

The C++ implementation of our algorithm\footnote{Freely available for download here: \url{http://self-assembly.net/wiki/index.php?title=2PATS-tileset-search}} was run on the cluster {\tt saw.sharcnet.ca} of \mbox{Sharcnet}\footnote{\url{https://www.sharcnet.ca/my/systems/show/41}}.
The cluster allowed us to utilize the processing power of 256 cores of Intel Xeon 2.83GHz (out of the total 2712 cores) for our computation.
In order to minimize the chances of the already unlikely event that undetected network errors influenced the outcome of the computation, our program was run twice on this system, with both runs yielding the same result, namely that the tile set $T$ from Fig.~\ref{fig:tiles} is the only tile set (up to isomorphism) with 13 or less tile types capable of generating the gadget pattern $G$, thus proving Lemma~\ref{prop:gadget}.
Each of the computations finished after almost 35 hours using a total CPU time of approximately 342 days.
Note that this implies a combined CPU idle time of about 30 days for the clients which we assume to be chiefly caused by the computational bottleneck at the end of the computation.
During one computation all the cores together generated over $66\cdot 10^{12}$ partial tile assemblies.

\subsection{Implementation in Haskell}
\label{subsect:haskell}

The kind of intensive proofs our approach uses has traditionally been proven
``rigorously'', with consensual proofs, several years after their first
publication. This means that some of the latest proofs rely on the simplicity of
their implementation to make them checkable. Moreover, proof assistants like Coq
are not yet able to provide a fast enough alternative, to verify really large
proofs in a reasonable amount of time.

Things are beginning to change, however, and the gap between rigorous and
algorithmic proofs is being progressively bridged.  The ultimate goal of this
research direction is to get rigorous proofs as the first proof of a theorem,
even in the case of explorations run on large parallel computers.

In order to reach this goal for Lemma \ref{prop:gadget}, we wrote a library to
be used for additional algorithmic proofs whose size requires a parallel
implementation. It also allows to work with different computing platforms,
including grids, clusters and desktop computers.

This library, called \emph{Parry}, is available at
\url{http://parry.lif.univ-mrs.fr}.
Its version 0.1, with SHA1 sum {\tt d572cbb7189d1d8232982913c70b3cfcb72ec44f},
is rigorously proven in the appendix of this paper.

Remarkably enough for a parallel program, its proof makes no hypotheses on the
network, and only relies on the equivalence of Haskell semantics and its compiled
assembly version, as well as on the security of cryptographic primitives.

Unfortunately, this implementation is relatively slower than the C++ one
described in section \ref{subsect:cpp}, and is still running at the time of
submitting this paper.

\section{Acknowledgements}

We would like to thank Manuel Bertrand for his infinite patience and helpful
assistance with setting up the server and helping debug our network and system
problems. We also thank C\'ecile Barbier, Eric Fede and Kai Poutrain for their
assistance with software setup.

%-----------------------------------------------------------------
	\bibliographystyle{abbrv}
	\bibliography{2pats}
%-----------------------------------------------------------------

\pagebreak

\appendix

We now discuss the Haskell implementation of the proof of Lemma \ref{prop:gadget},
and give a proof of this implementation.

\section{Global overview of the architecture}
\label{subsec:architecture}

Our system is composed of two main components, a ``\texttt{server}'' and a ``\texttt{client}''.  The \texttt{server} orchestrates the work done by a collection of \texttt{clients} by assigning \emph{jobs} (where a job is a current tile set and partial assembly) to each, monitoring their progress, and recording all discovered solutions.  The \texttt{clients} are assigned jobs by the server and perform the actual testing of all possible tile sets within the fixed size bound (i.e. $13$ tile types) to see if they can self-assemble the input pattern.  To prove the correctness of the system, we will individually prove the correctness of the \texttt{server} and \texttt{client}.  The main result to be proven for the system is the following:

\begin{lemma}
The \texttt{server} completes its search if and only if all tilesets of size
$\leq 13$ (up to isomorphism) which can self-assemble the input pattern have
been discovered.
\end{lemma}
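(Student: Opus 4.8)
The plan is to reduce the correctness of the whole distributed system to two separable facts: a combinatorial exhaustiveness fact about how the \texttt{server} splits the search, and an operational fact about how an unreliable network is tolerated. Recall from Section~\ref{subsect:assemblies} and Section~\ref{subsect:algorithm} that the depth-first procedure, started from the empty assembly, enumerates exactly the valid assemblies of the input pattern together with their directed tile sets, i.e.\ all directed tile sets of size at most $13$ that self-assemble $P$, up to morphic image (hence, by the remark in Section~\ref{subsect:assemblies}, up to isomorphism). A \emph{job} is a node of that search tree: a partial assembly $\alpha$ together with its directed tile set, and running a \texttt{client} on a job is exactly running the same procedure on the subtree rooted at $\alpha$. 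I would therefore first record, as \emph{client correctness}, that a \texttt{client} assigned job $\alpha$ returns precisely those solutions whose terminal assembly extends $\alpha$; this is immediate from the already-established correctness of the enumeration algorithm.

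Next I would prove that the set $J$ of jobs generated at start-up (the partial assemblies covering a fixed prefix of positions in the tile-placing order) is \emph{exhaustive and disjoint at the assembly level}: every leaf of the full search tree lies in the subtree of exactly one $\alpha \in J$. Disjointness holds because distinct jobs fix different tiles on a common initial segment of positions; exhaustiveness holds because the tile-placing order is a fixed total order, so every solution assembly restricted to the job prefix equals some $\alpha \in J$, unless that prefix was already pruned as a breakpoint (non-directed tile set), in which case no solution extends it and it is correctly absent from $J$. Combining this with client correctness, the union over the confirmed jobs of the returned solutions grows monotonically to the full solution set as jobs are confirmed. Here care is needed because several distinct assemblies, and hence several jobs, may realize the same tile set up to isomorphism; thus ``discovered'' must be tracked at the assembly level and deduplicated to isomorphism classes, and the accounting of completion must be by \emph{jobs}, not by distinct solutions, since a job returning no new class still must be confirmed.

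With this in hand I would state the key invariant maintained by the server loop across an arbitrary interleaving of possibly lost, duplicated, or reordered messages: at every instant, the set of (assembly-level) solutions recorded by the \texttt{server} equals the union, over all jobs already \emph{confirmed}, of the solutions for those jobs, and a job is marked confirmed only upon receipt of a cryptographically authenticated result for that exact job. Soundness of confirmation---that a forged or stale message can never cause a job to be confirmed or a spurious solution to be recorded---is where the assumption on the security of the cryptographic primitives enters, and it guarantees that confirmed jobs really have been fully explored. Given the invariant, the biconditional follows: the \texttt{server} completes exactly when its queue is empty and every $\alpha \in J$ is confirmed, which by exhaustiveness and client correctness is equivalent to the whole search space having been searched, and hence to the recorded solution set equalling the full set of valid tile sets (up to isomorphism), which is the claimed equivalence once ``all tilesets discovered'' is read as ``every job confirmed''.

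The main obstacle will be making the invariant robust against the adversarial network, which is precisely the part that the purely combinatorial argument behind Lemma~\ref{prop:gadget} does not address. Since reassignment of unacknowledged jobs may produce duplicate or out-of-order results for the same job, the delicate step is to show these are \emph{idempotent}: a duplicated authenticated result records the same solutions and confirms the same job at most once, so that fault tolerance never corrupts the invariant, and no lost or replayed message can make the \texttt{server} either confirm an unexplored job or fail to incorporate an explored one. Because the implementation is proven with no hypotheses on the network, the lemma is ultimately a \emph{safety} statement---completion coincides with complete discovery whenever it occurs---and the entire weight of the proof rests on this authenticated, idempotent confirmation discipline rather than on any liveness assumption about message delivery.
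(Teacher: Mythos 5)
Your layered decomposition---worker correctness, then an exhaustiveness argument on jobs, then a server-state invariant preserved under an adversarial network, concluding that completion coincides with full discovery---is the same skeleton the paper uses (Lemmas \ref{lem:placeTile} and \ref{lem:dowork} for the worker and client, Lemmas \ref{lem:answer} through \ref{lem:server} for the server). The genuine difference is in which invariant carries the weight. You maintain the \emph{discovered} side: the recorded solutions are exactly the union of the solutions of the confirmed jobs, which forces you into the deduplication and idempotent-confirmation bookkeeping you describe. The paper instead maintains the \emph{remaining} side: at every instant the union of the \texttt{jobs} and \texttt{ongoing} fields of the server state contains job representations of every task not yet completely explored. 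This makes the forward implication immediate---the server only replies \texttt{Finished} when both sets are empty, so by the invariant nothing remains unexplored---and it makes fault tolerance nearly free: a timed-out job is simply moved back from \texttt{ongoing} to \texttt{jobs} (Lemma \ref{lem:cleanupThread}), and a stale or duplicated message is rejected with \texttt{Die} because the sender's id or reported \texttt{currentJob} no longer matches its \texttt{ongoing} entry, so no separate idempotency argument is needed. A second difference: you frame the job set as a fixed initial partition with pairwise-disjoint subtrees, whereas the protocol relies on dynamic resharing via \texttt{NewJobs}; the paper therefore works with a recursive task/subtask relation, $f(t)=r\cup\bigcup_{i} f(t_i)$ (Definition \ref{def:tasks}), and a \emph{validity} condition on clients guaranteeing that each reshare message accounts for all unexplored subjobs and all results found so far---disjointness is never needed, only coverage. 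Your reading of the biconditional as a safety statement, with ``all tilesets discovered'' interpreted as ``every job confirmed,'' matches the paper exactly: it explicitly declines to prove termination on all runs, observing that halting on at least one run suffices for Lemma \ref{prop:gadget}.
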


The task of the \texttt{server} is to assign and keep track of all jobs which are being explored by the \texttt{clients}. Each \texttt{client} connects to it to ask for a job assignment. The \texttt{server} then replies with an assignment and keeps track of that job in case the \texttt{client} crashes, in which case the \texttt{server} will be able to detect that (in a way to be discussed) and reassign the job to another \texttt{client}. Along with that job, the \texttt{server} sends a boolean indicating whether it expects the job to be re-shared.

The \texttt{clients'} messages to the \texttt{server} can be of three kinds:
``get job'' messages, new jobs (in our case, new tilesets and new partial
assemblies to be explored), or a ``job done'' message, to tell the
\texttt{server} that the job has been completed.

Formally, we can represent all possible states of the \texttt{clients} with the
graph of Figure \ref{fig:clientloop}. In this figure, nodes drawn in solid lines
represent states in which the client sends a message to the \texttt{server},
and edges represent state transitions of the \texttt{client} based on messages
received from the \texttt{server}. Red edges are followed when the
\texttt{server} sends an unexpected message (which is normally a ``re-share''
message, or one caused by the detection of an attacker), or when thread $T_2$
enters the ``Die'' red state.  Dashed nodes and edges are ``silent''
\texttt{client} states and transitions, where no messages are sent or received.

Finally, the ``heartbeat'' state causes the creation of thread $T_2$,
and the ``Die'' and ``Stop'' states cause that thread to terminate.
The green transition on that thread (from the ``Alive'' state to the ``Stop'' state)
is triggered whenever thread $T_1$ enters the ``Get job'' state.
The heartbeat thread otherwise exists in the ``Alive'' state.
Both in the case of the green and red states, synchronization mechanisms ensure
that the corresponding colored edges are followed in the other thread before
the node is exited.

\begin{figure}[ht]
\begin{center}\includegraphics[scale=0.5]{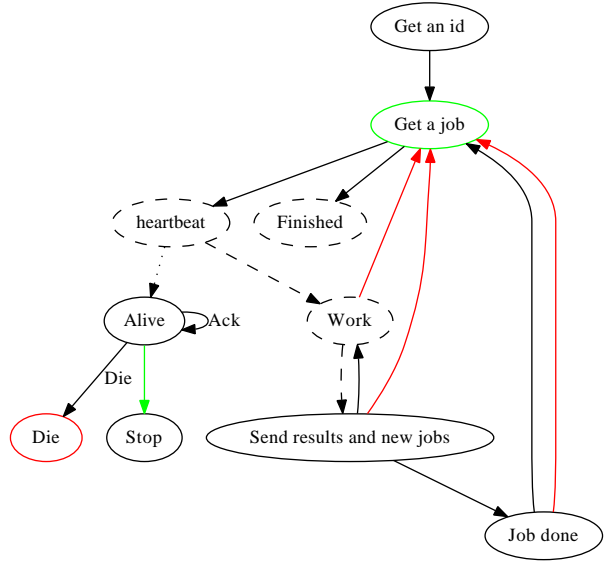}
\end{center}
\caption{State graph of the protocol, from the client's perspective. Clients
  send messages on each node drawn in full line, and receive messages on each
  edge drawn in full line. Red and green nodes trigger the red and green edges
  in the other thread, respectively. The dotted edge creates a new thread. }
\label{fig:clientloop}
\end{figure}

\section{The implementation}
\label{implementation}
Our strategy to prove the whole system is the following:

\begin{itemize}
\item First prove, in Section \ref{subsect:server}, an invariant on the server's
  state, conditioned on hypotheses called \emph{validity} and \emph{fluency} on
  the clients.
\item Then prove, in Section \ref{subsect:client}, that our clients respect the
  fluency and validity condition, if their worker function (which is the actual
  implementation of the algorithm described in Section \ref{subsect:algorithm}),
  shares the work properly.
\item Finally, prove, in Section \ref{subsect:pats} that our worker function
  shares the work properly.
\end{itemize}

The reason for this organization is to make the proof for the server and client
reusable in other applications.
We begin by defining what we mean by \emph{tasks}, and how they can be represented
in the server in an efficient way (by \emph{jobs}).

\begin{definition}
\label{def:tasks}

Let $T$ and $R$ be two sets, and $f:T\rightarrow 2^R$ be any function.  A
\emph{task} is an element $t\in T$, and a \emph{result} is an element $r\in R$.
If there is a $r\in 2^R$, and a set $\{t_1,\ldots,t_n\}$ such that
$f(t)=r\cup \bigcup_{1\leq i\leq n} f(t_i)$, we call $t_1,\ldots,t_n$ the
\emph{subtasks} of $t$.

We say that a task $t$ has been \emph{explored} when $f(t)$ is known.

\end{definition}

In our code, most tasks are computed by computing their subtasks. For performance
reasons, we cannot simply represent tasks in a direct way, and we need ``jobs''
instead:

\begin{definition}
\label{def:jobs}
Let $T$ be a set (of tasks), $R$ be a set of results, and $f:T\rightarrow 2^R$
be any function. For any set $J$, we say that $T$ is \emph{represented} by $J$
if there is an onto map $\rho:J\rightarrow T$. In this case, for any $j\in J$,
we say that $j$ is a \emph{job representation} of $\rho(j)$.

By extension, for any job $j$ representing some task $t$, jobs representing the
subtasks of $t$ are called \emph{subjobs} of $j$.

\end{definition}

\subsection{How to read the code, and what we prove on it}

The language we used to implement this architecture is the functional language
Haskell. Although the syntax of this language may be somewhat surprising at
first, the essential points that make our program easier to read, and easy to
prove, are:

\begin{itemize}

\item Our program makes only one use of mutable variables, in the server's
  global state.  This is necessary to synchronize state variables between
  different threads.

  All other variables that we use are \emph{non-mutable}, meaning that \emph{new
    variables} are created whenever a change is needed. For instance, in the
  client's {\tt placeTile} function, we will see in Section \ref{subsect:pats}
  that new vectors are allocated when we need to change them.

\item Global mutable variables are not allowed by the grammar and types of
  Haskell, hence all our functions depend \emph{on their arguments only} (and,
  of course, on global \emph{constants}).

\item The type of functions tells whether or not they have side effects: a
  function has side effects (that is, modifies one of its arguments, or sends a
  message on the network) if and only if its type ends with ``{\tt IO a}'' for some
  {\tt a} (for example ``{\tt IO ()}'').

\item Surprisingly, we do not need to prove anything about the messages
  \emph{sent by the server}: any message sent by the protocol can be interpreted
  as a full predicate, sufficient to build a part of the global proof.

  This makes the following kind of attacks possible: an attacker intercepts a
  ``job mission'' sent to a client and changes it. The client then starts to
  work on that job. However, when it sends ``predicates'' back to the server, these
  parts of the proof cannot be used, because they do not correspond to any
  ``proof goal'' in the server.

\item Parts of this paper were generated directly from actual Haskell code using
  tools for literate programming. Therefore, the line numbers are actual line
  numbers, and might not be contiguous, in particular if some lines contain
  annotations for literate programming.
\end{itemize}

Moreover, many details of our functions need not be proven. In fact the only
thing we need to prove is that the server does not halt before exploring
all the tilesets it needs to explore.

In particular, we will not prove the efficiency or complexity of our protocol,
nor the fact that the server will eventually halt on all runs: the fact
that it halts on at least one run is sufficient for Lemma
\ref{prop:gadget} to hold.

\newcounter{line}
\subsection{Protocol datatypes}
We need to introduce a few datatypes to represent results, jobs, and client and
server messages. Only representations of {\tt ClientMessage} and {\tt
  ServerMessage} via the {\tt encode} function, along with signed hashes, will
be passed on the network.

\ignore{

{\tt{}\small{}\setcounter{line}{0}
\noindent\stepcounter{line}\makebox[2em][l]{\theline}\hbox{\phantom{}{\color{Purple}\{-$|$}}

\noindent\stepcounter{line}\makebox[2em][l]{\theline}\hbox{\phantom{}{\color{Purple}Module}      :  {\color{Green}Parry}.{\color{Green}Protocol}}

\noindent\stepcounter{line}\makebox[2em][l]{\theline}\hbox{\phantom{}{\color{Purple}Copyright}   :  (c) {\color{Green}Pierre}-{\color{Green}Étienne} {\color{Green}Meunier} 2014}

\noindent\stepcounter{line}\makebox[2em][l]{\theline}\hbox{\phantom{}{\color{Purple}License}     :  {\color{Green}GPL}-3}

\noindent\stepcounter{line}\makebox[2em][l]{\theline}\hbox{\phantom{}{\color{Purple}Maintainer}  :  pierre-etienne.meunier@lif.univ-mrs.fr}

\noindent\stepcounter{line}\makebox[2em][l]{\theline}\hbox{\phantom{}{\color{Purple}Stability}   :  experimental}

\noindent\stepcounter{line}\makebox[2em][l]{\theline}\hbox{\phantom{}{\color{Purple}Portability} :  {\color{Green}All}}

\noindent\stepcounter{line}\makebox[2em][l]{\theline}\hbox{\phantom{}}

\noindent\stepcounter{line}\makebox[2em][l]{\theline}\hbox{\phantom{}{\color{Purple}This} {\color{RoyalBlue}module} contains all {\color{RoyalBlue}data} types exchanged between the client}

\noindent\stepcounter{line}\makebox[2em][l]{\theline}\hbox{\phantom{}{\color{Purple}and} the server, except for the initial $\lambda\,$"@{\color{Green}Hello}@$\lambda\,$" message, and is}

\noindent\stepcounter{line}\makebox[2em][l]{\theline}\hbox{\phantom{}{\color{Purple}mostly} exposed for full disclosure {\color{RoyalBlue}of} the protocols proof.}

\noindent\stepcounter{line}\makebox[2em][l]{\theline}\hbox{\phantom{}{\color{Purple}-\}}}

\noindent\stepcounter{line}\makebox[2em][l]{\theline}\hbox{\phantom{}{\color{Purple}\{-\# }{\color{Green}LANGUAGE} {\color{Green}DeriveGeneric} \#-\}}

}}

The first module we need is {\tt Parry.Protocol}, that defines the type of
messages exchanged between the client and server. We assume, in the rest of the
proof, that the {\tt encode} and {\tt decode} functions generated by the {\tt
Binary} instances of these datatypes verify {\tt decode.encode} is the identity.

\hfill

{\tt{}\small{}\setcounter{line}{14}
\noindent\stepcounter{line}\makebox[2em][l]{\theline}\hbox{\phantom{}{\color{RoyalBlue}module} {\color{Green}Parry}.{\color{Green}Protocol} {\color{RoyalBlue}where}}

\noindent\stepcounter{line}\makebox[2em][l]{\theline}\hbox{\phantom{}{\color{RoyalBlue}import} {\color{Green}Data}.{\color{Green}Binary}}

\noindent\stepcounter{line}\makebox[2em][l]{\theline}\hbox{\phantom{}{\color{RoyalBlue}import} {\color{Green}GHC}.{\color{Green}Generics} ({\color{Green}Generic})}

\noindent\stepcounter{line}\makebox[2em][l]{\theline}\hbox{\phantom{}{\color{RoyalBlue}import} {\color{Green}Codec}.{\color{Green}Crypto}.{\color{Green}RSA}.{\color{Green}Pure}}

\noindent\stepcounter{line}\makebox[2em][l]{\theline}\hbox{\phantom{}}

\noindent\stepcounter{line}\makebox[2em][l]{\theline}\hbox{\phantom{}{\color{Red}-- | The type of messages sent by the client, exposed here for full}}

\noindent\stepcounter{line}\makebox[2em][l]{\theline}\hbox{\phantom{}{\color{Red}-- disclosure of the protocol's proof.}}

\noindent\stepcounter{line}\makebox[2em][l]{\theline}\hbox{\phantom{}{\color{RoyalBlue}data} {\color{Green}ClientMessage} j=}

\noindent\stepcounter{line}\makebox[2em][l]{\theline}\hbox{\phantom{xx}{\color{Green}GetJob} {\color{Green}Integer} {\color{Green}PublicKey}}

\noindent\stepcounter{line}\makebox[2em][l]{\theline}\hbox{\phantom{xx}$|$ {\color{Green}JobDone} \{ clientId::{\color{Green}Integer}, jobResults::[j], currentJob::j \}}

\noindent\stepcounter{line}\makebox[2em][l]{\theline}\hbox{\phantom{xx}$|$ {\color{Green}NewJobs} \{ clientId::{\color{Green}Integer}, jobResults::[j], currentJob::j,}

\noindent\stepcounter{line}\makebox[2em][l]{\theline}\hbox{\phantom{xxxxxxxxxxxxxx}nextJob::j, newJobs::[j] \}}

\noindent\stepcounter{line}\makebox[2em][l]{\theline}\hbox{\phantom{xx}$|$ {\color{Green}Alive} {\color{Green}Integer}}

\noindent\stepcounter{line}\makebox[2em][l]{\theline}\hbox{\phantom{xx}{\color{RoyalBlue}deriving} ({\color{Green}Generic}, {\color{Green}Show})}

\noindent\stepcounter{line}\makebox[2em][l]{\theline}\hbox{\phantom{}}

\noindent\stepcounter{line}\makebox[2em][l]{\theline}\hbox{\phantom{}{\color{Red}-- | The type of messages sent by the server, exposed here for full}}

\noindent\stepcounter{line}\makebox[2em][l]{\theline}\hbox{\phantom{}{\color{Red}-- disclosure of the protocol's proof.}}

\noindent\stepcounter{line}\makebox[2em][l]{\theline}\hbox{\phantom{}{\color{RoyalBlue}data} {\color{Green}ServerMessage} j=}

\noindent\stepcounter{line}\makebox[2em][l]{\theline}\hbox{\phantom{xx}{\color{Green}Job} {\color{Green}Bool} j}

\noindent\stepcounter{line}\makebox[2em][l]{\theline}\hbox{\phantom{xx}$|$ {\color{Green}Finished}}

\noindent\stepcounter{line}\makebox[2em][l]{\theline}\hbox{\phantom{xx}$|$ {\color{Green}Ack}}

\noindent\stepcounter{line}\makebox[2em][l]{\theline}\hbox{\phantom{xx}$|$ {\color{Green}Die}}

\noindent\stepcounter{line}\makebox[2em][l]{\theline}\hbox{\phantom{xx}{\color{RoyalBlue}deriving} ({\color{Green}Generic},{\color{Green}Show})}

\noindent\stepcounter{line}\makebox[2em][l]{\theline}\hbox{\phantom{}}

\noindent\stepcounter{line}\makebox[2em][l]{\theline}\hbox{\phantom{}{\color{RoyalBlue}instance} ({\color{Green}Binary} j)$\Rightarrow${\color{Green}Binary} ({\color{Green}ClientMessage} j)}

\noindent\stepcounter{line}\makebox[2em][l]{\theline}\hbox{\phantom{}{\color{RoyalBlue}instance} ({\color{Green}Binary} j)$\Rightarrow${\color{Green}Binary} ({\color{Green}ServerMessage} j)}

}\vspace{1em}

\subsection{Proof of the server}
\label{subsect:server}

A good way to think of this code is the following: we are trying to build a
giant proof tree, so big that no memory or even hard drive can handle it all.
However, we can construct it ``lazily'' and in parallel, and at the same time
verify it. This is why all the messages sent and received, that modify the
server's state, can be thought of as parts of the proof, of the form \emph{``I, valid
client, hereby RSA-certify that the following is a valid part of the tree''}.

The design of the protocol is such that every message that modifies the server's
state contains a complete predicate (for instance \emph{``the results in job {\tt j}
are exactly {\tt r}''}, or
\emph{``While exploring {\tt j}'s subjobs, I found results {\tt r}, and the
remaining subjobs are
$\mathtt{j_0},\mathtt{j_1},\ldots,\mathtt{j_n}$''}.
The server's task is then to assemble these predicates, which are summaries of
subtrees of the proof, into a complete proof.

\ignore{

{\tt{}\small{}\setcounter{line}{0}
\noindent\stepcounter{line}\makebox[2em][l]{\theline}\hbox{\phantom{}{\color{Purple}\{-\#}{\color{Green}OPTIONS} -cpp \#-\}}

\noindent\stepcounter{line}\makebox[2em][l]{\theline}\hbox{\phantom{}{\color{Purple}\{-\#}{\color{Green}LANGUAGE} {\color{Green}OverloadedStrings},{\color{Green}MultiParamTypeClasses},{\color{Green}DeriveGeneric},{\color{Green}BangPatterns} \#-\}}

\noindent\stepcounter{line}\makebox[2em][l]{\theline}\hbox{\phantom{}{\color{Purple}\{-$|$}}

\noindent\stepcounter{line}\makebox[2em][l]{\theline}\hbox{\phantom{}{\color{Purple}Module}      :  {\color{Green}Parry}.{\color{Green}Server}}

\noindent\stepcounter{line}\makebox[2em][l]{\theline}\hbox{\phantom{}{\color{Purple}Copyright}   :  (c) {\color{Green}Pierre}-{\color{Green}Étienne} {\color{Green}Meunier} 2014}

\noindent\stepcounter{line}\makebox[2em][l]{\theline}\hbox{\phantom{}{\color{Purple}License}     :  {\color{Green}GPL}-3}

\noindent\stepcounter{line}\makebox[2em][l]{\theline}\hbox{\phantom{}{\color{Purple}Maintainer}  :  pierre-etienne.meunier@lif.univ-mrs.fr}

\noindent\stepcounter{line}\makebox[2em][l]{\theline}\hbox{\phantom{}{\color{Purple}Stability}   :  experimental}

\noindent\stepcounter{line}\makebox[2em][l]{\theline}\hbox{\phantom{}{\color{Purple}Portability} :  {\color{Green}All}}

\noindent\stepcounter{line}\makebox[2em][l]{\theline}\hbox{\phantom{}}

\noindent\stepcounter{line}\makebox[2em][l]{\theline}\hbox{\phantom{}{\color{Purple}Tools} to build synchronization servers. {\color{Green}For} {\color{RoyalBlue}instance}, to write a simple}

\noindent\stepcounter{line}\makebox[2em][l]{\theline}\hbox{\phantom{}{\color{Purple}server} with just a web interface on port 8000, you would use:}

\noindent\stepcounter{line}\makebox[2em][l]{\theline}\hbox{\phantom{}}

\noindent\stepcounter{line}\makebox[2em][l]{\theline}\hbox{\phantom{}{\color{Purple}$>$ }{\color{RoyalBlue}import} {\color{Green}Control}.{\color{Green}Concurrent}}

\noindent\stepcounter{line}\makebox[2em][l]{\theline}\hbox{\phantom{}{\color{Purple}$>$ }{\color{RoyalBlue}import} {\color{Green}Parry}.{\color{Green}Server}}

\noindent\stepcounter{line}\makebox[2em][l]{\theline}\hbox{\phantom{}{\color{Purple}$>$ }{\color{RoyalBlue}import} {\color{Green}Parry}.{\color{Green}WebUI}}

\noindent\stepcounter{line}\makebox[2em][l]{\theline}\hbox{\phantom{}{\color{Purple}$>$}}

\noindent\stepcounter{line}\makebox[2em][l]{\theline}\hbox{\phantom{}{\color{Purple}$>$ }main::{\color{Green}IO} ()}

\noindent\stepcounter{line}\makebox[2em][l]{\theline}\hbox{\phantom{}{\color{Purple}$>$ }main={\color{RoyalBlue}do}}

\noindent\stepcounter{line}\makebox[2em][l]{\theline}\hbox{\phantom{}{\color{Purple}$>$   }state$\leftarrow$initState initial}

\noindent\stepcounter{line}\makebox[2em][l]{\theline}\hbox{\phantom{}{\color{Purple}$>$   }\_$\leftarrow$forkIO \$ webUI 8000 state}

\noindent\stepcounter{line}\makebox[2em][l]{\theline}\hbox{\phantom{}{\color{Purple}$>$   }server (defaultConfig public) state}

\noindent\stepcounter{line}\makebox[2em][l]{\theline}\hbox{\phantom{}{\color{Purple}-\}}}

}}

We now proceed to the proof of the server. The file is included for the sake of
completeness. In particular, remark that the whole state of the server is defined as
a single data type called {\tt State}. We will use this fact to prove
invariants on the whole server state.
\vspace{1em}

{\tt{}\small{}\setcounter{line}{24}
\noindent\stepcounter{line}\makebox[2em][l]{\theline}\hbox{\phantom{}{\color{RoyalBlue}module} {\color{Green}Parry}.{\color{Green}Server} (}

\noindent\stepcounter{line}\makebox[2em][l]{\theline}\hbox{\phantom{xx}{\color{Red}-- * Jobs on the server side}}

\noindent\stepcounter{line}\makebox[2em][l]{\theline}\hbox{\phantom{xx}{\color{Green}Exhaustive}(..),}

\noindent\stepcounter{line}\makebox[2em][l]{\theline}\hbox{\phantom{xx}{\color{Green}Result}(..),}

\noindent\stepcounter{line}\makebox[2em][l]{\theline}\hbox{\phantom{xx}{\color{Red}-- * Server's internal state}}

\noindent\stepcounter{line}\makebox[2em][l]{\theline}\hbox{\phantom{xx}initState,}

\noindent\stepcounter{line}\makebox[2em][l]{\theline}\hbox{\phantom{xx}stateFromFile,}

\noindent\stepcounter{line}\makebox[2em][l]{\theline}\hbox{\phantom{xx}saveThread,}

\noindent\stepcounter{line}\makebox[2em][l]{\theline}\hbox{\phantom{xx}{\color{Green}State}(..),}

\noindent\stepcounter{line}\makebox[2em][l]{\theline}\hbox{\phantom{xx}{\color{Red}-- * Server configuration and functions}}

\noindent\stepcounter{line}\makebox[2em][l]{\theline}\hbox{\phantom{xx}{\color{Green}Config}(..),defaultConfig,server}

\noindent\stepcounter{line}\makebox[2em][l]{\theline}\hbox{\phantom{xx}) {\color{RoyalBlue}where}}

\noindent\stepcounter{line}\makebox[2em][l]{\theline}\hbox{\phantom{}{\color{RoyalBlue}import} {\color{Green}Control}.{\color{Green}Concurrent}}

\noindent\stepcounter{line}\makebox[2em][l]{\theline}\hbox{\phantom{}{\color{RoyalBlue}import} {\color{Green}Control}.{\color{Green}Exception} {\color{RoyalBlue}as} {\color{Green}E}}

\noindent\stepcounter{line}\makebox[2em][l]{\theline}\hbox{\phantom{}{\color{RoyalBlue}import} {\color{Green}Control}.{\color{Green}Monad}}

\noindent\stepcounter{line}\makebox[2em][l]{\theline}\hbox{\phantom{}{\color{RoyalBlue}import} {\color{Green}Control}.{\color{Green}Concurrent}.{\color{Green}MSem} {\color{RoyalBlue}as} {\color{Green}Sem}}

\noindent\stepcounter{line}\makebox[2em][l]{\theline}\hbox{\phantom{}{\color{RoyalBlue}import} {\color{Green}Network}}

\noindent\stepcounter{line}\makebox[2em][l]{\theline}\hbox{\phantom{}{\color{RoyalBlue}import} {\color{Green}System}.{\color{Green}IO}}

\noindent\stepcounter{line}\makebox[2em][l]{\theline}\hbox{\phantom{}{\color{RoyalBlue}import} {\color{Green}System}.{\color{Green}Directory}}

\noindent\stepcounter{line}\makebox[2em][l]{\theline}\hbox{\phantom{}{\color{RoyalBlue}import} {\color{Green}Data}.{\color{Green}List}}

\noindent\stepcounter{line}\makebox[2em][l]{\theline}\hbox{\phantom{}{\color{RoyalBlue}import} {\color{Green}Data}.{\color{Green}Time}.{\color{Green}Format}()}

\noindent\stepcounter{line}\makebox[2em][l]{\theline}\hbox{\phantom{}{\color{RoyalBlue}import} {\color{Green}Data}.{\color{Green}Time}.{\color{Green}Clock}.{\color{Green}POSIX}}

\noindent\stepcounter{line}\makebox[2em][l]{\theline}\hbox{\phantom{}{\color{Purple}\#}ifdef {\color{Green}UNIX}}

\noindent\stepcounter{line}\makebox[2em][l]{\theline}\hbox{\phantom{}{\color{RoyalBlue}import} {\color{Green}System}.{\color{Green}Posix}.{\color{Green}Signals}}

\noindent\stepcounter{line}\makebox[2em][l]{\theline}\hbox{\phantom{}{\color{Purple}\#}endif}

\noindent\stepcounter{line}\makebox[2em][l]{\theline}\hbox{\phantom{}}

\noindent\stepcounter{line}\makebox[2em][l]{\theline}\hbox{\phantom{}{\color{RoyalBlue}import} {\color{RoyalBlue}qualified} {\color{Green}Data}.{\color{Green}ByteString}.{\color{Green}Char8} {\color{RoyalBlue}as} {\color{Green}B}}

\noindent\stepcounter{line}\makebox[2em][l]{\theline}\hbox{\phantom{}{\color{RoyalBlue}import} {\color{RoyalBlue}qualified} {\color{Green}Data}.{\color{Green}ByteString}.{\color{Green}Lazy}.{\color{Green}Char8} {\color{RoyalBlue}as} {\color{Green}LB}}

\noindent\stepcounter{line}\makebox[2em][l]{\theline}\hbox{\phantom{}{\color{RoyalBlue}import} {\color{RoyalBlue}qualified} {\color{Green}Data}.{\color{Green}Map} {\color{RoyalBlue}as} {\color{Green}M}}

\noindent\stepcounter{line}\makebox[2em][l]{\theline}\hbox{\phantom{}{\color{RoyalBlue}import} {\color{RoyalBlue}qualified} {\color{Green}Data}.{\color{Green}Set} {\color{RoyalBlue}as} {\color{Green}S}}

\noindent\stepcounter{line}\makebox[2em][l]{\theline}\hbox{\phantom{}{\color{RoyalBlue}import} {\color{Green}GHC}.{\color{Green}Generics}}

\noindent\stepcounter{line}\makebox[2em][l]{\theline}\hbox{\phantom{}{\color{RoyalBlue}import} {\color{Green}Data}.{\color{Green}Binary}}

\noindent\stepcounter{line}\makebox[2em][l]{\theline}\hbox{\phantom{}}

\noindent\stepcounter{line}\makebox[2em][l]{\theline}\hbox{\phantom{}{\color{RoyalBlue}import} {\color{Green}Codec}.{\color{Green}Crypto}.{\color{Green}RSA}.{\color{Green}Pure}}

\noindent\stepcounter{line}\makebox[2em][l]{\theline}\hbox{\phantom{}}

\noindent\stepcounter{line}\makebox[2em][l]{\theline}\hbox{\phantom{}{\color{RoyalBlue}import} {\color{Green}Parry}.{\color{Green}Protocol}}

\noindent\stepcounter{line}\makebox[2em][l]{\theline}\hbox{\phantom{}{\color{RoyalBlue}import} {\color{Green}Parry}.{\color{Green}Util}}

\noindent\stepcounter{line}\makebox[2em][l]{\theline}\hbox{\phantom{}}

\noindent\stepcounter{line}\makebox[2em][l]{\theline}\hbox{\phantom{}{\color{Red}-- | The class of jobs and job results that Parry can deal with. For}}

\noindent\stepcounter{line}\makebox[2em][l]{\theline}\hbox{\phantom{}{\color{Red}-- efficiency and to keep types simple, jobs and results are stored in}}

\noindent\stepcounter{line}\makebox[2em][l]{\theline}\hbox{\phantom{}{\color{Red}-- a single type.}}

\noindent\stepcounter{line}\makebox[2em][l]{\theline}\hbox{\phantom{}{\color{RoyalBlue}class} {\color{Green}Exhaustive} j {\color{RoyalBlue}where}}

\noindent\stepcounter{line}\makebox[2em][l]{\theline}\hbox{\phantom{xx}{\color{Red}-- | Indication of the depth of a job in the explored tree. The server sends}}

\noindent\stepcounter{line}\makebox[2em][l]{\theline}\hbox{\phantom{xx}{\color{Red}-- the least deep jobs first, as an optimization of network use.}}

\noindent\stepcounter{line}\makebox[2em][l]{\theline}\hbox{\phantom{xx}depth :: j$\rightarrow${\color{Green}Int}}

\noindent\stepcounter{line}\makebox[2em][l]{\theline}\hbox{\phantom{xx}{\color{Red}-- | Number of times a job has been killed. When a job is killed,}}

\noindent\stepcounter{line}\makebox[2em][l]{\theline}\hbox{\phantom{xx}{\color{Red}-- either because it must be reshared, or because the client itself}}

\noindent\stepcounter{line}\makebox[2em][l]{\theline}\hbox{\phantom{xx}{\color{Red}-- was killed, it is scheduled to be re-executed by the server.}}

\noindent\stepcounter{line}\makebox[2em][l]{\theline}\hbox{\phantom{xx}killed::j$\rightarrow${\color{Green}Int}}

\noindent\stepcounter{line}\makebox[2em][l]{\theline}\hbox{\phantom{xx}{\color{Red}-- | Called each time a job needs to be killed. For better resharing,}}

\noindent\stepcounter{line}\makebox[2em][l]{\theline}\hbox{\phantom{xx}{\color{Red}-- this function must verify @killed (kill j) >= killed j@.}}

\noindent\stepcounter{line}\makebox[2em][l]{\theline}\hbox{\phantom{xx}kill::j$\rightarrow$j}

\noindent\stepcounter{line}\makebox[2em][l]{\theline}\hbox{\phantom{}}

\noindent\stepcounter{line}\makebox[2em][l]{\theline}\hbox{\phantom{}{\color{Red}-- | The class of results, and how to combine them in the server state.}}

\noindent\stepcounter{line}\makebox[2em][l]{\theline}\hbox{\phantom{}{\color{RoyalBlue}class} {\color{Green}Result} j r {\color{RoyalBlue}where}}

\noindent\stepcounter{line}\makebox[2em][l]{\theline}\hbox{\phantom{xx}{\color{Red}-- | A function to tell how to combine job results. That function will be}}

\noindent\stepcounter{line}\makebox[2em][l]{\theline}\hbox{\phantom{xx}{\color{Red}-- called on the hostname of the reporting client, with the finished job it}}

\noindent\stepcounter{line}\makebox[2em][l]{\theline}\hbox{\phantom{xx}{\color{Red}-- sent, and the current result from the server state.}}

\noindent\stepcounter{line}\makebox[2em][l]{\theline}\hbox{\phantom{xx}addResult::{\color{Green}HostName}$\rightarrow$r$\rightarrow$j$\rightarrow$r}

\noindent\stepcounter{line}\makebox[2em][l]{\theline}\hbox{\phantom{}}

\noindent\stepcounter{line}\makebox[2em][l]{\theline}\hbox{\phantom{}{\color{Red}-- | This type is exposed mostly for writing alternative user interfaces.}}

\noindent\stepcounter{line}\makebox[2em][l]{\theline}\hbox{\phantom{}{\color{Red}-- Other operations must be done using the functions in this module, or}}

\noindent\stepcounter{line}\makebox[2em][l]{\theline}\hbox{\phantom{}{\color{Red}-- the correction of the protocol can be lost.}}

\noindent\stepcounter{line}\makebox[2em][l]{\theline}\hbox{\phantom{}{\color{RoyalBlue}data} {\color{Green}State} j r={\color{Green}State} \{}

\noindent\stepcounter{line}\makebox[2em][l]{\theline}\hbox{\phantom{xx}{\color{Red}-- | Available jobs}}

\noindent\stepcounter{line}\makebox[2em][l]{\theline}\hbox{\phantom{xx}jobs::{\color{Green}S}.{\color{Green}Set} ({\color{Green}Int},j),}

\noindent\stepcounter{line}\makebox[2em][l]{\theline}\hbox{\phantom{xx}{\color{Red}-- | Map from the machine id to its hostname, its current job, its}}

\noindent\stepcounter{line}\makebox[2em][l]{\theline}\hbox{\phantom{xx}{\color{Red}-- starting time, the last time we heard from it.}}

\noindent\stepcounter{line}\makebox[2em][l]{\theline}\hbox{\phantom{xx}ongoing::{\color{Green}M}.{\color{Green}Map} {\color{Green}Integer} ({\color{Green}HostName},{\color{Green}PublicKey},j,{\color{Green}Double},{\color{Green}Double}),}

\noindent\stepcounter{line}\makebox[2em][l]{\theline}\hbox{\phantom{xx}{\color{Red}-- | Set of unemployed machines}}

\noindent\stepcounter{line}\makebox[2em][l]{\theline}\hbox{\phantom{xx}unemployed::{\color{Green}S}.{\color{Green}Set} {\color{Green}Integer},}

\noindent\stepcounter{line}\makebox[2em][l]{\theline}\hbox{\phantom{xx}{\color{Red}-- | The results.}}

\noindent\stepcounter{line}\makebox[2em][l]{\theline}\hbox{\phantom{xx}results::r,}

\noindent\stepcounter{line}\makebox[2em][l]{\theline}\hbox{\phantom{xx}{\color{Red}-- | The smallest available machine id. In a run of the server, it}}

\noindent\stepcounter{line}\makebox[2em][l]{\theline}\hbox{\phantom{xx}{\color{Red}-- is guaranteed that are never assigned the same.}}

\noindent\stepcounter{line}\makebox[2em][l]{\theline}\hbox{\phantom{xx}newId::{\color{Green}Integer},}

\noindent\stepcounter{line}\makebox[2em][l]{\theline}\hbox{\phantom{xx}{\color{Red}-- | Total number of jobs killed from the beginning (for benchmarking purposes).}}

\noindent\stepcounter{line}\makebox[2em][l]{\theline}\hbox{\phantom{xx}killings::{\color{Green}Int},}

\noindent\stepcounter{line}\makebox[2em][l]{\theline}\hbox{\phantom{xx}{\color{Red}-- | Number of jobs finished (for benchmarking purposes).}}

\noindent\stepcounter{line}\makebox[2em][l]{\theline}\hbox{\phantom{xx}solved::{\color{Green}Integer},}

\noindent\stepcounter{line}\makebox[2em][l]{\theline}\hbox{\phantom{xx}{\color{Red}-- | The list of authorized RSA public keys.}}

\noindent\stepcounter{line}\makebox[2em][l]{\theline}\hbox{\phantom{xx}authorizedKeys::[{\color{Green}PublicKey}]}

\noindent\stepcounter{line}\makebox[2em][l]{\theline}\hbox{\phantom{xx}\} {\color{RoyalBlue}deriving} ({\color{Green}Show},{\color{Green}Read},{\color{Green}Generic})}

\noindent\stepcounter{line}\makebox[2em][l]{\theline}\hbox{\phantom{}}

\noindent\stepcounter{line}\makebox[2em][l]{\theline}\hbox{\phantom{}{\color{RoyalBlue}instance} ({\color{Green}Binary} j,{\color{Green}Binary} r)$\Rightarrow${\color{Green}Binary} ({\color{Green}State} j r)}

}

\hfill

\begin{definition}
In a server state {\tt st}, the \emph{current job} of a client is the job
registered in the {\tt ongoing} field of {\tt st}.
\end{definition}

\begin{definition}
\label{def:valid}
We call a client \emph{valid} if, at the same time:
\begin{enumerate}

\item \label{valid:newjobs}
Its {\tt NewJobs} messages contain all the results in subjobs of its
current job that have been completely explored, and the subjobs of its current
job that have not been completely explored, divided into three fields: the
results it has found, its next current job, and other subjobs.

\item \label{valid:jobdone}
It does not send a {\tt JobDone} message before the task representing its
current job is completely explored.

\end{enumerate}
\end{definition}

The main function, {\tt answer}, keeps track of the clients. We now prove the
following Lemma:

\begin{lemma}
\label{lem:answer}
If {\tt st} is a state of the server containing (in the union of {\tt job st}
and {\tt ongoing st}) jobs representing all the tasks that have not yet been explored,
and for any job {\tt j}, {\tt j} and {\tt kill j} represent the same task,
then for any message {\tt m} sent by a valid client, all values of {\tt host}
and {\tt time}, {\tt answer host time st m} (the Haskell syntax for ``the value
of function {\tt answer}, called with arguments {\tt t}, {\tt host}, {\tt st} and
{\tt m}'')
is a couple
 $(\mathtt{st'},\mathtt{m'})$, where {\tt st'} is a state of the server
containing the roots of all subtrees that have not yet been explored ({\tt m'}
is the message to be sent to the client).

Moreover, all results sent by the clients are added to the server state using
the {\tt addResult} function.

\begin{proof}

We prove it for all the cases.

{\tt{}\small{}\setcounter{line}{117}
\noindent\stepcounter{line}\makebox[2em][l]{\theline}\hbox{\phantom{}{\color{Purple}answer}::({\color{Green}Exhaustive} j,{\color{Green}Result} j r,{\color{Green}Eq} j,{\color{Green}Ord} j,{\color{Green}Binary} j)$\Rightarrow$}

\noindent\stepcounter{line}\makebox[2em][l]{\theline}\hbox{\phantom{xxxxxxxx}{\color{Green}Double}$\rightarrow${\color{Green}String}$\rightarrow${\color{Green}State} j r$\rightarrow${\color{Green}ClientMessage} j}

\noindent\stepcounter{line}\makebox[2em][l]{\theline}\hbox{\phantom{xxxxxxxx}$\rightarrow$({\color{Green}State} j r,{\color{Green}ServerMessage} j)}

\noindent\stepcounter{line}\makebox[2em][l]{\theline}\hbox{\phantom{}{\color{Purple}answer} t host st ({\color{Green}GetJob} num key)=}

\noindent\stepcounter{line}\makebox[2em][l]{\theline}\hbox{\phantom{xx}{\color{RoyalBlue}case} {\color{Green}M}.lookup num (ongoing st) {\color{RoyalBlue}of}}

\noindent\stepcounter{line}\makebox[2em][l]{\theline}\hbox{\phantom{xxxx}{\color{Green}Just} (ho,key0,j0,t0,\_)$\rightarrow$}

}
If the client is registered as an ``ongoing'' job, we can simply send it the job
it is supposed to be working on. In this case, the invariant is still
maintained, as we do not change its recorded current job (here, we only update the
time at which we last saw this client).

{\tt{}\small{}\setcounter{line}{176}
\noindent\stepcounter{line}\makebox[2em][l]{\theline}\hbox{\phantom{xxxxxx}{\color{RoyalBlue}if} ho==host $\wedge$ key0==key {\color{RoyalBlue}then}}

\noindent\stepcounter{line}\makebox[2em][l]{\theline}\hbox{\phantom{xxxxxxxx}(st \{ ongoing={\color{Green}M}.insert num (ho,key,j0,t0,t) (ongoing st) \},}

\noindent\stepcounter{line}\makebox[2em][l]{\theline}\hbox{\phantom{xxxxxxxxx}{\color{Green}Job} (not \$ {\color{Green}S}.null \$ unemployed st) j0)}

\noindent\stepcounter{line}\makebox[2em][l]{\theline}\hbox{\phantom{xxxxxx}{\color{RoyalBlue}else}}

\noindent\stepcounter{line}\makebox[2em][l]{\theline}\hbox{\phantom{xxxxxxxx}(st,{\color{Green}Die})}

\noindent\stepcounter{line}\makebox[2em][l]{\theline}\hbox{\phantom{xxxx}{\color{Green}Nothing}$\rightarrow$}

}Else, client {\tt num} is not in the map of ongoing jobs.
If there are no more jobs to be done:

\begin{itemize}

\item if there are no more jobs being worked on, we do not modify the state, and
we tell the client to stop (with a {\tt Finished} message).

\item else, we simply record that job as ``unemployed''. The next time a client
reports its state, it will be asked to share its current job. This does not
change the jobs registered in the server's state anyway.
\end{itemize}

{\tt{}\small{}\setcounter{line}{188}
\noindent\stepcounter{line}\makebox[2em][l]{\theline}\hbox{\phantom{xxxxxx}{\color{RoyalBlue}if} {\color{Green}S}.null (jobs st) {\color{RoyalBlue}then}}

\noindent\stepcounter{line}\makebox[2em][l]{\theline}\hbox{\phantom{xxxxxxxx}{\color{RoyalBlue}if} {\color{Green}M}.null (ongoing st) {\color{RoyalBlue}then}}

\noindent\stepcounter{line}\makebox[2em][l]{\theline}\hbox{\phantom{xxxxxxxxxx}(st,{\color{Green}Finished})}

\noindent\stepcounter{line}\makebox[2em][l]{\theline}\hbox{\phantom{xxxxxxxx}{\color{RoyalBlue}else}}

\noindent\stepcounter{line}\makebox[2em][l]{\theline}\hbox{\phantom{xxxxxxxxxx}(st \{ unemployed={\color{Green}S}.insert num (unemployed st) \},{\color{Green}Die})}

}
Else, if there are still jobs to be done, we pick any such job (using {\tt
S.deleteFindMin}). According to the documentation of Haskell's {\tt Data.Map}
module, $\mathtt{jobs\ st}$ is equal to
$\{\mathtt{h}\}\cup\mathtt{nextJobs}$. Therefore, since {\tt num} is not a
member of {\tt ongoing st}, the returned state contains, in the union of its
{\tt ongoing} and {\tt jobs} fields, exactly the same jobs as in {\tt st}.

{\tt{}\small{}\setcounter{line}{206}
\noindent\stepcounter{line}\makebox[2em][l]{\theline}\hbox{\phantom{xxxxxx}{\color{RoyalBlue}else}}

\noindent\stepcounter{line}\makebox[2em][l]{\theline}\hbox{\phantom{xxxxxxxx}{\color{RoyalBlue}let} ((\_,h),nextJobs)={\color{Green}S}.deleteFindMin (jobs st)}

\noindent\stepcounter{line}\makebox[2em][l]{\theline}\hbox{\phantom{xxxxxxxxxxxx}shareIt=killed h$>$0}

\noindent\stepcounter{line}\makebox[2em][l]{\theline}\hbox{\phantom{xxxxxxxx}{\color{RoyalBlue}in}}

\noindent\stepcounter{line}\makebox[2em][l]{\theline}\hbox{\phantom{xxxxxxxxx}(st \{ jobs=nextJobs,}

\noindent\stepcounter{line}\makebox[2em][l]{\theline}\hbox{\phantom{xxxxxxxxxxxxxxx}unemployed={\color{Green}S}.delete num (unemployed st),}

\noindent\stepcounter{line}\makebox[2em][l]{\theline}\hbox{\phantom{xxxxxxxxxxxxxxx}ongoing={\color{Green}M}.insert num (host,key,h,t,t) (ongoing st) \},}

\noindent\stepcounter{line}\makebox[2em][l]{\theline}\hbox{\phantom{xxxxxxxxxx}{\color{Green}Job} shareIt h)}

}\hfill

Another message the server can receive is the {\tt NewJobs} message, when
clients reshare their work:
In this case, the client sends its number {\tt num}, the initial
job {\tt initialJob} it was given, the new job {\tt job} that it will now work
on, a list {\tt js} of jobs that need to be shared, and a list of results.
We can think of this message as equivalent to \emph{``I, valid client {\tt num},
hereby RSA-certify that job {\tt currentJob j} you gave me has subjobs
{\tt newJobs j}, and results {\tt results j}''}.

\hfill

{\tt{}\small{}\setcounter{line}{222}
\noindent\stepcounter{line}\makebox[2em][l]{\theline}\hbox{\phantom{}{\color{Purple}answer} t host st j@({\color{Green}NewJobs} \{\})=}

\noindent\stepcounter{line}\makebox[2em][l]{\theline}\hbox{\phantom{xx}{\color{RoyalBlue}case} {\color{Green}M}.lookup (clientId j) (ongoing st) {\color{RoyalBlue}of}}

\noindent\stepcounter{line}\makebox[2em][l]{\theline}\hbox{\phantom{xxxx}{\color{Green}Nothing}$\rightarrow$(st,{\color{Green}Die})}

\noindent\stepcounter{line}\makebox[2em][l]{\theline}\hbox{\phantom{xxxx}{\color{Green}Just} (ho,key,j0,t0,\_)$\rightarrow$}

\noindent\stepcounter{line}\makebox[2em][l]{\theline}\hbox{\phantom{xxxxxx}{\color{RoyalBlue}if} host==ho $\wedge$ j0==currentJob j {\color{RoyalBlue}then}}

\noindent\stepcounter{line}\makebox[2em][l]{\theline}\hbox{\phantom{xxxxxxxx}(st \{ jobs=foldl ($\lambda\,$s x$\rightarrow${\color{Green}S}.insert (depth x,x) s) (jobs st) (newJobs j),}

\noindent\stepcounter{line}\makebox[2em][l]{\theline}\hbox{\phantom{xxxxxxxxxxxxxx}ongoing={\color{Green}M}.insert (clientId j)}

\noindent\stepcounter{line}\makebox[2em][l]{\theline}\hbox{\phantom{xxxxxxxxxxxxxxxxxxxxxx}(host,key,nextJob j,t0,t)}

\noindent\stepcounter{line}\makebox[2em][l]{\theline}\hbox{\phantom{xxxxxxxxxxxxxxxxxxxxxx}(ongoing st),}

\noindent\stepcounter{line}\makebox[2em][l]{\theline}\hbox{\phantom{xxxxxxxxxxxxxx}results=foldl (addResult host) (results st) (jobResults j) \}, {\color{Green}Ack})}

\noindent\stepcounter{line}\makebox[2em][l]{\theline}\hbox{\phantom{xxxxxx}{\color{RoyalBlue}else}}

\noindent\stepcounter{line}\makebox[2em][l]{\theline}\hbox{\phantom{xxxxxxxx}(st, {\color{Green}Die})}

}\hfill

If the client is not registered as an ``ongoing job'', this message is ignored,
the state is not modified, and the client is sent the {\tt Die} message.

Else, we assumed that this {\tt NewJobs} message can only be sent by a valid
client.  Therefore, it contains all subjobs of its current job that have not
been explored, along with the job it will start working on, and the list of all
results that have been found during the exploration of the other subjobs of its
current job. Since all these subjobs are stored in the {\tt jobs} field of the
state, and the {\tt ongoing} field is updated with the client's new current job,
our claim still holds.

\hfill

{\tt{}\small{}\setcounter{line}{246}
\noindent\stepcounter{line}\makebox[2em][l]{\theline}\hbox{\phantom{}{\color{Purple}answer} \_ \_ st j@({\color{Green}JobDone} \{\})=}

\noindent\stepcounter{line}\makebox[2em][l]{\theline}\hbox{\phantom{xx}{\color{RoyalBlue}case} {\color{Green}M}.lookup (clientId j) (ongoing st) {\color{RoyalBlue}of}}

\noindent\stepcounter{line}\makebox[2em][l]{\theline}\hbox{\phantom{xxxx}{\color{Green}Nothing}$\rightarrow$(st,{\color{Green}Die});}

\noindent\stepcounter{line}\makebox[2em][l]{\theline}\hbox{\phantom{xxxx}{\color{Green}Just} (host,\_,j0,\_,\_)$\rightarrow$}

\noindent\stepcounter{line}\makebox[2em][l]{\theline}\hbox{\phantom{xxxxxx}{\color{RoyalBlue}if} j0==currentJob j {\color{RoyalBlue}then}}

\noindent\stepcounter{line}\makebox[2em][l]{\theline}\hbox{\phantom{xxxxxxxx}(st \{ ongoing={\color{Green}M}.delete (clientId j) (ongoing st),}

\noindent\stepcounter{line}\makebox[2em][l]{\theline}\hbox{\phantom{xxxxxxxxxxxxxx}results=foldl (addResult host) (results st) (jobResults j),}

\noindent\stepcounter{line}\makebox[2em][l]{\theline}\hbox{\phantom{xxxxxxxxxxxxxx}solved=solved st+1 \}, {\color{Green}Ack})}

\noindent\stepcounter{line}\makebox[2em][l]{\theline}\hbox{\phantom{xxxxxx}{\color{RoyalBlue}else}}

\noindent\stepcounter{line}\makebox[2em][l]{\theline}\hbox{\phantom{xxxxxxxx}(st,{\color{Green}Die})}

}
\hfill

In this case, if the client is not registered as an ongoing job, we do not
modify the state. Else, we can safely delete the corresponding job from the
state, and add its results to the state's results field: indeed, since we
assumed that this message is sent by a valid client, that job has been explored
completely. The intuitive version of this message is \emph{``I, valid client
{\tt num}, hereby RSA-certify that I have explored job {\tt currentJob j}
completely, and that it contains exactly results {\tt results j}''}.

\hfill

The last case of {\tt answer} is when the client sends an ``Alive'' message:

{\tt{}\small{}\setcounter{line}{272}
\noindent\stepcounter{line}\makebox[2em][l]{\theline}\hbox{\phantom{}{\color{Purple}answer} t host st ({\color{Green}Alive} num)=}

\noindent\stepcounter{line}\makebox[2em][l]{\theline}\hbox{\phantom{xx}{\color{RoyalBlue}case} {\color{Green}M}.lookup num (ongoing st) {\color{RoyalBlue}of}}

\noindent\stepcounter{line}\makebox[2em][l]{\theline}\hbox{\phantom{xxxx}{\color{Green}Nothing}$\rightarrow$(st,{\color{Green}Die});}

\noindent\stepcounter{line}\makebox[2em][l]{\theline}\hbox{\phantom{xxxx}{\color{Green}Just} (ho,key,j,t0,\_)$\rightarrow$}

\noindent\stepcounter{line}\makebox[2em][l]{\theline}\hbox{\phantom{xxxxxx}{\color{RoyalBlue}if} ho==host $\wedge$ ({\color{Green}S}.null (unemployed st) $\vee$ (not \$ {\color{Green}S}.null \$ jobs st)) {\color{RoyalBlue}then}}

\noindent\stepcounter{line}\makebox[2em][l]{\theline}\hbox{\phantom{xxxxxxxx}(st \{ ongoing={\color{Green}M}.insert num (ho,key,j,t0,t) (ongoing st) \},{\color{Green}Ack})}

\noindent\stepcounter{line}\makebox[2em][l]{\theline}\hbox{\phantom{xxxxxx}{\color{RoyalBlue}else}}

\noindent\stepcounter{line}\makebox[2em][l]{\theline}\hbox{\phantom{xxxxxxxx}(st,{\color{Green}Die})}

}
\hfill

In this case, the set of jobs is not modified, and hence our claim holds.
\end{proof}
\end{lemma}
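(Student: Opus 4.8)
The plan is to prove the stated invariant by exhaustive case analysis on the constructor of the client message \texttt{m}, since \texttt{answer} is defined by pattern matching on exactly the four cases \texttt{GetJob}, \texttt{NewJobs}, \texttt{JobDone}, and \texttt{Alive}. First I would state the invariant formally as a predicate on \texttt{State}: the jobs lying in $\texttt{jobs}\,\texttt{st} \cup \texttt{ongoing}\,\texttt{st}$, pushed through the representation map $\rho$ of Definition~\ref{def:jobs}, cover exactly the tasks that have not yet been completely explored (equivalently, the roots of all not-yet-explored subtrees). The secondary claim is that every result carried by a client message is folded into \texttt{results st} via \texttt{addResult}. Phrasing the invariant as a single predicate on the whole state lets each branch of \texttt{answer} be checked independently, which is exactly the strategy the paper announces for reasoning about the server.

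For the bookkeeping cases I expect only routine verification. In the \texttt{Alive} case, and in the \texttt{GetJob} subcase where the client is already registered in \texttt{ongoing}, the function merely refreshes a timestamp (or returns \texttt{Die} when the client is stale or its host/key does not match), leaving both \texttt{jobs} and \texttt{ongoing} set-theoretically unchanged, so the invariant is preserved trivially and no results are reported. In the \texttt{GetJob} subcase where \texttt{jobs st} is nonempty, \texttt{S.deleteFindMin} splits $\texttt{jobs}\,\texttt{st}$ as $\{\texttt{h}\}\cup\texttt{nextJobs}$ and moves \texttt{h} into \texttt{ongoing}; since the client was not previously recorded in \texttt{ongoing}, the union $\texttt{jobs}\cup\texttt{ongoing}$ is identical before and after, hence so is the covered task set. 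The remaining \texttt{GetJob} subcases (empty \texttt{jobs}) either leave the state untouched (returning \texttt{Finished}) or only insert the client into \texttt{unemployed}, a field that touches neither \texttt{jobs} nor \texttt{ongoing}.

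The substance of the argument lives in the \texttt{NewJobs} and \texttt{JobDone} cases, and the hard part will be \texttt{NewJobs}. Here I would invoke the validity hypothesis of Definition~\ref{def:valid}: by condition~\ref{valid:newjobs}, the message partitions the still-unexplored portion of the client's current task (represented by \texttt{j0}) into its reported results \texttt{jobResults}, its next current job \texttt{nextJob}, and the remaining subjobs \texttt{newJobs}. Matching this against the subtask decomposition $f(t)=r\cup \bigcup_{1\leq i\leq n} f(t_i)$ of Definition~\ref{def:tasks}, one sees that replacing \texttt{j0} in \texttt{ongoing} by \texttt{nextJob} while inserting every element of \texttt{newJobs} into \texttt{jobs} leaves the covered set of unexplored tasks exactly what it was, save for the subtasks whose results have just been recorded by \texttt{foldl (addResult host)}. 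The \texttt{JobDone} case is then simpler: condition~\ref{valid:jobdone} guarantees that a valid client reports a job done only once its task is completely explored, so deleting that job from \texttt{ongoing} removes no unexplored task, and the results are accumulated exactly as required. In both cases the guard \texttt{j0 == currentJob j} (together with the \texttt{host} and key checks) ensures we only act on a genuinely registered client, and otherwise \texttt{Die} is returned with the state unchanged, which preserves the invariant vacuously. Finally, the hypothesis that \texttt{j} and \texttt{kill j} represent the same task is what lets the covered-task invariant survive any requeuing of killed jobs, so I would thread it through wherever a job re-enters \texttt{jobs}, ensuring that no task silently disappears when resharing occurs.
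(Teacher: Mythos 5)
Your proposal is correct and follows essentially the same route as the paper: a case analysis on the four constructors of \texttt{ClientMessage}, with the bookkeeping cases (\texttt{GetJob}, \texttt{Alive}) checked by observing that the union of \texttt{jobs} and \texttt{ongoing} is unchanged, and the validity conditions of Definition~\ref{def:valid} invoked exactly where the paper uses them, namely condition~\ref{valid:newjobs} for \texttt{NewJobs} and condition~\ref{valid:jobdone} for \texttt{JobDone}. Your only inessential deviation is threading the \texttt{kill}-hypothesis through \texttt{answer} itself, where in fact \texttt{kill} is never invoked (it matters only for \texttt{cleanupThread}, Lemma~\ref{lem:cleanupThread}), but this does no harm.
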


Our next task is to prove {\tt reply}, the network interface to the {\tt answer}
function. We first need hypotheses on how this interface works, and especially
how the messages are written and read at the ends of the connection.

\begin{definition}

A client is \emph{fluent} if the messages it sends on the network are of exactly
two kinds:

\begin{itemize}

\item Messages with a single line containing exactly {\tt Hello}.

\item Messages with two lines:
\begin{itemize}

\item the first line is the encoding via {\tt encode16l} of $m$, where $m$ is
the encoding via {\tt encode} of a constructor of the {\tt ClientMessage} type.

\item the second line is the RSA signature, using the client's private key,
of $m$.
\end{itemize}

\end{itemize}

\end{definition}

\begin{lemma} \label{lem:reply} If all the clients that have their public key in
{\tt authorizedKeys st}, where {\tt st} is the state of the server, are valid
and fluent, and {\tt st} contains all the jobs that have not been completely
explored (in the {\tt ongoing} and {\tt jobs} fields), then so does it after one
run of {\tt reply}, assuming that $\mathtt{decode}\circ\mathtt{encode}$ (from
Haskell's {\tt Data.Binary} module) is the identity, and
$\mathtt{decode16}\circ\mathtt{encode16l}$ (from module {\tt Parry.Util}) is the
identity.

\begin{proof}
We prove this invariant on the code of the {\tt reply} function, which handles
every connection to our server.\vspace{1em}

{\tt{}\small{}\setcounter{line}{294}
\noindent\stepcounter{line}\makebox[2em][l]{\theline}\hbox{\phantom{}{\color{Purple}reply}::({\color{Green}Binary} j,{\color{Green}Exhaustive} j,{\color{Green}Result} j r,{\color{Green}Ord} j)$\Rightarrow$}

\noindent\stepcounter{line}\makebox[2em][l]{\theline}\hbox{\phantom{xxxxxxx}{\color{Green}MVar} ({\color{Green}State} j r) $\rightarrow$ {\color{Green}Handle} $\rightarrow$ {\color{Green}HostName} $\rightarrow$ {\color{Green}IO} ()}

\noindent\stepcounter{line}\makebox[2em][l]{\theline}\hbox{\phantom{}{\color{Purple}reply} state rhandle host=({\color{RoyalBlue}do}}

\noindent\stepcounter{line}\makebox[2em][l]{\theline}\hbox{\phantom{xx}l$\leftarrow${\color{Green}B}.hGetLine rhandle}

\noindent\stepcounter{line}\makebox[2em][l]{\theline}\hbox{\phantom{xx}{\color{RoyalBlue}if} l=={\color{Green}B}.pack {\color{Brown}"Hello"} {\color{RoyalBlue}then}}

\noindent\stepcounter{line}\makebox[2em][l]{\theline}\hbox{\phantom{xxxx}modifyMVar\_ state \$ $\lambda\,$st$\rightarrow${\color{RoyalBlue}do}}

\noindent\stepcounter{line}\makebox[2em][l]{\theline}\hbox{\phantom{xxxxxx}{\color{Green}LB}.hPutStrLn rhandle \$ encode16l \$ encode \$ newId st}

\noindent\stepcounter{line}\makebox[2em][l]{\theline}\hbox{\phantom{xxxxxx}return \$ st \{ newId=newId st+1 \}}

}
\hfill

When the first line is the initial {\tt Hello} message, the claim holds: indeed,
the only field of the server state that is modified is the {\tt newId} one,
which represents the first unused client number.

In all other cases, we do the following:

\hfill

{\tt{}\small{}\setcounter{line}{349}
\noindent\stepcounter{line}\makebox[2em][l]{\theline}\hbox{\phantom{xxxx}{\color{RoyalBlue}else} {\color{RoyalBlue}do}}

\noindent\stepcounter{line}\makebox[2em][l]{\theline}\hbox{\phantom{xxxx}st$\leftarrow$withMVar state return}

\noindent\stepcounter{line}\makebox[2em][l]{\theline}\hbox{\phantom{xxxx}sig$\leftarrow${\color{Green}B}.hGetLine rhandle}

\noindent\stepcounter{line}\makebox[2em][l]{\theline}\hbox{\phantom{xxxx}{\color{RoyalBlue}let} dec={\color{Green}LB}.fromStrict \$ decode16 l}

\noindent\stepcounter{line}\makebox[2em][l]{\theline}\hbox{\phantom{xxxxxxxx}msg=decode dec}

\noindent\stepcounter{line}\makebox[2em][l]{\theline}\hbox{\phantom{xxxxxxxx}num={\color{RoyalBlue}case} msg {\color{RoyalBlue}of}}

\noindent\stepcounter{line}\makebox[2em][l]{\theline}\hbox{\phantom{xxxxxxxxxx}{\color{Green}GetJob} x \_$\rightarrow$x}

\noindent\stepcounter{line}\makebox[2em][l]{\theline}\hbox{\phantom{xxxxxxxxxx}{\color{Green}JobDone} x \_ \_$\rightarrow$x}

\noindent\stepcounter{line}\makebox[2em][l]{\theline}\hbox{\phantom{xxxxxxxxxx}{\color{Green}NewJobs} x \_ \_ \_ \_$\rightarrow$x}

\noindent\stepcounter{line}\makebox[2em][l]{\theline}\hbox{\phantom{xxxxxxxxxx}{\color{Green}Alive} x$\rightarrow$x}

\noindent\stepcounter{line}\makebox[2em][l]{\theline}\hbox{\phantom{xxxxxxxx}key={\color{RoyalBlue}case} msg {\color{RoyalBlue}of}}

\noindent\stepcounter{line}\makebox[2em][l]{\theline}\hbox{\phantom{xxxxxxxxxx}{\color{Green}GetJob} \_ pub$\rightarrow$}

\noindent\stepcounter{line}\makebox[2em][l]{\theline}\hbox{\phantom{xxxxxxxxxxxx}{\color{RoyalBlue}if} any (==pub) (authorizedKeys st) {\color{RoyalBlue}then}}

\noindent\stepcounter{line}\makebox[2em][l]{\theline}\hbox{\phantom{xxxxxxxxxxxxxx}{\color{Green}Just} pub}

\noindent\stepcounter{line}\makebox[2em][l]{\theline}\hbox{\phantom{xxxxxxxxxxxx}{\color{RoyalBlue}else}}

\noindent\stepcounter{line}\makebox[2em][l]{\theline}\hbox{\phantom{xxxxxxxxxxxxxx}{\color{Green}Nothing}}

\noindent\stepcounter{line}\makebox[2em][l]{\theline}\hbox{\phantom{xxxxxxxxxx}\_$\rightarrow$({\color{RoyalBlue}case} {\color{Green}M}.lookup num \$ ongoing st {\color{RoyalBlue}of}}

\noindent\stepcounter{line}\makebox[2em][l]{\theline}\hbox{\phantom{xxxxxxxxxxxxxxxxx}{\color{Green}Just} (\_,pub,\_,\_,\_)$\rightarrow${\color{Green}Just} pub}

\noindent\stepcounter{line}\makebox[2em][l]{\theline}\hbox{\phantom{xxxxxxxxxxxxxxxxx}{\color{Green}Nothing}$\rightarrow${\color{Green}Nothing})}

}
\hfill

We will now verify the message signature, using either the public key registered
for this client in the {\tt ongoing} field of the server's state, or the public
key sent by the client itself, in the case of the {\tt GetJob} message (if that
key is registered in the {\tt authorizedKeys} field of the server state):

\hfill

{\tt{}\small{}\setcounter{line}{378}
\noindent\stepcounter{line}\makebox[2em][l]{\theline}\hbox{\phantom{xxxx}message$\leftarrow${\color{RoyalBlue}case} key {\color{RoyalBlue}of}}

\noindent\stepcounter{line}\makebox[2em][l]{\theline}\hbox{\phantom{xxxxxx}{\color{Green}Nothing}$\rightarrow$return {\color{Green}Die}}

\noindent\stepcounter{line}\makebox[2em][l]{\theline}\hbox{\phantom{xxxxxx}{\color{Green}Just} pub$\rightarrow$}

\noindent\stepcounter{line}\makebox[2em][l]{\theline}\hbox{\phantom{xxxxxxxx}{\color{RoyalBlue}case} verify pub dec ({\color{Green}LB}.fromStrict \$ decode16 sig) {\color{RoyalBlue}of}}

\noindent\stepcounter{line}\makebox[2em][l]{\theline}\hbox{\phantom{xxxxxxxxxx}{\color{Green}Right} {\color{Green}True}$\rightarrow${\color{RoyalBlue}do}}

\noindent\stepcounter{line}\makebox[2em][l]{\theline}\hbox{\phantom{xxxxxxxxxxxx}t$\leftarrow$getPOSIXTime}

\noindent\stepcounter{line}\makebox[2em][l]{\theline}\hbox{\phantom{xxxxxxxxxxxx}modifyMVar state \$ $\lambda\,$st0$\rightarrow$}

\noindent\stepcounter{line}\makebox[2em][l]{\theline}\hbox{\phantom{xxxxxxxxxxxxxx}{\color{RoyalBlue}let} (!a,!b)=answer (realToFrac t) host st0 msg {\color{RoyalBlue}in}}

\noindent\stepcounter{line}\makebox[2em][l]{\theline}\hbox{\phantom{xxxxxxxxxxxxxx}return (a,b)}

\noindent\stepcounter{line}\makebox[2em][l]{\theline}\hbox{\phantom{xxxxxxxxxx}\_$\rightarrow$return {\color{Green}Die}}

\noindent\stepcounter{line}\makebox[2em][l]{\theline}\hbox{\phantom{xxxx}{\color{Green}LB}.hPutStrLn rhandle (encode16l \$ encode message)}

\noindent\stepcounter{line}\makebox[2em][l]{\theline}\hbox{\phantom{xx})}

}
\hfill

Since we assumed that $\mathtt{decode}\circ\mathtt{encode}$ and
 $\mathtt{decode16}\circ\mathtt{encode16l}$ are both the identity function,
variable {\tt msg} contains the message sent by the client. Because the client
is valid (because its public key is in the {\tt authorizedKeys} field of the
server state), we can conclude using Lemma \ref{lem:answer} that the invariant
is maintained by the {\tt reply} function, because the only call modifying the
state is a call to {\tt answer}.

\end{proof}
\end{lemma}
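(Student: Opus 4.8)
The plan is to prove the invariant by walking the control flow of {\tt reply} and showing that every branch either leaves the {\tt jobs} and {\tt ongoing} fields of the state untouched, or modifies the state only through a call to {\tt answer} on a message that genuinely originates from a valid client. First I would case-split on the first line {\tt l} read from the handle. If {\tt l} equals {\tt Hello}, the only field of {\tt st} that {\tt reply} changes is {\tt newId} (incremented to hand out a fresh client identifier); since neither {\tt jobs} nor {\tt ongoing} is touched, the invariant is preserved trivially.

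In the remaining case I would track what the server actually reconstructs from the wire. By the fluency hypothesis, the client's message consists of the encoded message followed by its RSA signature, so after applying {\tt decode16} and {\tt decode} to the first line the variable {\tt msg} equals the message the client intended to send; this is exactly where the two round-trip assumptions $\mathtt{decode}\circ\mathtt{encode}=\mathrm{id}$ and $\mathtt{decode16}\circ\mathtt{encode16l}=\mathrm{id}$ are used. I would then follow the computation of {\tt key}: either the public key is supplied in a {\tt GetJob} message and checked against {\tt authorizedKeys}, or it is looked up from the {\tt ongoing} field for an already-registered client.

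Next I would handle the signature verification. If {\tt key} is {\tt Nothing}, or if {\tt verify} does not return {\tt Right True}, then {\tt reply} returns {\tt Die} and leaves the state unchanged, so the invariant holds. The only branch that modifies the state is the one in which verification succeeds, and there the single state-changing call is {\tt answer (realToFrac t) host st0 msg}. Because a successful RSA verification against an authorized public key certifies that {\tt msg} was produced by a client whose key lies in {\tt authorizedKeys st}, and such clients are assumed valid and fluent, the hypotheses of Lemma~\ref{lem:answer} are met. That lemma then guarantees the returned state still contains, in the union of its {\tt jobs} and {\tt ongoing} fields, representatives of all tasks not yet explored, so the invariant is maintained by this branch as well.

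I expect the main obstacle to be the security argument rather than the code walk-through: preservation of the invariant in the state-modifying branch rests entirely on the claim that any message passing {\tt verify} could only have been generated by a valid client, so the cryptographic soundness of the RSA primitive (together with the assumption that authorized clients are valid) is doing the real work. The bookkeeping that matches {\tt num}, {\tt host}, and the registered public key during the {\tt ongoing} lookup must be threaded carefully, so that no forged or replayed message can reach a call to {\tt answer} under a client identity it does not own.
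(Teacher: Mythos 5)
Your proposal follows essentially the same route as the paper: case-split on the {\tt Hello} message (only {\tt newId} changes), then observe that in every other branch the state is either unchanged (the {\tt Die} cases) or modified solely by a call to {\tt answer} on the correctly decoded message from an authorized, hence valid, client, so that Lemma~\ref{lem:answer} yields the invariant. Your closing remarks on the cryptographic assumption and the {\tt ongoing}-lookup bookkeeping are sound observations but do not change the argument, which matches the paper's.
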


The last piece of server code that we need to prove is the {\tt cleanupThread}
function, whose aim is to collect all dead machines. We do need this function,
especially on standard clusters with small walltimes compared to the task.

\begin{lemma}
\label{lem:cleanupThread}
If the {\tt kill} function, defined on jobs, does not change the task
represented by the job, and {\tt state} is a server state containing (in the
{\tt ongoing} and {\tt jobs} fields) all the jobs that have not been explored,
then so is it after one run of {\tt cleanupThread state}.

\begin{proof}

In the following function: the state is only modified by partitionning the {\tt
ongoing st} map into two maps {\tt a} and {\tt b}, and adding all the jobs of
{\tt a} to the {\tt jobs st} set, possibly calling {\tt kill} on some of
them. Therefore, the tasks represented by jobs in the union of {\tt jobs st} and
{\tt ongoing st} is not modified.

\hfill

{\tt{}\small{}\setcounter{line}{400}
\noindent\stepcounter{line}\makebox[2em][l]{\theline}\hbox{\phantom{}{\color{Purple}cleanupThread}::({\color{Green}Ord} j,{\color{Green}Exhaustive} j)$\Rightarrow${\color{Green}MVar} ({\color{Green}State} j r)$\rightarrow${\color{Green}IO} ()}

\noindent\stepcounter{line}\makebox[2em][l]{\theline}\hbox{\phantom{}{\color{Purple}cleanupThread} state={\color{RoyalBlue}do} \{}

\noindent\stepcounter{line}\makebox[2em][l]{\theline}\hbox{\phantom{xx}t\_$\leftarrow$getPOSIXTime;}

\noindent\stepcounter{line}\makebox[2em][l]{\theline}\hbox{\phantom{xx}{\color{RoyalBlue}let} \{ t=realToFrac t\_ \};}

\noindent\stepcounter{line}\makebox[2em][l]{\theline}\hbox{\phantom{xx}modifyMVar\_ state \$ $\lambda\,$st0$\rightarrow${\color{RoyalBlue}do} \{}

\noindent\stepcounter{line}\makebox[2em][l]{\theline}\hbox{\phantom{xxxx}{\color{Red}-- Find machines that have not given news for more than 10 minutes.}}

\noindent\stepcounter{line}\makebox[2em][l]{\theline}\hbox{\phantom{xxxx}{\color{RoyalBlue}let} \{ (a,b)={\color{Green}M}.partition ($\lambda\,$(\_,\_,\_,\_,t1)$\rightarrow$(t-t1) $>$ 600) (ongoing st0);}

\noindent\stepcounter{line}\makebox[2em][l]{\theline}\hbox{\phantom{xxxxxxxxxx}st=st0 \{ jobs=}

\noindent\stepcounter{line}\makebox[2em][l]{\theline}\hbox{\phantom{xxxxxxxxxxxxxxxxxxxxxx}{\color{Green}M}.foldl ($\lambda\,$set (\_,\_,job,t0,\_)$\rightarrow$}

\noindent\stepcounter{line}\makebox[2em][l]{\theline}\hbox{\phantom{xxxxxxxxxxxxxxxxxxxxxxxxxxxxxxxxx}{\color{Green}S}.insert (depth job,}

\noindent\stepcounter{line}\makebox[2em][l]{\theline}\hbox{\phantom{xxxxxxxxxxxxxxxxxxxxxxxxxxxxxxxxxxxxxxxxxxx}{\color{RoyalBlue}if} t-t0 $>$ 3600 {\color{RoyalBlue}then}}

\noindent\stepcounter{line}\makebox[2em][l]{\theline}\hbox{\phantom{xxxxxxxxxxxxxxxxxxxxxxxxxxxxxxxxxxxxxxxxxxxxx}kill job}

\noindent\stepcounter{line}\makebox[2em][l]{\theline}\hbox{\phantom{xxxxxxxxxxxxxxxxxxxxxxxxxxxxxxxxxxxxxxxxxxx}{\color{RoyalBlue}else} job) set)}

\noindent\stepcounter{line}\makebox[2em][l]{\theline}\hbox{\phantom{xxxxxxxxxxxxxxxxxxxxxx}(jobs st0)}

\noindent\stepcounter{line}\makebox[2em][l]{\theline}\hbox{\phantom{xxxxxxxxxxxxxxxxxxxxxx}a,}

\noindent\stepcounter{line}\makebox[2em][l]{\theline}\hbox{\phantom{xxxxxxxxxxxxxxxxxxx}ongoing=b \}}

\noindent\stepcounter{line}\makebox[2em][l]{\theline}\hbox{\phantom{xxxxxxxx}\};}

\noindent\stepcounter{line}\makebox[2em][l]{\theline}\hbox{\phantom{xxxx}return st}

\noindent\stepcounter{line}\makebox[2em][l]{\theline}\hbox{\phantom{xxxx}\};}

\noindent\stepcounter{line}\makebox[2em][l]{\theline}\hbox{\phantom{xx}{\color{Red}-- Sleep 30 seconds, and clean again.}}

\noindent\stepcounter{line}\makebox[2em][l]{\theline}\hbox{\phantom{xx}threadDelay 30000000;}

\noindent\stepcounter{line}\makebox[2em][l]{\theline}\hbox{\phantom{xx}cleanupThread state}

\noindent\stepcounter{line}\makebox[2em][l]{\theline}\hbox{\phantom{xx}\}}

}\end{proof}
\end{lemma}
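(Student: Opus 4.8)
The plan is to show that a single iteration of \texttt{cleanupThread} preserves the invariant that the union of the \texttt{ongoing} and \texttt{jobs} fields of the server state represents exactly the set of tasks that have not yet been explored. Since the only place where \texttt{cleanupThread} touches the state is the call to \texttt{modifyMVar\_}, while the subsequent \texttt{threadDelay} and the recursive call neither read nor write the state, it suffices to reason about the pure state transformation \texttt{st0} $\mapsto$ \texttt{st} performed inside that block. First I would unfold what that transformation does: the map \texttt{ongoing st0} is split by \texttt{M.partition} into two maps, \texttt{a} (the machines silent for more than ten minutes, i.e.\ those with $t - t_1 > 600$) and \texttt{b} (all the others); by the semantics of \texttt{M.partition} these have disjoint key sets and satisfy $\mathtt{a}\cup\mathtt{b}=\texttt{ongoing st0}$. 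The new state then sets \texttt{ongoing}${}={}$\texttt{b} and folds every entry of \texttt{a} back into \texttt{jobs st0}, inserting for each such entry either the job itself or its image under \texttt{kill}, the latter exactly when the job has been running for more than an hour ($t - t_0 > 3600$).

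The core of the argument is then a bookkeeping identity at the level of represented tasks. Writing $\tau(X)$ for the set of tasks represented by the jobs in a collection $X$, and letting $J_0$ and $O_0$ denote the incoming \texttt{jobs} and \texttt{ongoing} fields, I would observe that after the transformation the union of the two fields represents
\[
	\tau(J_0)\cup\tau(\mathtt{a})\cup\tau(\mathtt{b})
	=\tau(J_0)\cup\tau(O_0),
\]
using $\mathtt{a}\cup\mathtt{b}=O_0$. The only subtlety is that some jobs of \texttt{a} are replaced by \texttt{kill job}; but by the hypothesis of the lemma, \texttt{kill} does not change the task a job represents, so $\tau(\mathtt{kill}\,j)=\tau(j)$ and the displayed equality is unaffected. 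Hence, by the incoming invariant, the right-hand side is exactly the set of unexplored tasks, which is what we must establish. Conceptually this is the right accounting: a machine that has gone silent may have been in the middle of exploring its job, so that job's task is still unexplored, and moving it back to \texttt{jobs} (possibly killed) is precisely what keeps it available for reassignment without altering the represented task.

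The step I expect to require the most care is verifying that the \texttt{M.foldl} re-inserts \emph{all} of \texttt{a}'s jobs into \texttt{jobs st0} and loses none of them, and that the \texttt{depth} recomputation on \texttt{kill job} does not silently drop a task through a key collision in the resulting \texttt{S.Set}. This reduces to two standard facts about the containers: the fold ranges over the entire key set of \texttt{a}, and deduplication in \texttt{S.Set} can only merge two \texttt{(depth, job)} entries that are equal as job values, hence (since $\rho$ is a function) that represent the same task, so idempotence of $\cup$ on $\tau$ guarantees no task is lost. This is the one place where the concrete Haskell container semantics, rather than the abstract partition identity, enters the proof; everything else follows from the set-theoretic manipulation above together with the task-preservation property of \texttt{kill}. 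Finally, I would note that ``one run'' here means one pass through the \texttt{modifyMVar\_} block, so the trailing \texttt{threadDelay} and recursion are irrelevant to the invariant, completing the argument.
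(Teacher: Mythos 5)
Your proposal is correct and follows essentially the same route as the paper's proof: observe that the only state change is the partition of \texttt{ongoing} into \texttt{a} and \texttt{b}, the reassignment \texttt{ongoing}${}={}$\texttt{b}, and the re-insertion of all of \texttt{a}'s jobs into \texttt{jobs} (some via \texttt{kill}), so by the hypothesis on \texttt{kill} the set of represented tasks in the union of the two fields is unchanged. Your additional care about the \texttt{M.foldl} covering all of \texttt{a} and about \texttt{S.Set} deduplication only merging equal \texttt{(depth, job)} pairs is a harmless refinement of the same argument, which the paper leaves implicit.
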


Finally, the entry point to our server library is the {\tt server} function:

\begin{lemma}
\label{lem:server}
If:
\begin{itemize}

\item all tasks that have not been completely explored have job representants
in the {\tt ongoing} and {\tt jobs} fields of the {\tt state} argument to {\tt server},

\item all clients that sign their messages with a private RSA key whose corresponding
public key is in the {\tt state} variable are valid and fluent,

\item $\mathtt{decode}\circ\mathtt{encode}$ and
 $\mathtt{decode16}\circ\mathtt{encode16l}$ are both the identity function,
\end{itemize}

then after any number of messages received by the server, variable {\tt state}
also contains jobs representing tasks that have not been completely explored,
in the union of its {\tt ongoing} and {\tt jobs} fields.

\begin{proof}

Clearly, everything {\tt server} does is calling functions that maintain this
invariant, by Lemmas \ref{lem:reply} for {\tt reply} and \ref{lem:cleanupThread}
for {\tt cleanupThread}.

{\tt{}\small{}\setcounter{line}{457}
\noindent\stepcounter{line}\makebox[2em][l]{\theline}\hbox{\phantom{}{\color{Red}-- | Starts the synchronization server.}}

\noindent\stepcounter{line}\makebox[2em][l]{\theline}\hbox{\phantom{}{\color{Purple}server}::({\color{Green}Ord} j, {\color{Green}Binary} j, {\color{Green}Exhaustive} j, {\color{Green}Result} j r)$\Rightarrow$}

\noindent\stepcounter{line}\makebox[2em][l]{\theline}\hbox{\phantom{xxxxxxxx}{\color{Green}Config}$\rightarrow${\color{Green}MVar} ({\color{Green}State} j r)$\rightarrow${\color{Green}IO} ()}

\noindent\stepcounter{line}\makebox[2em][l]{\theline}\hbox{\phantom{}{\color{Purple}server} config state=withSocketsDo \$ {\color{RoyalBlue}do} \{}

\noindent\stepcounter{line}\makebox[2em][l]{\theline}\hbox{\phantom{}{\color{Purple}\#}ifdef {\color{Green}UNIX}}

\noindent\stepcounter{line}\makebox[2em][l]{\theline}\hbox{\phantom{xx}installHandler sigPIPE {\color{Green}Ignore} {\color{Green}Nothing};}

\noindent\stepcounter{line}\makebox[2em][l]{\theline}\hbox{\phantom{}{\color{Purple}\#}endif}

\noindent\stepcounter{line}\makebox[2em][l]{\theline}\hbox{\phantom{xx}threads$\leftarrow${\color{Green}Sem}.new \$ maxThreads config;}

\noindent\stepcounter{line}\makebox[2em][l]{\theline}\hbox{\phantom{xx}\_$\leftarrow$forkIO \$ cleanupThread state;}

\noindent\stepcounter{line}\makebox[2em][l]{\theline}\hbox{\phantom{xx}forever \$ {\color{RoyalBlue}do} \{}

\noindent\stepcounter{line}\makebox[2em][l]{\theline}\hbox{\phantom{xxxx}{\color{Green}E}.catch (bracket (listenOn (port config)) sClose \$}

\noindent\stepcounter{line}\makebox[2em][l]{\theline}\hbox{\phantom{xxxxxxxxxxxxx}$\lambda\,$sock$\rightarrow$forever \$ {\color{RoyalBlue}do} \{}

\noindent\stepcounter{line}\makebox[2em][l]{\theline}\hbox{\phantom{xxxxxxxxxxxxxxx}bracket ({\color{RoyalBlue}do} \{ (s,a,\_)$\leftarrow$accept sock; wait threads; return (s,a) \})}

\noindent\stepcounter{line}\makebox[2em][l]{\theline}\hbox{\phantom{xxxxxxxxxxxxxxx}($\lambda\,$(s,\_)$\rightarrow${\color{RoyalBlue}do} \{ signal threads; hClose s\})}

\noindent\stepcounter{line}\makebox[2em][l]{\theline}\hbox{\phantom{xxxxxxxxxxxxxxx}($\lambda\,$(s,a)$\rightarrow$reply state s a)}

\noindent\stepcounter{line}\makebox[2em][l]{\theline}\hbox{\phantom{xxxxxxxxxxxxxxx}\})}

\noindent\stepcounter{line}\makebox[2em][l]{\theline}\hbox{\phantom{xxxx}($\lambda\,$e$\rightarrow${\color{RoyalBlue}let} \_=e::{\color{Green}SomeException} {\color{RoyalBlue}in} appendFile (logFile config) (show e++{\color{Brown}"\textbackslash{}n"}));}

\noindent\stepcounter{line}\makebox[2em][l]{\theline}\hbox{\phantom{xxxx}threadDelay 100000;}

\noindent\stepcounter{line}\makebox[2em][l]{\theline}\hbox{\phantom{xxxx}\};}

\noindent\stepcounter{line}\makebox[2em][l]{\theline}\hbox{\phantom{xx}\}}

}\end{proof}
\end{lemma}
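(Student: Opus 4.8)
The plan is to treat this as a straightforward \emph{invariant-preservation} argument, where the invariant $\mathcal{I}$ is exactly the property in the conclusion: the server state \texttt{state} contains, in the union of its \texttt{ongoing} and \texttt{jobs} fields, a job representant for every task that has not yet been completely explored. The first hypothesis gives $\mathcal{I}$ for the initial value of \texttt{state}, so it suffices to show that every modification the \texttt{server} function makes to the shared state preserves $\mathcal{I}$, and then to close by induction on the number of such modifications.

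First I would pin down exactly which code touches the shared mutable state. Inspecting \texttt{server}, the \texttt{State} value lives behind a single \texttt{MVar}, and the only operations performed on it are (i) the per-connection call \texttt{reply state s a} inside the inner \texttt{forever} loop, and (ii) the background call \texttt{cleanupThread state} spawned by \texttt{forkIO}. All the surrounding machinery---\texttt{withSocketsDo}, the signal-handler installation, the semaphore \texttt{threads}, the \texttt{bracket}/\texttt{accept}/\texttt{sClose} socket handling, the \texttt{E.catch} error logging, and the \texttt{threadDelay} calls---never reads or writes the \texttt{State} value, so none of it can affect $\mathcal{I}$.

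Next, because \texttt{state} is an \texttt{MVar} updated only through \texttt{modifyMVar}/\texttt{modifyMVar\_} inside \texttt{reply} and \texttt{cleanupThread}, every update is \emph{atomic} and the updates are \emph{serialized}: the sequence of successive values taken by \texttt{state} is well defined even under the concurrent interleaving of many \texttt{reply} threads with the single cleanup thread. Each atomic step in this sequence is therefore either one run of \texttt{reply} or one run of \texttt{cleanupThread} applied to a state already satisfying $\mathcal{I}$. By Lemma~\ref{lem:reply}, a single run of \texttt{reply} preserves $\mathcal{I}$, using the second and third hypotheses (validity and fluency of every client whose public key is authorized, and the two decoding identities \texttt{decode}$\circ$\texttt{encode} and \texttt{decode16}$\circ$\texttt{encode16l}). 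By Lemma~\ref{lem:cleanupThread}, a single run of \texttt{cleanupThread} preserves $\mathcal{I}$, using the assumption that \texttt{kill} does not change the task represented by a job (the same property invoked in Lemma~\ref{lem:answer}). A routine induction on the number of atomic state updates then yields $\mathcal{I}$ after any number of messages, which is the conclusion.

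The main obstacle I expect is not any single calculation but rather the \emph{justification of the ``these are the only modifications'' and ``interleaving is safe'' steps}, which together carry the real content. Concretely, I must argue that the atomicity of \texttt{modifyMVar} suffices: since both \texttt{reply} and \texttt{cleanupThread} take \emph{any} $\mathcal{I}$-state to a new $\mathcal{I}$-state, the invariant is stable under an arbitrary interleaving of their atomic steps, so no assumption about scheduling or about the network is needed. I would also flag the one easily-overlooked branch---the initial \texttt{Hello} case of \texttt{reply}, which only increments \texttt{newId}---and note that it trivially preserves $\mathcal{I}$, since it leaves both \texttt{jobs} and \texttt{ongoing} untouched; this is already subsumed by Lemma~\ref{lem:reply} and needs no separate treatment here.
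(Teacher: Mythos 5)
Your proposal is correct and follows essentially the same route as the paper, whose entire proof is the observation that {\tt server} modifies the state only through {\tt reply} and {\tt cleanupThread}, which preserve the invariant by Lemmas \ref{lem:reply} and \ref{lem:cleanupThread}. Your version merely makes explicit what the paper leaves implicit---the atomicity and serialization of {\tt MVar} updates, the induction on atomic state changes, and the tacit hypothesis that {\tt kill} preserves the represented task (required by Lemma \ref{lem:cleanupThread} but not listed among this lemma's stated assumptions)---so it is, if anything, a slightly more careful rendering of the same argument.
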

\ignore{

{\tt{}\small{}\setcounter{line}{507}
\noindent\stepcounter{line}\makebox[2em][l]{\theline}\hbox{\phantom{}{\color{Red}-- | Server configuration}}

\noindent\stepcounter{line}\makebox[2em][l]{\theline}\hbox{\phantom{}{\color{RoyalBlue}data} {\color{Green}Config}={\color{Green}Config} \{}

\noindent\stepcounter{line}\makebox[2em][l]{\theline}\hbox{\phantom{xx}{\color{Red}-- | The network port the synchronization server will listen on.}}

\noindent\stepcounter{line}\makebox[2em][l]{\theline}\hbox{\phantom{xx}port::{\color{Green}PortID},}

\noindent\stepcounter{line}\makebox[2em][l]{\theline}\hbox{\phantom{xx}{\color{Red}-- | The maximal number of simultaneous threads that can be launched.}}

\noindent\stepcounter{line}\makebox[2em][l]{\theline}\hbox{\phantom{xx}maxThreads::{\color{Green}Int},}

\noindent\stepcounter{line}\makebox[2em][l]{\theline}\hbox{\phantom{xx}{\color{Red}-- | Log file}}

\noindent\stepcounter{line}\makebox[2em][l]{\theline}\hbox{\phantom{xx}logFile::{\color{Green}FilePath}}

\noindent\stepcounter{line}\makebox[2em][l]{\theline}\hbox{\phantom{xx}\}}

\noindent\stepcounter{line}\makebox[2em][l]{\theline}\hbox{\phantom{}}

\noindent\stepcounter{line}\makebox[2em][l]{\theline}\hbox{\phantom{}{\color{Red}-- | Default server configuration, matching the client. Note that you}}

\noindent\stepcounter{line}\makebox[2em][l]{\theline}\hbox{\phantom{}{\color{Red}-- must provide your own public key for signing the messages.}}

\noindent\stepcounter{line}\makebox[2em][l]{\theline}\hbox{\phantom{}}

\noindent\stepcounter{line}\makebox[2em][l]{\theline}\hbox{\phantom{}{\color{Purple}defaultConfig}::{\color{Green}Config}}

\noindent\stepcounter{line}\makebox[2em][l]{\theline}\hbox{\phantom{}{\color{Purple}defaultConfig}={\color{Green}Config} \{ port={\color{Green}PortNumber} 5129, maxThreads=20, logFile={\color{Brown}"parry.err"} \}}

\noindent\stepcounter{line}\makebox[2em][l]{\theline}\hbox{\phantom{}}

\noindent\stepcounter{line}\makebox[2em][l]{\theline}\hbox{\phantom{}{\color{Red}-- | Creates a valid server state from an initial job.}}

\noindent\stepcounter{line}\makebox[2em][l]{\theline}\hbox{\phantom{}{\color{Purple}initState}::({\color{Green}Exhaustive} j,{\color{Green}Ord} j,{\color{Green}Result} j r)$\Rightarrow$[j]$\rightarrow$r$\rightarrow${\color{Green}IO} ({\color{Green}MVar} ({\color{Green}State} j r))}

\noindent\stepcounter{line}\makebox[2em][l]{\theline}\hbox{\phantom{}{\color{Purple}initState} initial r0=}

\noindent\stepcounter{line}\makebox[2em][l]{\theline}\hbox{\phantom{xx}newMVar \$ {\color{Green}State} \{ jobs=foldl ($\lambda\,$s j$\rightarrow${\color{Green}S}.insert (depth j,j) s) {\color{Green}S}.empty initial,}

\noindent\stepcounter{line}\makebox[2em][l]{\theline}\hbox{\phantom{xxxxxxxxxxxxxxxxxxxx}ongoing={\color{Green}M}.empty,}

\noindent\stepcounter{line}\makebox[2em][l]{\theline}\hbox{\phantom{xxxxxxxxxxxxxxxxxxxx}results=r0,}

\noindent\stepcounter{line}\makebox[2em][l]{\theline}\hbox{\phantom{xxxxxxxxxxxxxxxxxxxx}unemployed={\color{Green}S}.empty,}

\noindent\stepcounter{line}\makebox[2em][l]{\theline}\hbox{\phantom{xxxxxxxxxxxxxxxxxxxx}newId=0,}

\noindent\stepcounter{line}\makebox[2em][l]{\theline}\hbox{\phantom{xxxxxxxxxxxxxxxxxxxx}killings=0,solved=0,}

\noindent\stepcounter{line}\makebox[2em][l]{\theline}\hbox{\phantom{xxxxxxxxxxxxxxxxxxxx}authorizedKeys=[] \};}

}

{\tt{}\small{}\setcounter{line}{536}
\noindent\stepcounter{line}\makebox[2em][l]{\theline}\hbox{\phantom{}{\color{Red}-- | Reads initial state from a file, or calls 'initState' if the file does}}

\noindent\stepcounter{line}\makebox[2em][l]{\theline}\hbox{\phantom{}{\color{Red}-- not exist.}}

\noindent\stepcounter{line}\makebox[2em][l]{\theline}\hbox{\phantom{}{\color{Purple}stateFromFile}::({\color{Green}Binary} r,{\color{Green}Binary} j,{\color{Green}Exhaustive} j, {\color{Green}Result} j r,{\color{Green}Ord} j)$\Rightarrow${\color{Green}FilePath}$\rightarrow$[j]$\rightarrow$r$\rightarrow${\color{Green}IO} ({\color{Green}MVar} ({\color{Green}State} j r))}

\noindent\stepcounter{line}\makebox[2em][l]{\theline}\hbox{\phantom{}{\color{Purple}stateFromFile} f initial r0={\color{RoyalBlue}do} \{}

\noindent\stepcounter{line}\makebox[2em][l]{\theline}\hbox{\phantom{xx}e$\leftarrow$doesFileExist f;}

\noindent\stepcounter{line}\makebox[2em][l]{\theline}\hbox{\phantom{xx}state$\leftarrow${\color{RoyalBlue}if} e {\color{RoyalBlue}then} {\color{RoyalBlue}do} \{}

\noindent\stepcounter{line}\makebox[2em][l]{\theline}\hbox{\phantom{xxxx}st$\leftarrow$decodeFile f;}

\noindent\stepcounter{line}\makebox[2em][l]{\theline}\hbox{\phantom{xxxx}newMVar st}

\noindent\stepcounter{line}\makebox[2em][l]{\theline}\hbox{\phantom{xxxx}\} {\color{RoyalBlue}else} initState initial r0;}

\noindent\stepcounter{line}\makebox[2em][l]{\theline}\hbox{\phantom{xx}return state}

\noindent\stepcounter{line}\makebox[2em][l]{\theline}\hbox{\phantom{xx}\}}

\noindent\stepcounter{line}\makebox[2em][l]{\theline}\hbox{\phantom{}}

\noindent\stepcounter{line}\makebox[2em][l]{\theline}\hbox{\phantom{}{\color{Red}-- | Saves state to the given file with the given periodicity, in}}

\noindent\stepcounter{line}\makebox[2em][l]{\theline}\hbox{\phantom{}{\color{Red}-- microseconds. This function does not return, so calling it inside a}}

\noindent\stepcounter{line}\makebox[2em][l]{\theline}\hbox{\phantom{}{\color{Red}-- 'Control.Concurrent.forkIO' is probably the best thing to do.}}

\noindent\stepcounter{line}\makebox[2em][l]{\theline}\hbox{\phantom{}{\color{Purple}saveThread}::({\color{Green}Binary} r,{\color{Green}Binary} j)$\Rightarrow${\color{Green}FilePath}$\rightarrow${\color{Green}Int}$\rightarrow${\color{Green}MVar} ({\color{Green}State} j r)$\rightarrow${\color{Green}IO} ()}

\noindent\stepcounter{line}\makebox[2em][l]{\theline}\hbox{\phantom{}{\color{Purple}saveThread} f del state=}

\noindent\stepcounter{line}\makebox[2em][l]{\theline}\hbox{\phantom{xx}{\color{RoyalBlue}let} \{ save={\color{RoyalBlue}do} \{}

\noindent\stepcounter{line}\makebox[2em][l]{\theline}\hbox{\phantom{xxxxxxxxxxx}e$\leftarrow$doesFileExist f;}

\noindent\stepcounter{line}\makebox[2em][l]{\theline}\hbox{\phantom{xxxxxxxxxxx}{\color{RoyalBlue}if} e {\color{RoyalBlue}then} renameFile f (f++{\color{Brown}".last"}) {\color{RoyalBlue}else} return ();}

\noindent\stepcounter{line}\makebox[2em][l]{\theline}\hbox{\phantom{xxxxxxxxxxx}withMVar state \$ $\lambda\,$st$\rightarrow${\color{Green}LB}.writeFile f \$ encode st;}

\noindent\stepcounter{line}\makebox[2em][l]{\theline}\hbox{\phantom{xxxxxxxxxxx}threadDelay del;}

\noindent\stepcounter{line}\makebox[2em][l]{\theline}\hbox{\phantom{xxxxxxxxxxx}save}

\noindent\stepcounter{line}\makebox[2em][l]{\theline}\hbox{\phantom{xxxxxxxxxxx}\}\}}

\noindent\stepcounter{line}\makebox[2em][l]{\theline}\hbox{\phantom{xx}{\color{RoyalBlue}in} save}

}}

\subsection{Proof of the client}
\label{subsect:client}

Finally, we include the client code, which is just an interface to the
main function described in Section \ref{implementation}. The exact code
consists of two threads. Thread 1 periodically reports its activity to
the server (so that client crashes can be detected), while thread 2
performs the actual computation.

\ignore{

{\tt{}\small{}\setcounter{line}{0}
\noindent\stepcounter{line}\makebox[2em][l]{\theline}\hbox{\phantom{}{\color{Purple}\{-\#}{\color{Green}OPTIONS} -cpp \#-\}}

\noindent\stepcounter{line}\makebox[2em][l]{\theline}\hbox{\phantom{}{\color{Purple}\{-$|$}}

\noindent\stepcounter{line}\makebox[2em][l]{\theline}\hbox{\phantom{}{\color{Purple}Module}      :  {\color{Green}Parry}.{\color{Green}Client}}

\noindent\stepcounter{line}\makebox[2em][l]{\theline}\hbox{\phantom{}{\color{Purple}Copyright}   :  (c) {\color{Green}Pierre}-{\color{Green}Étienne} {\color{Green}Meunier} 2014}

\noindent\stepcounter{line}\makebox[2em][l]{\theline}\hbox{\phantom{}{\color{Purple}License}     :  {\color{Green}GPL}-3}

\noindent\stepcounter{line}\makebox[2em][l]{\theline}\hbox{\phantom{}{\color{Purple}Maintainer}  :  pierre-etienne.meunier@lif.univ-mrs.fr}

\noindent\stepcounter{line}\makebox[2em][l]{\theline}\hbox{\phantom{}{\color{Purple}Stability}   :  experimental}

\noindent\stepcounter{line}\makebox[2em][l]{\theline}\hbox{\phantom{}{\color{Purple}Portability} :  {\color{Green}All}}

\noindent\stepcounter{line}\makebox[2em][l]{\theline}\hbox{\phantom{}}

\noindent\stepcounter{line}\makebox[2em][l]{\theline}\hbox{\phantom{}{\color{Purple}Tool} to build clients for {\color{Green}Parry}.}

\noindent\stepcounter{line}\makebox[2em][l]{\theline}\hbox{\phantom{}{\color{Purple}-\}}}

}}

We now prove the client functions. More precisely, we prove that the clients
that can be built using the functions in this module are valid and fluent,
provided that their ``worker'' function explores the space correctly. We include
the whole source code of this module for the sake of completeness.

\hfill

{\tt{}\small{}\setcounter{line}{12}
\noindent\stepcounter{line}\makebox[2em][l]{\theline}\hbox{\phantom{}{\color{RoyalBlue}module} {\color{Green}Parry}.{\color{Green}Client}(}

\noindent\stepcounter{line}\makebox[2em][l]{\theline}\hbox{\phantom{xx}{\color{Red}-- * Jobs on the client side}}

\noindent\stepcounter{line}\makebox[2em][l]{\theline}\hbox{\phantom{xx}{\color{Green}Client}(..),}

\noindent\stepcounter{line}\makebox[2em][l]{\theline}\hbox{\phantom{xx}{\color{Red}-- * Writing clients}}

\noindent\stepcounter{line}\makebox[2em][l]{\theline}\hbox{\phantom{xx}client,{\color{Green}Config}(..),defaultConfig}

\noindent\stepcounter{line}\makebox[2em][l]{\theline}\hbox{\phantom{xx}) {\color{RoyalBlue}where}}

\noindent\stepcounter{line}\makebox[2em][l]{\theline}\hbox{\phantom{}{\color{RoyalBlue}import} {\color{Green}Network}}

\noindent\stepcounter{line}\makebox[2em][l]{\theline}\hbox{\phantom{}{\color{RoyalBlue}import} {\color{Green}System}.{\color{Green}IO}}

\noindent\stepcounter{line}\makebox[2em][l]{\theline}\hbox{\phantom{}{\color{RoyalBlue}import} {\color{Green}System}.{\color{Green}Exit}}

\noindent\stepcounter{line}\makebox[2em][l]{\theline}\hbox{\phantom{}{\color{Purple}\#}ifdef {\color{Green}UNIX}}

\noindent\stepcounter{line}\makebox[2em][l]{\theline}\hbox{\phantom{}{\color{RoyalBlue}import} {\color{Green}System}.{\color{Green}Posix}.{\color{Green}Signals}}

\noindent\stepcounter{line}\makebox[2em][l]{\theline}\hbox{\phantom{}{\color{Purple}\#}endif}

\noindent\stepcounter{line}\makebox[2em][l]{\theline}\hbox{\phantom{}{\color{RoyalBlue}import} {\color{Green}Control}.{\color{Green}Concurrent}}

\noindent\stepcounter{line}\makebox[2em][l]{\theline}\hbox{\phantom{}{\color{RoyalBlue}import} {\color{Green}Control}.{\color{Green}Exception} {\color{RoyalBlue}as} {\color{Green}E}}

\noindent\stepcounter{line}\makebox[2em][l]{\theline}\hbox{\phantom{}{\color{RoyalBlue}import} {\color{Green}Data}.{\color{Green}List}}

\noindent\stepcounter{line}\makebox[2em][l]{\theline}\hbox{\phantom{}{\color{RoyalBlue}import} {\color{Green}Codec}.{\color{Green}Crypto}.{\color{Green}RSA}.{\color{Green}Pure}}

\noindent\stepcounter{line}\makebox[2em][l]{\theline}\hbox{\phantom{}{\color{RoyalBlue}import} {\color{RoyalBlue}qualified} {\color{Green}Data}.{\color{Green}ByteString}.{\color{Green}Lazy}.{\color{Green}Char8} {\color{RoyalBlue}as} {\color{Green}B8}}

\noindent\stepcounter{line}\makebox[2em][l]{\theline}\hbox{\phantom{}{\color{RoyalBlue}import} {\color{RoyalBlue}qualified} {\color{Green}Data}.{\color{Green}ByteString}.{\color{Green}Char8} {\color{RoyalBlue}as} {\color{Green}S}}

\noindent\stepcounter{line}\makebox[2em][l]{\theline}\hbox{\phantom{}{\color{RoyalBlue}import} {\color{Green}Data}.{\color{Green}Binary}}

\noindent\stepcounter{line}\makebox[2em][l]{\theline}\hbox{\phantom{}{\color{RoyalBlue}import} {\color{Green}Parry}.{\color{Green}Util}}

\noindent\stepcounter{line}\makebox[2em][l]{\theline}\hbox{\phantom{}{\color{RoyalBlue}import} {\color{Green}Parry}.{\color{Green}Protocol}}

\noindent\stepcounter{line}\makebox[2em][l]{\theline}\hbox{\phantom{}}

\noindent\stepcounter{line}\makebox[2em][l]{\theline}\hbox{\phantom{}}

\noindent\stepcounter{line}\makebox[2em][l]{\theline}\hbox{\phantom{}{\color{Red}-- | For a job to be usable by Parry's clients, it must be a member of this class.}}

\noindent\stepcounter{line}\makebox[2em][l]{\theline}\hbox{\phantom{}{\color{RoyalBlue}class} {\color{Green}Client} j {\color{RoyalBlue}where}}

\noindent\stepcounter{line}\makebox[2em][l]{\theline}\hbox{\phantom{xx}{\color{Red}-- | This function is used by the client to distinguish results from regular jobs.}}

\noindent\stepcounter{line}\makebox[2em][l]{\theline}\hbox{\phantom{xx}isResult::j$\rightarrow${\color{Green}Bool}}

}\ignore{

{\tt{}\small{}\setcounter{line}{48}
\noindent\stepcounter{line}\makebox[2em][l]{\theline}\hbox{\phantom{}{\color{Red}-- | Client configuration}}

\noindent\stepcounter{line}\makebox[2em][l]{\theline}\hbox{\phantom{}{\color{RoyalBlue}data} {\color{Green}Config}={\color{Green}Config} \{}

\noindent\stepcounter{line}\makebox[2em][l]{\theline}\hbox{\phantom{xx}{\color{Red}-- | The server's host name (either a DNS name or an IP address).}}

\noindent\stepcounter{line}\makebox[2em][l]{\theline}\hbox{\phantom{xx}server::{\color{Green}String},}

\noindent\stepcounter{line}\makebox[2em][l]{\theline}\hbox{\phantom{xx}{\color{Red}-- | The port number (for instance @'Network.PortNumber' 5129@).}}

\noindent\stepcounter{line}\makebox[2em][l]{\theline}\hbox{\phantom{xx}port::{\color{Green}PortID},}

\noindent\stepcounter{line}\makebox[2em][l]{\theline}\hbox{\phantom{xx}{\color{Red}-- | The private key to sign messages.}}

\noindent\stepcounter{line}\makebox[2em][l]{\theline}\hbox{\phantom{xx}privateKey::{\color{Green}PrivateKey},}

\noindent\stepcounter{line}\makebox[2em][l]{\theline}\hbox{\phantom{xx}{\color{Red}-- | The public key, that the server must know of before the clients connect.}}

\noindent\stepcounter{line}\makebox[2em][l]{\theline}\hbox{\phantom{xx}publicKey::{\color{Green}PublicKey}}

\noindent\stepcounter{line}\makebox[2em][l]{\theline}\hbox{\phantom{xx}\}}

\noindent\stepcounter{line}\makebox[2em][l]{\theline}\hbox{\phantom{}}

\noindent\stepcounter{line}\makebox[2em][l]{\theline}\hbox{\phantom{}{\color{Red}-- | A default configuration, matching the server's}}

\noindent\stepcounter{line}\makebox[2em][l]{\theline}\hbox{\phantom{}{\color{Red}-- 'Parry.Server.defaultConfig'.  Note that, like in the server, you}}

\noindent\stepcounter{line}\makebox[2em][l]{\theline}\hbox{\phantom{}{\color{Red}-- must provide your own private key for signing the protocol}}

\noindent\stepcounter{line}\makebox[2em][l]{\theline}\hbox{\phantom{}{\color{Red}-- messages. See 'Parry.Server.defaultConfig' for an example method to}}

\noindent\stepcounter{line}\makebox[2em][l]{\theline}\hbox{\phantom{}{\color{Red}-- generate these keys.}}

\noindent\stepcounter{line}\makebox[2em][l]{\theline}\hbox{\phantom{}{\color{Purple}defaultConfig}::{\color{Green}PrivateKey}$\rightarrow${\color{Green}PublicKey}$\rightarrow${\color{Green}Config}}

\noindent\stepcounter{line}\makebox[2em][l]{\theline}\hbox{\phantom{}{\color{Purple}defaultConfig} pr pu={\color{Green}Config} \{}

\noindent\stepcounter{line}\makebox[2em][l]{\theline}\hbox{\phantom{xx}server={\color{Brown}"127.0.0.1"},}

\noindent\stepcounter{line}\makebox[2em][l]{\theline}\hbox{\phantom{xx}port={\color{Green}PortNumber} 5129,}

\noindent\stepcounter{line}\makebox[2em][l]{\theline}\hbox{\phantom{xx}privateKey=pr,}

\noindent\stepcounter{line}\makebox[2em][l]{\theline}\hbox{\phantom{xx}publicKey=pu}

\noindent\stepcounter{line}\makebox[2em][l]{\theline}\hbox{\phantom{xx}\}}

}}
\begin{lemma}
\label{lem:signAndSend}
The {\tt signAndSend\_} and {\tt signAndSend} functions send only one kind of
messages on the network, with two lines: the first line is an encoding
via {\tt encode16l} of $m$, the encoding via {\tt encode} of a constructor
of the {\tt ClientMessage} type.
The second line is the RSA signature of $m$ using the private key in the
{\tt conf} argument.
\begin{proof}

The results follows from the fact that this function uses only two lines to send
messages on the network, and these two lines have the claimed form.

\hfill

{\tt{}\small{}\setcounter{line}{73}
\noindent\stepcounter{line}\makebox[2em][l]{\theline}\hbox{\phantom{}{\color{Red}-- | A wrapper around signAndSend, to force the type of @j@, which is}}

\noindent\stepcounter{line}\makebox[2em][l]{\theline}\hbox{\phantom{}{\color{Red}-- needed for the protocol's @Alive@ messages.}}

\noindent\stepcounter{line}\makebox[2em][l]{\theline}\hbox{\phantom{}{\color{Purple}signAndSend\_}::({\color{Green}Binary} j)$\Rightarrow${\color{Green}Config}$\rightarrow$j$\rightarrow${\color{Green}ClientMessage} j$\rightarrow${\color{Green}IO} ({\color{Green}ServerMessage} j)}

\noindent\stepcounter{line}\makebox[2em][l]{\theline}\hbox{\phantom{}{\color{Purple}signAndSend\_} conf \_ a=signAndSend conf a}

\noindent\stepcounter{line}\makebox[2em][l]{\theline}\hbox{\phantom{}}

\noindent\stepcounter{line}\makebox[2em][l]{\theline}\hbox{\phantom{}{\color{Red}-- | Sign a message and send it.}}

\noindent\stepcounter{line}\makebox[2em][l]{\theline}\hbox{\phantom{}{\color{Purple}signAndSend}::({\color{Green}Binary} j)$\Rightarrow${\color{Green}Config}$\rightarrow${\color{Green}ClientMessage} j$\rightarrow${\color{Green}IO} ({\color{Green}ServerMessage} j)}

\noindent\stepcounter{line}\makebox[2em][l]{\theline}\hbox{\phantom{}{\color{Purple}signAndSend} conf m={\color{RoyalBlue}do} \{}

\noindent\stepcounter{line}\makebox[2em][l]{\theline}\hbox{\phantom{xx}{\color{RoyalBlue}let} \{ msg=encode m \};}

\noindent\stepcounter{line}\makebox[2em][l]{\theline}\hbox{\phantom{xx}{\color{RoyalBlue}case} sign (privateKey conf) msg {\color{RoyalBlue}of} \{}

\noindent\stepcounter{line}\makebox[2em][l]{\theline}\hbox{\phantom{xxxx}{\color{Green}Right} b$\rightarrow${\color{Green}E}.catch ({\color{RoyalBlue}do} \{}

\noindent\stepcounter{line}\makebox[2em][l]{\theline}\hbox{\phantom{xxxxxxxxxxxxxxxxxxxxxxxxx}l$\leftarrow$bracket (connectTo (server conf) (port conf)) (hClose)}

\noindent\stepcounter{line}\makebox[2em][l]{\theline}\hbox{\phantom{xxxxxxxxxxxxxxxxxxxxxxxxxxxx}($\lambda\,$h$\rightarrow${\color{RoyalBlue}do} \{}

\noindent\stepcounter{line}\makebox[2em][l]{\theline}\hbox{\phantom{xxxxxxxxxxxxxxxxxxxxxxxxxxxxxxxx}{\color{Green}B8}.hPutStrLn h \$ encode16l msg;}

\noindent\stepcounter{line}\makebox[2em][l]{\theline}\hbox{\phantom{xxxxxxxxxxxxxxxxxxxxxxxxxxxxxxxx}{\color{Green}B8}.hPutStrLn h \$ encode16l b;}

\noindent\stepcounter{line}\makebox[2em][l]{\theline}\hbox{\phantom{xxxxxxxxxxxxxxxxxxxxxxxxxxxxxxxx}hFlush h;}

\noindent\stepcounter{line}\makebox[2em][l]{\theline}\hbox{\phantom{xxxxxxxxxxxxxxxxxxxxxxxxxxxxxxxx}{\color{Green}S}.hGetLine h;}

\noindent\stepcounter{line}\makebox[2em][l]{\theline}\hbox{\phantom{xxxxxxxxxxxxxxxxxxxxxxxxxxxxxxxx}\});}

\noindent\stepcounter{line}\makebox[2em][l]{\theline}\hbox{\phantom{xxxxxxxxxxxxxxxxxxxxxxxxx}{\color{RoyalBlue}case} decodeOrFail \$ decode16l \$ {\color{Green}B8}.fromStrict l {\color{RoyalBlue}of} \{}

\noindent\stepcounter{line}\makebox[2em][l]{\theline}\hbox{\phantom{xxxxxxxxxxxxxxxxxxxxxxxxxxx}{\color{Green}Right} (\_,\_,a)$\rightarrow$return a;}

\noindent\stepcounter{line}\makebox[2em][l]{\theline}\hbox{\phantom{xxxxxxxxxxxxxxxxxxxxxxxxxxx}{\color{Green}Left} \_$\rightarrow${\color{RoyalBlue}do} \{}

\noindent\stepcounter{line}\makebox[2em][l]{\theline}\hbox{\phantom{xxxxxxxxxxxxxxxxxxxxxxxxxxxxx}threadDelay 1000000;}

\noindent\stepcounter{line}\makebox[2em][l]{\theline}\hbox{\phantom{xxxxxxxxxxxxxxxxxxxxxxxxxxxxx}signAndSend conf m}

\noindent\stepcounter{line}\makebox[2em][l]{\theline}\hbox{\phantom{xxxxxxxxxxxxxxxxxxxxxxxxxxxxx}\}}

\noindent\stepcounter{line}\makebox[2em][l]{\theline}\hbox{\phantom{xxxxxxxxxxxxxxxxxxxxxxxxxxx}\}\})}

\noindent\stepcounter{line}\makebox[2em][l]{\theline}\hbox{\phantom{xxxxxxxxxxxxx}($\lambda\,$e$\rightarrow${\color{RoyalBlue}do} \{}

\noindent\stepcounter{line}\makebox[2em][l]{\theline}\hbox{\phantom{xxxxxxxxxxxxxxxxx}{\color{RoyalBlue}let} \{ \_=e::{\color{Green}IOError} \};}

\noindent\stepcounter{line}\makebox[2em][l]{\theline}\hbox{\phantom{xxxxxxxxxxxxxxxxx}print e;}

\noindent\stepcounter{line}\makebox[2em][l]{\theline}\hbox{\phantom{xxxxxxxxxxxxxxxxx}threadDelay 1000000;signAndSend conf m}

\noindent\stepcounter{line}\makebox[2em][l]{\theline}\hbox{\phantom{xxxxxxxxxxxxxxxxx}\});}

\noindent\stepcounter{line}\makebox[2em][l]{\theline}\hbox{\phantom{xxxx}er$\rightarrow${\color{RoyalBlue}do} \{}

\noindent\stepcounter{line}\makebox[2em][l]{\theline}\hbox{\phantom{xxxxxx}print er;hFlush stdout;threadDelay 100000;signAndSend conf m}

\noindent\stepcounter{line}\makebox[2em][l]{\theline}\hbox{\phantom{xxxxxx}\}}

\noindent\stepcounter{line}\makebox[2em][l]{\theline}\hbox{\phantom{xxxx}\}\}}

}
\end{proof}
\end{lemma}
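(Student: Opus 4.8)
The plan is to prove this lemma by direct structural inspection of the Haskell source of the two functions, treating it as a finite case analysis over the code paths that perform network output. The first step is to dispatch {\tt signAndSend\_}: its definition {\tt signAndSend\_ conf \_ a = signAndSend conf a} discards its second argument (which is present only to pin down the phantom type {\tt j}, as needed for the {\tt Alive} constructor that carries no {\tt j}) and delegates immediately to {\tt signAndSend}. Hence it suffices to establish the claim for {\tt signAndSend}, and the result for {\tt signAndSend\_} follows for free.

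Next I would examine {\tt signAndSend conf m}. It first binds {\tt msg = encode m}, so by construction {\tt msg} is exactly the {\tt encode}-image of the supplied {\tt ClientMessage} constructor. It then evaluates {\tt sign (privateKey conf) msg} and case-splits on the outcome exactly as the code does. In the successful branch {\tt Right b}, the value {\tt b} is by definition the RSA signature of {\tt msg} under {\tt privateKey conf}; inside the {\tt connectTo} bracket the only outbound actions are the two consecutive calls {\tt B8.hPutStrLn h \$ encode16l msg} and {\tt B8.hPutStrLn h \$ encode16l b}, which write precisely two lines carrying {\tt encode16l msg} and {\tt encode16l b}. I would note that the subsequent {\tt S.hGetLine h} is an \emph{input}, not an output, so it contributes no sent message, and that (under the paper's standing assumption that {\tt decode16} inverts {\tt encode16l}) the second line is identified with the signature as the lemma and the fluency definition state.

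The one point requiring genuine care is that {\tt signAndSend} is self-recursive: on a decode failure of the server's reply ({\tt Left \_}), on an {\tt IOError} caught by {\tt E.catch}, or on a signing error in the outer {\tt case}, the function merely delays ({\tt threadDelay}) and then re-invokes {\tt signAndSend conf m} on the \emph{same} arguments, emitting nothing new before recursing. I would handle this by an induction on the number of retries (equivalently, a coinductive argument over a possibly non-terminating run): every recursive call is on identical {\tt conf} and {\tt m}, so the inductive hypothesis already gives it the claimed message form, while the retry branches themselves send nothing. Since these branches exhaust all code paths that touch the handle, every message ever placed on the network by {\tt signAndSend}, and hence by {\tt signAndSend\_}, has the two-line shape described. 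The main obstacle is really only this last bookkeeping: organizing the retry-recursion so that the ``only one kind of message'' claim is visibly preserved across all restarts rather than merely on the first attempt.
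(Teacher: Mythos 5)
Your proposal is correct and follows essentially the same route as the paper's proof, which is a one-sentence appeal to direct inspection of the code: the only network writes are the two {\tt hPutStrLn} calls emitting {\tt encode16l msg} and {\tt encode16l b}. Your additional induction over the retry branches (decode failure, {\tt IOError}, signing error), each of which recurses on identical arguments without emitting anything, is a more careful accounting of a point the paper's terse proof leaves implicit, but it does not change the argument.
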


To prove the remaining functions, we need to introduce the following invariant
on their arguments:
\begin{invariant}
\label{h}
When the {\tt cur} variable is not {\tt Nothing},
the {\tt jobs} and {\tt results} variables contain,
respectively, the list of all jobs of the contents of {\tt cur} that have not
been completely explored, and the list of results found during the exploration
of all other subjobs of the job in {\tt cur}.
\end{invariant}

\begin{lemma}
\label{lem:dowork}
If there is a function {\tt doWork} such that, at the same time:
\begin{enumerate}
\item For all values of {\tt b}, {\tt save} and {\tt j}, {\tt doWork b save j}
only calls {\tt save} with arguments {\tt l} and {\tt r}, where {\tt r} is the
list of all results that have been found when the subjobs of {\tt j} that have
not been completely explored are all in list {\tt l}.

\item For all values of {\tt b}, {\tt save} and {\tt j}, {\tt doWork b save
j} returns the list of all subjobs of {\tt j} that have not been explored, and
all results that have been found in {\tt j}, in the remaining subjobs of {\tt j}.

\end{enumerate}
Then {\tt client conf doWork} is a valid and fluent client.
\begin{proof}

We first prove fluency: in the two functions below ({\tt work} and {\tt
client}), the only messages sent to the network are either sent using {\tt
signAndSend}, or else consist of a single line containing exactly {\tt Hello},
which is the definition of fluency (by Lemma \ref{lem:signAndSend}).

We now prove that {\tt client conf doWork} is valid. Indeed, condition
\ref{valid:newjobs} of validity ({\tt NewJobs} messages contain all subjobs of
the current job) is clearly respected by the {\tt work} function: indeed, by the
definition of a result, the {\tt todo} variable in {\tt work} is the list of all
subjobs that have not been explored. Condition \ref{valid:jobdone} is also
respected, because it is not applicable to the {\tt work} function.

Moreover, the {\tt work} function returns either {\tt Nothing}, or {\tt Just
(j,r)} where {\tt j} is a job, and {\tt r} is the list of all results found
during the exploration of {\tt j}. We prove this recursively on its code:

\hfill

{\tt{}\small{}\setcounter{line}{123}
\noindent\stepcounter{line}\makebox[2em][l]{\theline}\hbox{\phantom{}{\color{Purple}work}::({\color{Green}Client} j,{\color{Green}Binary} j)$\Rightarrow${\color{Green}Config}$\rightarrow${\color{Green}MVar} [j]$\rightarrow${\color{Green}MVar}[j]$\rightarrow${\color{Green}MVar} ({\color{Green}Maybe} j)$\rightarrow$}

\noindent\stepcounter{line}\makebox[2em][l]{\theline}\hbox{\phantom{xxxxxx}({\color{Green}Bool}$\rightarrow$([j]$\rightarrow$[j]$\rightarrow${\color{Green}IO} ())$\rightarrow$j$\rightarrow${\color{Green}IO} [j])$\rightarrow${\color{Green}Bool}$\rightarrow${\color{Green}Integer}$\rightarrow$j$\rightarrow${\color{Green}IO} ({\color{Green}Maybe} (j,[j]))}

\noindent\stepcounter{line}\makebox[2em][l]{\theline}\hbox{\phantom{}{\color{Purple}work} conf jobs results cur doWork shared num j={\color{RoyalBlue}do}}

\noindent\stepcounter{line}\makebox[2em][l]{\theline}\hbox{\phantom{xx}{\color{RoyalBlue}let} save jobs\_ results\_ jobs results={\color{RoyalBlue}do}}

\noindent\stepcounter{line}\makebox[2em][l]{\theline}\hbox{\phantom{xxxxxxxx}modifyMVar\_ jobs\_ \$ $\lambda\,$\_$\rightarrow$return jobs}

\noindent\stepcounter{line}\makebox[2em][l]{\theline}\hbox{\phantom{xxxxxxxx}modifyMVar\_ results\_ \$ $\lambda\,$\_$\rightarrow$return results}

\noindent\stepcounter{line}\makebox[2em][l]{\theline}\hbox{\phantom{xx}someWork$\leftarrow$doWork shared (save jobs results) j}

\noindent\stepcounter{line}\makebox[2em][l]{\theline}\hbox{\phantom{xx}{\color{RoyalBlue}let} (result,todo)=partition isResult someWork}

\noindent\stepcounter{line}\makebox[2em][l]{\theline}\hbox{\phantom{xx}{\color{RoyalBlue}case} todo {\color{RoyalBlue}of}}

}\hfill

Either {\tt doWork} has returned no new jobs, in which case the induction
hypothesis clearly holds, because the exploration of the current job is over.

\hfill

{\tt{}\small{}\setcounter{line}{182}
\noindent\stepcounter{line}\makebox[2em][l]{\theline}\hbox{\phantom{xxxx}[]$\rightarrow$return ({\color{Green}Just} (j,result))}

}\hfill

Or {\tt doWork} returns some new jobs. In this case, {\tt work} is called on
{\tt u} only after reply {\tt Ack} has been received from the server,
acknowledging that the current job registered for this client has been updated.
Therefore, {\tt work} is called recursively on the current job of this client,
and hence, the claim also holds, by recursion.

\hfill

{\tt{}\small{}\setcounter{line}{189}
\noindent\stepcounter{line}\makebox[2em][l]{\theline}\hbox{\phantom{xxxx}u:v$\rightarrow${\color{RoyalBlue}do}}

\noindent\stepcounter{line}\makebox[2em][l]{\theline}\hbox{\phantom{xxxxxx}x$\leftarrow$signAndSend conf ({\color{Green}NewJobs} \{clientId=num,jobResults=result,}

\noindent\stepcounter{line}\makebox[2em][l]{\theline}\hbox{\phantom{xxxxxxxxxxxxxxxxxxxxxxxxxxxxxxxxxxxx}currentJob=j,nextJob=u,newJobs=v \})}

\noindent\stepcounter{line}\makebox[2em][l]{\theline}\hbox{\phantom{xxxxxx}modifyMVar\_ cur ($\lambda\,$\_$\rightarrow${\color{RoyalBlue}do} \{}

\noindent\stepcounter{line}\makebox[2em][l]{\theline}\hbox{\phantom{xxxxxxxxxxxxxxxxxxxxxxxxxx}modifyMVar\_ results ($\lambda\,$\_$\rightarrow$return []);}

\noindent\stepcounter{line}\makebox[2em][l]{\theline}\hbox{\phantom{xxxxxxxxxxxxxxxxxxxxxxxxxx}modifyMVar\_ jobs ($\lambda\,$\_$\rightarrow$return [u]);}

\noindent\stepcounter{line}\makebox[2em][l]{\theline}\hbox{\phantom{xxxxxxxxxxxxxxxxxxxxxxxxxx}return ({\color{Green}Just} u)}

\noindent\stepcounter{line}\makebox[2em][l]{\theline}\hbox{\phantom{xxxxxxxxxxxxxxxxxxxxxxxxxx}\})}

\noindent\stepcounter{line}\makebox[2em][l]{\theline}\hbox{\phantom{xxxxxx}{\color{RoyalBlue}case} x {\color{RoyalBlue}of}}

\noindent\stepcounter{line}\makebox[2em][l]{\theline}\hbox{\phantom{xxxxxxxx}{\color{Green}Ack}$\rightarrow$work conf jobs results cur doWork (shared $\wedge$ (null v)) num u;}

\noindent\stepcounter{line}\makebox[2em][l]{\theline}\hbox{\phantom{xxxxxxxx}\_$\rightarrow$return {\color{Green}Nothing}}

}\hfill

We will now prove validity for the code of the {\tt client} function.
We have assumed that the {\tt save} function above is only called on
a list {\tt l} of subjobs, and a list {\tt r} of results, such that
{\tt l} contains all subjobs of the initial job {\tt j} that have
not been completely explored when results {\tt r} are found.

Therefore, in the {\tt work} function above, Invariant \ref{h} on
the {\tt jobs\_} and {\tt results\_} variables is clearly maintained:

\hfill
\ignore{

{\tt{}\small{}\setcounter{line}{209}
\noindent\stepcounter{line}\makebox[2em][l]{\theline}\hbox{\phantom{}{\color{Red}-- | @'client' config work@ calls its argument function @work@ on a}}

\noindent\stepcounter{line}\makebox[2em][l]{\theline}\hbox{\phantom{}{\color{Red}-- boolean @s@ (if the server asked to share jobs) and the actual job}}

\noindent\stepcounter{line}\makebox[2em][l]{\theline}\hbox{\phantom{}{\color{Red}-- @j@ that must be done. @work@ must return a list of resulting jobs,}}

\noindent\stepcounter{line}\makebox[2em][l]{\theline}\hbox{\phantom{}{\color{Red}-- that may include results (see 'Parry.Protocol.Client').}}

\noindent\stepcounter{line}\makebox[2em][l]{\theline}\hbox{\phantom{}{\color{Red}--}}

\noindent\stepcounter{line}\makebox[2em][l]{\theline}\hbox{\phantom{}{\color{Red}-- Workers asked to share their input job should share it as early as}}

\noindent\stepcounter{line}\makebox[2em][l]{\theline}\hbox{\phantom{}{\color{Red}-- possible: this usually means that the job already got killed in a}}

\noindent\stepcounter{line}\makebox[2em][l]{\theline}\hbox{\phantom{}{\color{Red}-- previous attempt, probably because of its length.}}

}}

{\tt{}\small{}\setcounter{line}{230}
\noindent\stepcounter{line}\makebox[2em][l]{\theline}\hbox{\phantom{}{\color{Purple}client}::({\color{Green}Binary} j,{\color{Green}Client} j)$\Rightarrow$}

\noindent\stepcounter{line}\makebox[2em][l]{\theline}\hbox{\phantom{xxxxxxxx}{\color{Green}Config}$\rightarrow$({\color{Green}Bool}$\rightarrow$([j]$\rightarrow$[j]$\rightarrow${\color{Green}IO}())$\rightarrow$j$\rightarrow${\color{Green}IO} [j])$\rightarrow${\color{Green}IO} ()}

\noindent\stepcounter{line}\makebox[2em][l]{\theline}\hbox{\phantom{}{\color{Purple}client} conf doWork=}

\noindent\stepcounter{line}\makebox[2em][l]{\theline}\hbox{\phantom{xx}{\color{RoyalBlue}let} startConnection={\color{RoyalBlue}do}}

\noindent\stepcounter{line}\makebox[2em][l]{\theline}\hbox{\phantom{xxxxxxxx}hPutStrLn stderr {\color{Brown}"Connecting..."}}

\noindent\stepcounter{line}\makebox[2em][l]{\theline}\hbox{\phantom{xxxxxxxx}hFlush stderr}

\noindent\stepcounter{line}\makebox[2em][l]{\theline}\hbox{\phantom{xxxxxxxx}l$\leftarrow${\color{Green}E}.catch (bracket (connectTo (server conf) (port conf)) hClose}

\noindent\stepcounter{line}\makebox[2em][l]{\theline}\hbox{\phantom{xxxxxxxxxxxxxxxxxxxx}($\lambda\,$h$\rightarrow${\color{RoyalBlue}do} \{}

\noindent\stepcounter{line}\makebox[2em][l]{\theline}\hbox{\phantom{xxxxxxxxxxxxxxxxxxxxxxxx}{\color{Green}B8}.hPutStrLn h ({\color{Green}B8}.pack {\color{Brown}"Hello"});}

\noindent\stepcounter{line}\makebox[2em][l]{\theline}\hbox{\phantom{xxxxxxxxxxxxxxxxxxxxxxxx}hFlush h;}

\noindent\stepcounter{line}\makebox[2em][l]{\theline}\hbox{\phantom{xxxxxxxxxxxxxxxxxxxxxxxx}l$\leftarrow${\color{Green}S}.hGetLine h;}

\noindent\stepcounter{line}\makebox[2em][l]{\theline}\hbox{\phantom{xxxxxxxxxxxxxxxxxxxxxxxx}{\color{RoyalBlue}case} decodeOrFail (decode16l ({\color{Green}B8}.fromStrict l)) {\color{RoyalBlue}of} \{}

\noindent\stepcounter{line}\makebox[2em][l]{\theline}\hbox{\phantom{xxxxxxxxxxxxxxxxxxxxxxxxxx}{\color{Green}Right} x$\rightarrow$return ({\color{Green}Right} x);}

\noindent\stepcounter{line}\makebox[2em][l]{\theline}\hbox{\phantom{xxxxxxxxxxxxxxxxxxxxxxxxxx}{\color{Green}Left} \_$\rightarrow$return ({\color{Green}Left} ())}

\noindent\stepcounter{line}\makebox[2em][l]{\theline}\hbox{\phantom{xxxxxxxxxxxxxxxxxxxxxxxxxx}\}\}))}

\noindent\stepcounter{line}\makebox[2em][l]{\theline}\hbox{\phantom{xxxxxxxxxxx}($\lambda\,$e$\rightarrow${\color{RoyalBlue}let} \_=e::{\color{Green}IOError} {\color{RoyalBlue}in} {\color{RoyalBlue}do} \{ hPutStrLn stderr \$ show e; return ({\color{Green}Left} ()) \})}

\noindent\stepcounter{line}\makebox[2em][l]{\theline}\hbox{\phantom{xxxxxxxx}{\color{RoyalBlue}case} l {\color{RoyalBlue}of}}

\noindent\stepcounter{line}\makebox[2em][l]{\theline}\hbox{\phantom{xxxxxxxxxx}{\color{Green}Right} (\_,\_,num)$\rightarrow${\color{RoyalBlue}do}}

\noindent\stepcounter{line}\makebox[2em][l]{\theline}\hbox{\phantom{xxxxxxxxxxxx}jobs$\leftarrow$newMVar []}

\noindent\stepcounter{line}\makebox[2em][l]{\theline}\hbox{\phantom{xxxxxxxxxxxx}results$\leftarrow$newMVar []}

\noindent\stepcounter{line}\makebox[2em][l]{\theline}\hbox{\phantom{xxxxxxxxxxxx}cur$\leftarrow$newMVar {\color{Green}Nothing}}

}\hfill

At this point, {\tt cur} contains {\tt Nothing}: Invariant \ref{h} is clearly
maintained.

Now remark that, when the {\tt saveAll} function below is called, and the {\tt
cur} variable contains {\tt Nothing}, no message is sent. Therefore, if
Invariant \ref{h} holds when {\tt saveAll} is called, the corresponding {\tt NewJobs}
message respects the validity condition.

\hfill

{\tt{}\small{}\setcounter{line}{252}
\noindent\stepcounter{line}\makebox[2em][l]{\theline}\hbox{\phantom{xxxxxxxxxxxx}{\color{RoyalBlue}let} saveAll=withMVar jobs \$ $\lambda\,$j$\rightarrow$}

\noindent\stepcounter{line}\makebox[2em][l]{\theline}\hbox{\phantom{xxxxxxxxxxxxxxxxxx}withMVar cur \$ $\lambda\,$curj$\rightarrow$}

\noindent\stepcounter{line}\makebox[2em][l]{\theline}\hbox{\phantom{xxxxxxxxxxxxxxxxxx}{\color{RoyalBlue}case} (j,curj) {\color{RoyalBlue}of} \{}

\noindent\stepcounter{line}\makebox[2em][l]{\theline}\hbox{\phantom{xxxxxxxxxxxxxxxxxxxx}(h:s,{\color{Green}Just} cu)$\rightarrow$}

\noindent\stepcounter{line}\makebox[2em][l]{\theline}\hbox{\phantom{xxxxxxxxxxxxxxxxxxxxxxx}withMVar results \$ $\lambda\,$res$\rightarrow${\color{RoyalBlue}do} \{}

\noindent\stepcounter{line}\makebox[2em][l]{\theline}\hbox{\phantom{xxxxxxxxxxxxxxxxxxxxxxxxx}\_$\leftarrow$signAndSend conf (}

\noindent\stepcounter{line}\makebox[2em][l]{\theline}\hbox{\phantom{xxxxxxxxxxxxxxxxxxxxxxxxxxxx}{\color{Green}NewJobs} \{ clientId=num,}

\noindent\stepcounter{line}\makebox[2em][l]{\theline}\hbox{\phantom{xxxxxxxxxxxxxxxxxxxxxxxxxxxxxxxxxxxxxx}jobResults=res,}

\noindent\stepcounter{line}\makebox[2em][l]{\theline}\hbox{\phantom{xxxxxxxxxxxxxxxxxxxxxxxxxxxxxxxxxxxxxx}currentJob=cu,}

\noindent\stepcounter{line}\makebox[2em][l]{\theline}\hbox{\phantom{xxxxxxxxxxxxxxxxxxxxxxxxxxxxxxxxxxxxxx}nextJob=h,}

\noindent\stepcounter{line}\makebox[2em][l]{\theline}\hbox{\phantom{xxxxxxxxxxxxxxxxxxxxxxxxxxxxxxxxxxxxxx}newJobs=s}

\noindent\stepcounter{line}\makebox[2em][l]{\theline}\hbox{\phantom{xxxxxxxxxxxxxxxxxxxxxxxxxxxxxxxxxxxx}\});}

\noindent\stepcounter{line}\makebox[2em][l]{\theline}\hbox{\phantom{xxxxxxxxxxxxxxxxxxxxxxxxx}return ()\};}

\noindent\stepcounter{line}\makebox[2em][l]{\theline}\hbox{\phantom{xxxxxxxxxxxxxxxxxxxxxxx}\_$\rightarrow$return ()}

\noindent\stepcounter{line}\makebox[2em][l]{\theline}\hbox{\phantom{xxxxxxxxxxxxxxxxxxxx}\}}

\noindent\stepcounter{line}\makebox[2em][l]{\theline}\hbox{\phantom{xxxxxxxxxxxx}myth$\leftarrow$myThreadId}

\noindent\stepcounter{line}\makebox[2em][l]{\theline}\hbox{\phantom{xxxxxxxxxxxx}threads$\leftarrow$newMVar myth}

\noindent\stepcounter{line}\makebox[2em][l]{\theline}\hbox{\phantom{}{\color{Purple}\#}ifdef {\color{Green}UNIX}}

\noindent\stepcounter{line}\makebox[2em][l]{\theline}\hbox{\phantom{xxxxxxxxxxxx}installHandler sigTERM}

\noindent\stepcounter{line}\makebox[2em][l]{\theline}\hbox{\phantom{xxxxxxxxxxxxxx}({\color{Green}Catch} (modifyMVar\_ threads ($\lambda\,$a$\rightarrow${\color{RoyalBlue}do} \{{\color{RoyalBlue}if} a==myth {\color{RoyalBlue}then} return () {\color{RoyalBlue}else}}

\noindent\stepcounter{line}\makebox[2em][l]{\theline}\hbox{\phantom{xxxxxxxxxxxxxxxxxxxxxxxxxxxxxxxxxxxxxxxxxxxxxxxxxxxxx}killThread a;return myth\}))) {\color{Green}Nothing}}

\noindent\stepcounter{line}\makebox[2em][l]{\theline}\hbox{\phantom{}{\color{Purple}\#}endif}

\noindent\stepcounter{line}\makebox[2em][l]{\theline}\hbox{\phantom{xxxxxxxxxxxx}{\color{RoyalBlue}let} getAJob={\color{RoyalBlue}do}}

\noindent\stepcounter{line}\makebox[2em][l]{\theline}\hbox{\phantom{xxxxxxxxxxxxxxxxxx}job\_$\leftarrow$signAndSend conf ({\color{Green}GetJob} num (publicKey conf));}

\noindent\stepcounter{line}\makebox[2em][l]{\theline}\hbox{\phantom{xxxxxxxxxxxxxxxxxx}{\color{RoyalBlue}case} job\_ {\color{RoyalBlue}of}}

\noindent\stepcounter{line}\makebox[2em][l]{\theline}\hbox{\phantom{xxxxxxxxxxxxxxxxxxxx}{\color{Green}Job} share j$\rightarrow${\color{RoyalBlue}do}}

\noindent\stepcounter{line}\makebox[2em][l]{\theline}\hbox{\phantom{xxxxxxxxxxxxxxxxxxxxxx}modifyMVar\_ cur ($\lambda\,$\_$\rightarrow${\color{RoyalBlue}do} \{}

\noindent\stepcounter{line}\makebox[2em][l]{\theline}\hbox{\phantom{xxxxxxxxxxxxxxxxxxxxxxxxxxxxxxxxxxxxxxxxxx}modifyMVar\_ jobs ($\lambda\,$\_$\rightarrow$return [j]);}

\noindent\stepcounter{line}\makebox[2em][l]{\theline}\hbox{\phantom{xxxxxxxxxxxxxxxxxxxxxxxxxxxxxxxxxxxxxxxxxx}modifyMVar\_ results ($\lambda\,$\_$\rightarrow$return []);}

\noindent\stepcounter{line}\makebox[2em][l]{\theline}\hbox{\phantom{xxxxxxxxxxxxxxxxxxxxxxxxxxxxxxxxxxxxxxxxxx}return ({\color{Green}Just} j)}

\noindent\stepcounter{line}\makebox[2em][l]{\theline}\hbox{\phantom{xxxxxxxxxxxxxxxxxxxxxxxxxxxxxxxxxxxxxxxxxx}\});}

\noindent\stepcounter{line}\makebox[2em][l]{\theline}\hbox{\phantom{xxxxxxxxxxxxxxxxxxxxxx}workerMVar$\leftarrow$newEmptyMVar;}

}\hfill

At this point, Invariant \ref{h} still holds: the only subjob of the current job
is itself, and no results have been found.

\hfill

{\tt{}\small{}\setcounter{line}{295}
\noindent\stepcounter{line}\makebox[2em][l]{\theline}\hbox{\phantom{xxxxxxxxxxxxxxxxxxxxxx}workerThread$\leftarrow$}

\noindent\stepcounter{line}\makebox[2em][l]{\theline}\hbox{\phantom{xxxxxxxxxxxxxxxxxxxxxxxx}forkFinally}

\noindent\stepcounter{line}\makebox[2em][l]{\theline}\hbox{\phantom{xxxxxxxxxxxxxxxxxxxxxxxx}({\color{RoyalBlue}do} \{}

\noindent\stepcounter{line}\makebox[2em][l]{\theline}\hbox{\phantom{xxxxxxxxxxxxxxxxxxxxxxxxxxxx}x$\leftarrow$work conf jobs results cur doWork share num j;}

\noindent\stepcounter{line}\makebox[2em][l]{\theline}\hbox{\phantom{xxxxxxxxxxxxxxxxxxxxxxxxxxxx}{\color{RoyalBlue}case} x {\color{RoyalBlue}of} \{}

\noindent\stepcounter{line}\makebox[2em][l]{\theline}\hbox{\phantom{xxxxxxxxxxxxxxxxxxxxxxxxxxxxxx}{\color{Green}Just} (j,r)$\rightarrow${\color{RoyalBlue}do} \{}

\noindent\stepcounter{line}\makebox[2em][l]{\theline}\hbox{\phantom{xxxxxxxxxxxxxxxxxxxxxxxxxxxxxxxxx}\_$\leftarrow$signAndSend conf}

\noindent\stepcounter{line}\makebox[2em][l]{\theline}\hbox{\phantom{xxxxxxxxxxxxxxxxxxxxxxxxxxxxxxxxxxxx}({\color{Green}JobDone} \{clientId=num,}

\noindent\stepcounter{line}\makebox[2em][l]{\theline}\hbox{\phantom{xxxxxxxxxxxxxxxxxxxxxxxxxxxxxxxxxxxxxxxxxxxxxx}currentJob=j,}

\noindent\stepcounter{line}\makebox[2em][l]{\theline}\hbox{\phantom{xxxxxxxxxxxxxxxxxxxxxxxxxxxxxxxxxxxxxxxxxxxxxx}jobResults=r \});}

\noindent\stepcounter{line}\makebox[2em][l]{\theline}\hbox{\phantom{xxxxxxxxxxxxxxxxxxxxxxxxxxxxxxxxx}return () \};}

\noindent\stepcounter{line}\makebox[2em][l]{\theline}\hbox{\phantom{xxxxxxxxxxxxxxxxxxxxxxxxxxxxxx}{\color{Green}Nothing}$\rightarrow$return ()}

\noindent\stepcounter{line}\makebox[2em][l]{\theline}\hbox{\phantom{xxxxxxxxxxxxxxxxxxxxxxxxxxxxxx}\}\})}

\noindent\stepcounter{line}\makebox[2em][l]{\theline}\hbox{\phantom{xxxxxxxxxxxxxxxxxxxxxxxx}($\lambda\,$\_$\rightarrow$putMVar workerMVar ())}

}\hfill

Here, the validity condition holds for the {\tt JobDone} message: indeed, the
current job has been completely explored, and found exactly the results in {\tt
r}.

\hfill

{\tt{}\small{}\setcounter{line}{315}
\noindent\stepcounter{line}\makebox[2em][l]{\theline}\hbox{\phantom{xxxxxxxxxxxxxxxxxxxxxx}{\color{RoyalBlue}let} heartbeat={\color{RoyalBlue}do}}

\noindent\stepcounter{line}\makebox[2em][l]{\theline}\hbox{\phantom{xxxxxxxxxxxxxxxxxxxxxxxxxxxx}answer$\leftarrow$signAndSend\_ conf j ({\color{Green}Alive} num)}

\noindent\stepcounter{line}\makebox[2em][l]{\theline}\hbox{\phantom{xxxxxxxxxxxxxxxxxxxxxxxxxxxx}{\color{RoyalBlue}case} answer {\color{RoyalBlue}of}}

\noindent\stepcounter{line}\makebox[2em][l]{\theline}\hbox{\phantom{xxxxxxxxxxxxxxxxxxxxxxxxxxxxxx}{\color{Green}Ack}$\rightarrow${\color{RoyalBlue}do} \{ threadDelay 300000000; heartbeat \}}

\noindent\stepcounter{line}\makebox[2em][l]{\theline}\hbox{\phantom{xxxxxxxxxxxxxxxxxxxxxxxxxxxxxx}\_$\rightarrow${\color{RoyalBlue}do} \{ saveAll; killThread workerThread \}}

\noindent\stepcounter{line}\makebox[2em][l]{\theline}\hbox{\phantom{xxxxxxxxxxxxxxxxxxxxxx}heartbeatMVar$\leftarrow$newEmptyMVar}

\noindent\stepcounter{line}\makebox[2em][l]{\theline}\hbox{\phantom{xxxxxxxxxxxxxxxxxxxxxx}hbThread$\leftarrow$forkIO}

\noindent\stepcounter{line}\makebox[2em][l]{\theline}\hbox{\phantom{xxxxxxxxxxxxxxxxxxxxxxxxxxxxxxxx}(heartbeat{\color{Bittersweet}`finally`}(putMVar heartbeatMVar ()))}

\noindent\stepcounter{line}\makebox[2em][l]{\theline}\hbox{\phantom{xxxxxxxxxxxxxxxxxxxxxx}modifyMVar\_ threads ($\lambda\,$\_$\rightarrow$return workerThread)}

\noindent\stepcounter{line}\makebox[2em][l]{\theline}\hbox{\phantom{xxxxxxxxxxxxxxxxxxxxxx}takeMVar workerMVar}

\noindent\stepcounter{line}\makebox[2em][l]{\theline}\hbox{\phantom{xxxxxxxxxxxxxxxxxxxxxx}killThread hbThread}

\noindent\stepcounter{line}\makebox[2em][l]{\theline}\hbox{\phantom{xxxxxxxxxxxxxxxxxxxxxx}takeMVar heartbeatMVar}

\noindent\stepcounter{line}\makebox[2em][l]{\theline}\hbox{\phantom{xxxxxxxxxxxxxxxxxxxxxx}stop$\leftarrow$withMVar threads ($\lambda\,$t$\rightarrow$return (t==myth))}

\noindent\stepcounter{line}\makebox[2em][l]{\theline}\hbox{\phantom{xxxxxxxxxxxxxxxxxxxxxx}{\color{RoyalBlue}if} stop {\color{RoyalBlue}then} {\color{RoyalBlue}do} \{ saveAll; exitSuccess \} {\color{RoyalBlue}else} getAJob}

\noindent\stepcounter{line}\makebox[2em][l]{\theline}\hbox{\phantom{xxxxxxxxxxxxxxxxxxxx}{\color{Green}Finished}$\rightarrow$return ();}

\noindent\stepcounter{line}\makebox[2em][l]{\theline}\hbox{\phantom{xxxxxxxxxxxxxxxxxxxx}\_$\rightarrow${\color{RoyalBlue}do}}

\noindent\stepcounter{line}\makebox[2em][l]{\theline}\hbox{\phantom{xxxxxxxxxxxxxxxxxxxxxx}threadDelay 10000000}

\noindent\stepcounter{line}\makebox[2em][l]{\theline}\hbox{\phantom{xxxxxxxxxxxxxxxxxxxxxx}getAJob}

\noindent\stepcounter{line}\makebox[2em][l]{\theline}\hbox{\phantom{xxxxxxxxxxxx}getAJob}

\noindent\stepcounter{line}\makebox[2em][l]{\theline}\hbox{\phantom{xxxxxxxxxx}{\color{Green}Left} \_$\rightarrow${\color{RoyalBlue}do}}

\noindent\stepcounter{line}\makebox[2em][l]{\theline}\hbox{\phantom{xxxxxxxxxxxx}threadDelay 5000000}

\noindent\stepcounter{line}\makebox[2em][l]{\theline}\hbox{\phantom{xxxxxxxxxxxx}startConnection}

\noindent\stepcounter{line}\makebox[2em][l]{\theline}\hbox{\phantom{xx}{\color{RoyalBlue}in}}

\noindent\stepcounter{line}\makebox[2em][l]{\theline}\hbox{\phantom{xxx}startConnection}

}\end{proof}
\end{lemma}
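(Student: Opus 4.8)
The plan is to verify \textbf{fluency} and \textbf{validity} (Definition~\ref{def:valid}) separately, both by structural inspection of the {\tt work} and {\tt client} functions, using the two stated properties of {\tt doWork} as the only nontrivial inputs and carrying Invariant~\ref{h} as the induction hypothesis. First I would dispatch fluency, which is the easy half: scanning {\tt work} and {\tt client}, every write to the network is either the literal single line {\tt "Hello"} emitted inside {\tt startConnection}, or a call to {\tt signAndSend}/{\tt signAndSend\_}. By Lemma~\ref{lem:signAndSend}, each such call emits exactly the two-line message required by the definition of fluency, and since these are the only sends, fluency follows with no further argument.

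The substance is validity, which I would obtain by induction on the recursion depth of {\tt work}, maintaining Invariant~\ref{h} on the shared variables {\tt cur}, {\tt jobs}, {\tt results}. For the base case, the invariant is established where {\tt cur} is assigned in {\tt getAJob}: there {\tt cur := Just j}, {\tt jobs := [j]}, {\tt results := []}, so the only not-yet-explored subjob of {\tt j} is {\tt j} itself and no results are known. For the inductive step, I would read off the two messages that {\tt work} can emit after its call to {\tt doWork shared (save jobs results) j}. Property~1 of {\tt doWork} guarantees that the local {\tt save} only ever overwrites {\tt jobs} and {\tt results} with a list {\tt l} of not-yet-explored subjobs together with the matching results {\tt r}, so the invariant is preserved across every {\tt save}. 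Property~2 guarantees that the returned {\tt someWork}, once partitioned into {\tt (result,todo)}, has {\tt result} equal to all results found and {\tt todo} equal to all not-yet-explored subjobs of {\tt j}.

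Then the two cases align directly with the two clauses of Definition~\ref{def:valid}. If {\tt todo = []}, {\tt work} returns {\tt Just (j,result)}, which is exactly what the worker thread turns into a {\tt JobDone} message, so condition~\ref{valid:jobdone} holds because the current job is by then completely explored. If {\tt todo = u:v}, {\tt work} sends {\tt NewJobs} with {\tt jobResults = result}, {\tt currentJob = j}, {\tt nextJob = u}, {\tt newJobs = v}, packaging all results together with a split of the remaining subjobs, which is precisely condition~\ref{valid:newjobs}; it then resets {\tt cur := Just u}, {\tt jobs := [u]}, {\tt results := []} and recurses on {\tt u}, so the induction hypothesis applies to the continuation. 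The only other emission of {\tt NewJobs} is inside {\tt saveAll}, where the fields {\tt currentJob = cu}, {\tt jobResults = res}, {\tt nextJob = h}, {\tt newJobs = s} are exactly the contents of {\tt cur}, {\tt results} and {\tt jobs}; Invariant~\ref{h} then makes this message satisfy condition~\ref{valid:newjobs} as well.

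The delicate point, and the one I would spend most care on, is \emph{concurrency}: a {\tt NewJobs} message can be produced either by the worker thread (inside {\tt work}) or by the heartbeat thread (inside {\tt saveAll} when the job is killed), and both read and write the shared {\tt cur}, {\tt jobs}, {\tt results} variables. I would argue that the {\tt withMVar}/{\tt modifyMVar\_} discipline makes every update of {\tt jobs} and {\tt results} atomic with respect to the reads performed when a message is assembled, so that Invariant~\ref{h} is never observed in a broken intermediate state and each emitted message genuinely reflects a consistent {\tt (cur, jobs, results)} snapshot. Establishing that this synchronization is tight enough---in particular that no {\tt NewJobs} or {\tt JobDone} is sent in the window between updating {\tt cur} and updating the corresponding {\tt jobs}/{\tt results}---is the real obstacle; once that is granted, everything else reduces to matching the two {\tt doWork} properties against the two clauses of Definition~\ref{def:valid}.
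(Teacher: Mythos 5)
Your proposal is correct and follows essentially the same route as the paper: fluency by inspection of all network writes via Lemma \ref{lem:signAndSend}, then validity by recursion on {\tt work} using the two {\tt doWork} properties to maintain Invariant \ref{h} and matching the {\tt JobDone} and {\tt NewJobs} emissions against the two clauses of Definition \ref{def:valid}. Your explicit treatment of the concurrency between the worker and heartbeat threads is in fact more careful than the paper, which simply asserts the invariant is ``clearly maintained'' without discussing the MVar synchronization at all.
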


\subsection{Proof of the atomic exploration}
\label{subsect:pats}

In this section, we prove that the atomic exploration, done by function
{\tt placeTile} (in module {\tt Pats}), verifies the condition of Lemma
\ref{lem:dowork}.

We first need to define what jobs, subjobs and results are, in our case:

\begin{definition}

In our proof, a \emph{job} is a current position $c$, a directed tileset $t$
represented by a vector of integers, and a partial assembly represented by a
matrix of integers, that can be assembled without mismatches until position $c$
(in the order of Figure \ref{fig:order}) with tiles of $t$.

A \emph{subjob} $k$ of a job $j$ is a job whose partial assembly agrees with
$j$'s, on all positions before $j$'s current position. The tileset of $k$ may
be different from $j$'s.

Finally, a \emph{result} is a job whose partial assembly is total, i.e. is of
the same size as the target pattern.

\end{definition}

In Haskell, ordinary jobs are constructed using constructor {\tt J}, while
results are constructed with {\tt R}. These constructors are defined in module
{\tt Tile}, which also defines the representation of tiles.  In order to keep
the proof short, we do not detail this part of the implementation here. Tiles
are represented by integers, divided into five ``segments'' of five bits
each. Functions {\tt north}, {\tt east}, {\tt south}, {\tt west} and {\tt color}
are then defined by bitwise operation in module {\tt Tile} to retrieve each
segment.  Functions {\tt withN}, {\tt withE}, {\tt withS}, {\tt withW} and {\tt
  withC} are also defined in module {\tt Tile}, by bitwise operations, and
modify these fields (more precisely, to construct a new integer with the
corresponding field modified).

We now proceed to the last piece of our program: the ``{\tt doWork}'' function
needed by Lemma~\ref{lem:dowork}.

\hfill

\ignore{

{\tt{}\small{}\setcounter{line}{0}
\noindent\stepcounter{line}\makebox[2em][l]{\theline}\hbox{\phantom{}{\color{Purple}\{-\#}{\color{Green}LANGUAGE} {\color{Green}BangPatterns} \#-\}}

}}

{\tt{}\small{}\setcounter{line}{2}
\noindent\stepcounter{line}\makebox[2em][l]{\theline}\hbox{\phantom{}{\color{RoyalBlue}module} {\color{Green}Pats}(placeTile) {\color{RoyalBlue}where}}

\noindent\stepcounter{line}\makebox[2em][l]{\theline}\hbox{\phantom{}}

\noindent\stepcounter{line}\makebox[2em][l]{\theline}\hbox{\phantom{}{\color{RoyalBlue}import} {\color{Green}Data}.{\color{Green}Bits}}

\noindent\stepcounter{line}\makebox[2em][l]{\theline}\hbox{\phantom{}{\color{RoyalBlue}import} {\color{RoyalBlue}qualified} {\color{Green}Data}.{\color{Green}IntSet} {\color{RoyalBlue}as} {\color{Green}S}}

\noindent\stepcounter{line}\makebox[2em][l]{\theline}\hbox{\phantom{}{\color{RoyalBlue}import} {\color{RoyalBlue}qualified} {\color{Green}Data}.{\color{Green}Vector} {\color{RoyalBlue}as} {\color{Green}V}}

\noindent\stepcounter{line}\makebox[2em][l]{\theline}\hbox{\phantom{}{\color{RoyalBlue}import} {\color{RoyalBlue}qualified} {\color{Green}Data}.{\color{Green}Vector}.{\color{Green}Unboxed} {\color{RoyalBlue}as} {\color{Green}U}}

\noindent\stepcounter{line}\makebox[2em][l]{\theline}\hbox{\phantom{}{\color{RoyalBlue}import} {\color{RoyalBlue}qualified} {\color{Green}Data}.{\color{Green}Vector}.{\color{Green}Generic} {\color{RoyalBlue}as} {\color{Green}G}}

\noindent\stepcounter{line}\makebox[2em][l]{\theline}\hbox{\phantom{}{\color{RoyalBlue}import} {\color{Green}Tile}}

}\ignore{

{\tt{}\small{}\setcounter{line}{11}
\noindent\stepcounter{line}\makebox[2em][l]{\theline}\hbox{\phantom{}{\color{Purple}(!)::}{\color{Green}G}.{\color{Green}Vector} v a$\Rightarrow$ v a$\rightarrow${\color{Green}Int}$\rightarrow$a}

\noindent\stepcounter{line}\makebox[2em][l]{\theline}\hbox{\phantom{}{\color{Purple}(!)=}{\color{Green}G}.unsafeIndex}

\noindent\stepcounter{line}\makebox[2em][l]{\theline}\hbox{\phantom{}}

\noindent\stepcounter{line}\makebox[2em][l]{\theline}\hbox{\phantom{}{\color{Purple}(//)::}{\color{Green}G}.{\color{Green}Vector} v a$\Rightarrow$v a$\rightarrow$[({\color{Green}Int},a)]$\rightarrow$v a}

\noindent\stepcounter{line}\makebox[2em][l]{\theline}\hbox{\phantom{}{\color{Purple}(//)=}{\color{Green}G}.unsafeUpd}

}}

\begin{lemma}\label{lem:directed}
Given a tileset $\mathtt{tiles}$, $\mathtt{isDirected}\ \mathtt{tiles}=\mathtt{true}$ if and only if
$\mathtt{tiles}$ is a directed DTAS tileset, that is, no two tiles in $\mathtt{tiles}$ have the same input (i.e. south and west) glues.

\begin{proof}

$\mathtt{isDirected}$ works by defining and calling the recursive function
$\mathtt{isDir}$. We prove by induction on {\tt i} that {\tt isDir i s} is
{\tt True} if and only if for all
$j,k\in S_i=\{{\mathtt i},{\mathtt i+1},\ldots |{\tt tiles}|-1\}$,
$(\mathrm{west}_j,\mathrm{south}_j)\not\in{\mathtt s}$,
and $(\mathrm{west}_j,\mathrm{south}_j)\neq(\mathrm{west}_k,\mathrm{south}_k)$.

\hfill

{\tt{}\small{}\setcounter{line}{17}
\noindent\stepcounter{line}\makebox[2em][l]{\theline}\hbox{\phantom{}{\color{Purple}isDirected}::{\color{Green}U}.{\color{Green}Vector} {\color{Green}Tile}$\rightarrow${\color{Green}Bool}}

\noindent\stepcounter{line}\makebox[2em][l]{\theline}\hbox{\phantom{}{\color{Purple}isDirected} !tiles=}

\noindent\stepcounter{line}\makebox[2em][l]{\theline}\hbox{\phantom{xx}{\color{RoyalBlue}let} isDir i s=}

}In the case where $i\geq\mathtt{U.length tiles}$, $S_i=\emptyset$, and the
claim holds.

{\tt{}\small{}\setcounter{line}{36}
\noindent\stepcounter{line}\makebox[2em][l]{\theline}\hbox{\phantom{xxxxxxxx}{\color{RoyalBlue}if} i$\geq${\color{Green}U}.length tiles {\color{RoyalBlue}then} {\color{Green}True} {\color{RoyalBlue}else}}

}
Else, assuming the claim holds for $\mathtt{i+1}$: if the couple
$(\mathrm{west}_i,\mathrm{south}_i)\in\mathtt{s}$, it also holds here (and the
result is {\tt False}). Else, it also holds, because by induction,
$\mathtt{isDir\ (i+1)\ (S.insert\ g\ s)}$ is {\tt True} only if for all for all
$k\in S_i$, $(\mathrm{west}_i,\mathrm{south}_i)$ is different from
$(\mathrm{west}_k,\mathrm{south}_k)$, because $\mathtt g$ to $\mathtt s$ in
the recursive call.

{\tt{}\small{}\setcounter{line}{39}
\noindent\stepcounter{line}\makebox[2em][l]{\theline}\hbox{\phantom{xxxxxxxxxx}{\color{RoyalBlue}let} !t=tiles ! i}

\noindent\stepcounter{line}\makebox[2em][l]{\theline}\hbox{\phantom{xxxxxxxxxxxxxx}!g=(((south t){\color{Bittersweet}`shiftL`}wgl)$.|.$(west t))}

\noindent\stepcounter{line}\makebox[2em][l]{\theline}\hbox{\phantom{xxxxxxxxxx}{\color{RoyalBlue}in}}

\noindent\stepcounter{line}\makebox[2em][l]{\theline}\hbox{\phantom{xxxxxxxxxxx}(not \$ {\color{Green}S}.member g s) $\wedge$ isDir (i+1) ({\color{Green}S}.insert g s)}

\noindent\stepcounter{line}\makebox[2em][l]{\theline}\hbox{\phantom{xx}{\color{RoyalBlue}in}}

\noindent\stepcounter{line}\makebox[2em][l]{\theline}\hbox{\phantom{xxx}isDir 0 {\color{Green}S}.empty}

}\end{proof}
\end{lemma}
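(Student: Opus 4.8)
The plan is to prove the correctness of \texttt{isDirected} by first establishing a stronger invariant about its recursive helper \texttt{isDir} and then specializing to the top-level call. Since \texttt{isDirected tiles} is defined as \texttt{isDir 0 S.empty}, it suffices to characterize the Boolean value of \texttt{isDir i s} for every index \texttt{i} and every set \texttt{s} of already-seen input glues. Writing $g_j$ for the encoded input glue $(\mathrm{south}_j \mathbin{\ll} \mathtt{wgl}) \mathbin{|} \mathrm{west}_j$ of the tile at index $j$, and $S_i = \{i, i+1, \ldots, |\mathtt{tiles}|-1\}$, the invariant I would prove is exactly the one indicated in the code comments: \texttt{isDir i s} returns \texttt{True} if and only if $g_j \notin \mathtt{s}$ for every $j \in S_i$ and $g_j \neq g_k$ for all distinct $j,k \in S_i$.

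First I would dispatch the base case. When $i \geq \mathtt{U.length\ tiles}$ the guard returns \texttt{True}, and at the same time $S_i = \emptyset$, so both universally quantified conditions hold vacuously and the equivalence is immediate. For the inductive step I would assume the invariant for $i+1$ and inspect the body of \texttt{isDir i s}, which evaluates the conjunction of ``\texttt{g} is not a member of \texttt{s}'' with the recursive call \texttt{isDir (i+1) (S.insert g s)}, where $\mathtt{g} = g_i$. If $g_i \in \mathtt{s}$, the conjunction is \texttt{False}, matching the failure of the invariant's first condition at $j=i$. Otherwise $g_i \notin \mathtt{s}$, and the value equals \texttt{isDir (i+1) (S.insert g s)}, which by the induction hypothesis is \texttt{True} precisely when $g_j \notin \mathtt{s} \cup \{g_i\}$ for all $j \in S_{i+1}$ and the $g_j$ are pairwise distinct on $S_{i+1}$. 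Conjoining this with $g_i \notin \mathtt{s}$ yields exactly that $g_j \notin \mathtt{s}$ for all $j \in S_i$ and that the $g_j$ are pairwise distinct on $S_i$, since the only obligations added relative to $S_{i+1}$ are $g_i \notin \mathtt{s}$ and $g_j \neq g_i$ for $j > i$, the latter being captured by insisting $g_j \notin \{g_i\}$.

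Finally I would specialize to $i=0$ and $\mathtt{s} = \emptyset$: the condition $g_j \notin \emptyset$ is vacuous, so \texttt{isDirected tiles} is \texttt{True} if and only if the encoded input glues $g_0, \ldots, g_{|\mathtt{tiles}|-1}$ are pairwise distinct. The one genuinely non-routine point---and the step I expect to be the main obstacle---is to argue that pairwise distinctness of the \emph{encodings} coincides with distinctness of the actual input glue pairs $(\mathrm{west}_j, \mathrm{south}_j)$. This reduces to verifying that the packing $g = (\mathrm{south} \mathbin{\ll} \mathtt{wgl}) \mathbin{|} \mathrm{west}$ is injective on valid tiles, i.e.\ that the west field occupies the low \texttt{wgl} bits without overlapping the shifted south field. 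Given the fixed five-bit-per-segment tile layout defined in module \texttt{Tile}, this holds as long as \texttt{wgl} is at least the glue bit-width, so that $g_j = g_k \iff (\mathrm{west}_j, \mathrm{south}_j) = (\mathrm{west}_k, \mathrm{south}_k)$. With this injectivity in hand, distinctness of the $g_j$ is equivalent to directedness, completing the proof.
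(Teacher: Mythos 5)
Your proof is correct and follows essentially the same route as the paper's: the same inductive invariant on \texttt{isDir i s} over the suffix $S_i$, with the same base case and the same case split on whether $g_i\in\mathtt{s}$ in the inductive step. The only substantive addition is your explicit verification that the bit-packing $g=(\mathrm{south}\ll\mathtt{wgl})\mid\mathrm{west}$ is injective on input-glue pairs, a point the paper's proof leaves implicit by stating the invariant directly in terms of the pairs $(\mathrm{west}_j,\mathrm{south}_j)$.
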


We now proceed to the proof of the {\tt merge} function.
Its formal specification of this function is given by the following Lemma:

\begin{lemma}\label{lem:merge}
Given a tileset $\mathtt{ts}$, a color $\mathtt{c}$, an index into the tileset $\mathtt{i}$, a pair of north/south glues $\mathtt{a0\_}$ and $\mathtt{b0\_}$, and a pair of east/west glues $\mathtt{a1\_}$ and $\mathtt{b1\_}$, the function $\mathtt{merge}$ returns a tileset in which the $\mathtt{i}$th tile of $\mathtt{ts}$ is set to color $\mathtt{c}$, and all north/south glues in $\mathtt{ts}$ equal to $\mathtt{max(a0\_,b0\_)}$ are set to $\mathtt{min(a0\_,b0\_)}$ if this value is nonnegative (and left unchanged else), and all east/west glues in $\mathtt{ts}$ equal to $\mathtt{max(a1\_,b1\_)}$ are set to $\mathtt{min(a1\_,b1\_)}$ if this value is nonnegative (and left unchanged else).
\end{lemma}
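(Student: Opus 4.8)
The plan is to follow the template set by the proof of Lemma~\ref{lem:directed}: \texttt{merge} will be defined in terms of a recursive worker that walks over the tile indices of \texttt{ts}, and I would establish the claim by induction on that recursion. First I would separate the two independent transformations that \texttt{merge} is specified to perform — the recoloring of the single tile at index \texttt{i} and the global relabeling of glues — and observe that they act on disjoint bit-segments of the integer encoding of each tile (the color field versus the north/south and east/west glue fields). Because these segments do not overlap, the two transformations can be verified separately and cannot interfere. I expect \texttt{merge} to compute once the four quantities $\mathtt{min(a0\_,b0\_)}$, $\mathtt{max(a0\_,b0\_)}$, $\mathtt{min(a1\_,b1\_)}$, $\mathtt{max(a1\_,b1\_)}$, and then to traverse the tiles replacing, in each, an occurrence of the relevant ``max'' glue by the corresponding ``min'' glue.

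For the relabeling part I would state the loop invariant that, after the worker has processed the first $j$ tiles, every north/south glue among those tiles equal to $\mathtt{max(a0\_,b0\_)}$ has been set to $\mathtt{min(a0\_,b0\_)}$ (and symmetrically for the east/west field and the pair $\mathtt{a1\_},\mathtt{b1\_}$), while every other field is left verbatim. The base case $j=0$ is vacuous. The inductive step reduces to checking that the per-tile update tests the target field against the ``max'' value and, when they agree, overwrites it using the \texttt{withN}/\texttt{withS} (resp.\ \texttt{withE}/\texttt{withW}) helpers of module \texttt{Tile}, whose semantics I may assume from the paper. The recoloring of tile \texttt{i} is then a single application of \texttt{withC}, disjoint in the encoding from the glue edits and hence composable with them.

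The main obstacle I anticipate is the nonnegativity guard. Glues are represented by integers in which negative values flag a boundary (``no south/west input glue'') rather than a genuine glue class, so the specification deliberately suppresses a relabeling precisely when $\mathtt{min(\cdot,\cdot)}$ is negative — one must not merge a real glue with an absent one. I would therefore have to verify that every branch of \texttt{merge} that would otherwise perform a substitution is genuinely gated by a nonnegativity test on the appropriate \texttt{min}, and that when the guard fails the field is returned unchanged. Pinning down the interaction between this guard and the \texttt{max}/\texttt{min} computation — in particular ruling out an accidental identification of two distinct glues when one input is negative — is the delicate point where I expect to spend most of the care; the remaining bookkeeping over the bit-segments is routine once the \texttt{Tile} helper semantics are taken as given.
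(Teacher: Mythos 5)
Your proposal is correct and matches the paper's proof in substance: the paper likewise notes only that \texttt{merge} first forms the pairs $(\mathtt{a0},\mathtt{b0})=(\min,\max)$ and $(\mathtt{a1},\mathtt{b1})=(\min,\max)$ and then observes that the generated tileset performs exactly the guarded per-tile glue substitutions and the single recoloring of tile \texttt{i0} that you describe, with the nonnegativity test on the min gating each substitution. The one structural mismatch is that \texttt{merge} is not a recursive worker in the style of \texttt{isDirected} but a single \texttt{U.generate} over indices, each tile being rewritten independently of the others, so the loop invariant you set up collapses to a pointwise check per tile --- which is precisely all that the paper's (very terse) verification amounts to.
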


\begin{proof}

In $\mathtt{merge}$, the pair $\mathtt{(a0,b0)}$ is formed such that
$\mathtt{a0} = \mathtt{min(a0\_,b0\_)}$ and $\mathtt{b0} =
\mathtt{max(a0\_,b0\_)}$.  Similarly, the pair $\mathtt{(a1,b1)}$ is formed such
that $\mathtt{a1} = \mathtt{min(a1\_,b1\_)}$ and $\mathtt{b1} =
\mathtt{max(a1\_,b1\_)}$.

The rest of the function does exactly our claim: a new tileset vector is
generated, in which new tiles are created, that correspond to our claim.

\hfill

{\tt{}\small{}\setcounter{line}{54}
\noindent\stepcounter{line}\makebox[2em][l]{\theline}\hbox{\phantom{}{\color{Purple}merge}::{\color{Green}Int}$\rightarrow${\color{Green}Int}$\rightarrow${\color{Green}Int}$\rightarrow${\color{Green}Int}$\rightarrow${\color{Green}Int}$\rightarrow${\color{Green}Int}$\rightarrow${\color{Green}U}.{\color{Green}Vector} {\color{Green}Tile}$\rightarrow${\color{Green}U}.{\color{Green}Vector} {\color{Green}Tile}}

\noindent\stepcounter{line}\makebox[2em][l]{\theline}\hbox{\phantom{}{\color{Purple}merge} a0\_ b0\_ a1\_ b1\_ i0 c ts=}

\noindent\stepcounter{line}\makebox[2em][l]{\theline}\hbox{\phantom{xx}{\color{RoyalBlue}let} (!a0,!b0)={\color{RoyalBlue}if} a0\_$<$b0\_ {\color{RoyalBlue}then} (a0\_,b0\_) {\color{RoyalBlue}else} (b0\_,a0\_)}

\noindent\stepcounter{line}\makebox[2em][l]{\theline}\hbox{\phantom{xxxxxx}(!a1,!b1)={\color{RoyalBlue}if} a1\_$<$b1\_ {\color{RoyalBlue}then} (a1\_,b1\_) {\color{RoyalBlue}else} (b1\_,a1\_)}

\noindent\stepcounter{line}\makebox[2em][l]{\theline}\hbox{\phantom{xx}{\color{RoyalBlue}in}}

\noindent\stepcounter{line}\makebox[2em][l]{\theline}\hbox{\phantom{xxx}{\color{Green}U}.generate ({\color{Green}U}.length ts)}

\noindent\stepcounter{line}\makebox[2em][l]{\theline}\hbox{\phantom{xxxxx}($\lambda\,$i$\rightarrow$}

\noindent\stepcounter{line}\makebox[2em][l]{\theline}\hbox{\phantom{xxxxxxx}{\color{RoyalBlue}let} !u={\color{RoyalBlue}if} i==i0 {\color{RoyalBlue}then} (ts ! i){\color{Bittersweet}`withC`}c {\color{RoyalBlue}else} ts ! i}

\noindent\stepcounter{line}\makebox[2em][l]{\theline}\hbox{\phantom{xxxxxxxxxxx}!v={\color{RoyalBlue}if} south u==b0 $\wedge$ a0$\geq$0 {\color{RoyalBlue}then} u{\color{Bittersweet}`withS`}a0 {\color{RoyalBlue}else} u}

\noindent\stepcounter{line}\makebox[2em][l]{\theline}\hbox{\phantom{xxxxxxxxxxx}!w={\color{RoyalBlue}if} north v==b0 $\wedge$ a0$\geq$0 {\color{RoyalBlue}then} v{\color{Bittersweet}`withN`}a0 {\color{RoyalBlue}else} v}

\noindent\stepcounter{line}\makebox[2em][l]{\theline}\hbox{\phantom{xxxxxxxxxxx}!x={\color{RoyalBlue}if} west w==b1 $\wedge$ a1$\geq$0 {\color{RoyalBlue}then} w{\color{Bittersweet}`withW`}a1 {\color{RoyalBlue}else} w}

\noindent\stepcounter{line}\makebox[2em][l]{\theline}\hbox{\phantom{xxxxxxx}{\color{RoyalBlue}in}}

\noindent\stepcounter{line}\makebox[2em][l]{\theline}\hbox{\phantom{xxxxxxxx}{\color{RoyalBlue}if} east x==b1 $\wedge$ a1$\geq$0 {\color{RoyalBlue}then} x{\color{Bittersweet}`withE`}a1 {\color{RoyalBlue}else} x}

\noindent\stepcounter{line}\makebox[2em][l]{\theline}\hbox{\phantom{xxxxx})}

}
\end{proof}

Finally, the core of our algorithm is the following recursive function, $\mathtt{placeTile}$, which moves through all locations in the pattern in the ordering shown by Figure~\ref{fig:order} and places tiles from the current tile set (often modifying the tile set, too) as long as it is able to.  By making recursive calls which attempt all possibilities, it ensures that the full set of possible tile sets (up to isomorphism) is explored and returns exactly those which self-assemble the given pattern. The arguments to $\mathtt{placeTile}$ are:

\begin{enumerate}
    \item[] {\tt share}: a boolean value set by the server, possibly telling the client to ``share'' the current job
    \item[] {\tt save}: a function that we have explained in Lemma \ref{lem:dowork}, that {\tt placeTile} can use to ``save'' intermediate results in case it is asked to reshare, or killed (for instance if it runs on a cluster).
    \item[] {\tt results}: a list of results that have been found so far.
    \item[] {\tt jobs}: a list of jobs to treat. Each job contains four relevant fields for the actual computation:
\begin{enumerate}
    \item[] {\tt posX},{\tt posY}: the coordinates of the current position in the assembly which $\mathtt{placeTile}$ should attempt to fill with a tile

    \item[] {\tt tileset}: the current tileset (which is a vector of integer values which are each the concatenated integer values representing the properties of a tile type)

    \item[] {\tt assembly}: the current assembly (which is a two-dimensional vector storing the index of the tile type, in the tileset, which is located at each pair of $(x,y)$ coordinates
\end{enumerate}
\end{enumerate}

Note that $\mathtt{placeTile}$ also makes use of the globally defined
two-dimensional vector $\mathtt{pattern}$ which, at each location representing a
pair of $(x,y)$ coordinates, defines one of two colors (i.e. 0 (black) or 1
(white)) for the pattern at that location.

\begin{lemma}\label{lem:placeTile}

For any list of jobs $\mathtt{j}_{\mathtt 0}$, any value of {\tt share} and any function {\tt
save}, {\tt placeTile share save [] $\mathtt{j}_{\mathtt 0}$} returns the list of all subjobs of jobs of
$\mathtt{j_0}$ that have not been explored, and all results that have been found during
the exploration of the explored subjobs of $\mathtt{j}_{\mathtt 0}$.

Moreover, all its calls to {\tt save} are of the form {\tt save j r}, where
{\tt j} is the list of all subjobs of $\mathtt{j}_{\mathtt 0}$ that have not been completely
explored, and $\mathtt{r}$ is the list of all results that have been found in the
exploration of all other subjobs of $\mathtt{j}_{\mathtt 0}$.

\end{lemma}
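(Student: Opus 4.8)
The plan is to prove Lemma~\ref{lem:placeTile} by well-founded induction on the recursion of \texttt{placeTile}, maintaining throughout the invariant that the worklist (its \texttt{jobs} argument) together with the accumulated \texttt{results} argument accounts for exactly the unexplored subjobs and the already-found results of $\mathtt{j}_{\mathtt 0}$. First I would set up the termination measure. Since the target pattern is a fixed finite rectangle and, by the ordering of Figure~\ref{fig:order}, every job carries a current position recording how many tiles have been placed, each recursive step either removes a completed job from the worklist or replaces the head job by finitely many extensions, each advancing the current position by one cell. There are only finitely many positions and, at each position, only finitely many admissible tile placements (at most the $\ell=13$ existing tile types together with a single fresh type), so the search tree rooted at any worklist is finite; strong induction on the number of its nodes is therefore legitimate.

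The heart of the argument is a single-position soundness-and-completeness claim: for a job $j$ with current position $c$ and directed tileset, the collection of jobs that \texttt{placeTile} produces by expanding $c$ is \emph{exactly} the set of direct subjobs of $j$ obtained by placing one valid tile at $c$ and advancing to the successor position. I would verify this against the three cases of the algorithm in Section~\ref{subsect:algorithm}: reusing an existing fitting tile type, merging glue-equivalence classes so that an already-placed tile type of the correct color can be reused, or introducing one fresh tile type. Here Lemma~\ref{lem:merge} certifies that \texttt{merge} performs precisely the glue-class identification described in Section~\ref{subsect:assemblies}, and Lemma~\ref{lem:directed} certifies that the \texttt{isDirected} test prunes a branch exactly when the corresponding tileset $T_\alpha$ fails to be directed. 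Combined with the observation of Section~\ref{subsect:assemblies} that imposing glue-equivalence classes yields precisely the minimal tilesets $T_\alpha$ whose morphic images enumerate all tilesets self-assembling the pattern, this gives exhaustiveness up to isomorphism, including the justification that trying only one unused tile type per position discards no non-isomorphic solution.

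With the expansion claim in hand, the inductive step is routine. For the head job of the worklist, its valid expansions are its unexplored direct subjobs; prepending them and invoking the recursive call, the induction hypothesis yields that the call returns all unexplored subjobs of the enlarged worklist and all results discovered so far. The base cases are immediate: an empty worklist returns the accumulated results and leaves nothing unexplored; a job whose assembly is total is recorded through the \texttt{R} constructor and contributes to the result list; and a position admitting no valid placement is a breakpoint contributing neither a result nor a further subjob. Tracing the control flow, every call \texttt{save j r} is made with \texttt{j} equal to the current worklist and \texttt{r} equal to the accumulated results, which by the invariant are respectively all subjobs of $\mathtt{j}_{\mathtt 0}$ not yet completely explored and all results from the other, fully explored subjobs; this establishes Invariant~\ref{h} and hence both conclusions of the lemma, matching the two hypotheses of Lemma~\ref{lem:dowork}.

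The step I expect to be the main obstacle is the single-position completeness-up-to-isomorphism claim. The subtlety is not the soundness of the three cases but their joint exhaustiveness once the permutation-avoidance optimization is in force: one must argue carefully that identifying glue-equivalence classes via \texttt{merge} reconstructs exactly the minimal tileset $T_\alpha$ of Section~\ref{subsect:assemblies}, and that restricting the third case to a single fresh tile type cannot skip any directed tileset that self-assembles $P$ up to morphic image. Getting this correspondence between the operational behavior of the code and the glue-class construction precise is where essentially all the real content of the proof lies.
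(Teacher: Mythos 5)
Your plan follows the paper's own proof essentially step for step: a well-founded induction over the recursion of \texttt{placeTile}, the worklist/results invariant (the paper's Invariant~\ref{l}), a case analysis matching the code's branches with the expansion step handled by a sub-induction on the tile-trying loop, and appeals to Lemmas~\ref{lem:merge} and~\ref{lem:directed} for the glue-merging and directedness checks. The one point you flag as the main obstacle --- that skipping all but one uncolored tile type loses nothing up to renaming --- is exactly the point the paper also treats only briefly ("up to renaming of unused tiles"), so your assessment of where the real content lies is accurate.
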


%Beginning from assembly $\mathtt{assemb}$, tileset $\mathtt{tiles}$, and location $(x,y)$, if $\mathtt{share} = \mathtt{false}$ the function $\mathtt{placeTiles}$ returns a list of all possible pairs of completed assemblies and the tilesets of size $\leq 13$ which can self-assemble them.

%If $\mathtt{share} = \mathtt{true}$, $\mathtt{placeTile}$ creates a list of ``jobs'' which continue from the next tile location and sends them to the server, which will ensure that they will be completed by some client.  The server and the mechanism for distributing jobs is described in Sections \ref{subsect:server} and \ref{subsect:client}.

\begin{proof}

We will prove, by induction on the number of subjobs of
$\mathtt{j}_{\mathtt{0}}$, that for all values of $\mathtt{r}$ and {\tt j},
{\tt placeTile share save r j} is the concatenation of {\tt r} with all the results
found in the exploration, and all the subjobs of jobs of {\tt j} that have not
been explored.

Moreover, we will prove the following invariant:

\hfill

\begin{invariant}
\label{l}

The recursive calls of {\tt placeTile} are all such that the {\tt results} and
{\tt jobs} arguments verify the condition that {\tt jobs} is the list of all
subjobs of the initial job list that have not been explored and contains no
results, and {\tt results} is the list of all results that have been found
during the exploration of all other subjobs of the initial job list.

\end{invariant}

The first case is when the list of jobs to explore is empty. In this case,
we simply return the list of found results, and the claim holds.

\hfill

{\tt{}\small{}\setcounter{line}{91}
\noindent\stepcounter{line}\makebox[2em][l]{\theline}\hbox{\phantom{}{\color{Purple}placeTile}::{\color{Green}Bool} $\rightarrow$ ([{\color{Green}Job}] $\rightarrow$ [{\color{Green}Job}] $\rightarrow$ {\color{Green}IO} ()) $\rightarrow$ [{\color{Green}Job}] $\rightarrow$ [{\color{Green}Job}] $\rightarrow$ {\color{Green}IO} [{\color{Green}Job}]}

\noindent\stepcounter{line}\makebox[2em][l]{\theline}\hbox{\phantom{}{\color{Purple}placeTile} \_ \_ res []=return res}

}\hfill

Now, even though in our protocol, no results are ever sent to the {\tt
placeTile} function in the job list argument, we need to treat this case for our
claim to hold.

\hfill

{\tt{}\small{}\setcounter{line}{164}
\noindent\stepcounter{line}\makebox[2em][l]{\theline}\hbox{\phantom{}{\color{Purple}placeTile} !share save results (j@{\color{Green}R}\{\}:js)=}

\noindent\stepcounter{line}\makebox[2em][l]{\theline}\hbox{\phantom{xx}placeTile share save (j:results) js}

}\hfill

We now proceed to the proof of the main case.

\hfill

{\tt{}\small{}\setcounter{line}{173}
\noindent\stepcounter{line}\makebox[2em][l]{\theline}\hbox{\phantom{}{\color{Purple}placeTile} !share save results (j@{\color{Green}J}\{\}:js)={\color{RoyalBlue}do} \{}

}The following line is the only place where {\tt placeTile} calls {\tt save}.
By invariant \ref{l} on the recursive calls of {\tt placeTile}, our claim
on the calls to {\tt save} clearly holds.

{\tt{}\small{}\setcounter{line}{179}
\noindent\stepcounter{line}\makebox[2em][l]{\theline}\hbox{\phantom{xxxx}save (j:js) results;}

\noindent\stepcounter{line}\makebox[2em][l]{\theline}\hbox{\phantom{xxxx}{\color{RoyalBlue}let} \{ !x=posX j;}

\noindent\stepcounter{line}\makebox[2em][l]{\theline}\hbox{\phantom{xxxxxxxxxx}!y=posY j;}

\noindent\stepcounter{line}\makebox[2em][l]{\theline}\hbox{\phantom{xxxxxxxxxx}!tiles=tileset j;}

\noindent\stepcounter{line}\makebox[2em][l]{\theline}\hbox{\phantom{xxxxxxxxxx}!assemb=assembly j;}

}
First, the variable $\mathtt{inS}$ is the glue to the south of the location
$(x,y)$ to be tiled (i.e. its south input).  If the location to the south is
outside of the pattern or if there is no tile there, then the glue value of $-1$
is used.  In an analogous manner, the variable $\mathtt{inW}$ is the value of
the input glue to the west.

{\tt{}\small{}\setcounter{line}{187}
\noindent\stepcounter{line}\makebox[2em][l]{\theline}\hbox{\phantom{xxxxxxxxxx}!inS=}

\noindent\stepcounter{line}\makebox[2em][l]{\theline}\hbox{\phantom{xxxxxxxxxxxx}{\color{RoyalBlue}if} y$\geq$1 {\color{RoyalBlue}then}}

\noindent\stepcounter{line}\makebox[2em][l]{\theline}\hbox{\phantom{xxxxxxxxxxxxxx}{\color{RoyalBlue}let} !a=(assemb!(y-1)!x) {\color{RoyalBlue}in}}

\noindent\stepcounter{line}\makebox[2em][l]{\theline}\hbox{\phantom{xxxxxxxxxxxxxx}{\color{RoyalBlue}if} a$\geq$0 {\color{RoyalBlue}then}}

\noindent\stepcounter{line}\makebox[2em][l]{\theline}\hbox{\phantom{xxxxxxxxxxxxxxxx}north \$! tiles!a}

\noindent\stepcounter{line}\makebox[2em][l]{\theline}\hbox{\phantom{xxxxxxxxxxxxxx}{\color{RoyalBlue}else}}

\noindent\stepcounter{line}\makebox[2em][l]{\theline}\hbox{\phantom{xxxxxxxxxxxxxxxx}-1}

\noindent\stepcounter{line}\makebox[2em][l]{\theline}\hbox{\phantom{xxxxxxxxxxxx}{\color{RoyalBlue}else}}

\noindent\stepcounter{line}\makebox[2em][l]{\theline}\hbox{\phantom{xxxxxxxxxxxxxx}-1;}

\noindent\stepcounter{line}\makebox[2em][l]{\theline}\hbox{\phantom{xxxxxxxxxx}!inW=}

\noindent\stepcounter{line}\makebox[2em][l]{\theline}\hbox{\phantom{xxxxxxxxxxxx}{\color{RoyalBlue}if} x$\geq$1 {\color{RoyalBlue}then}}

\noindent\stepcounter{line}\makebox[2em][l]{\theline}\hbox{\phantom{xxxxxxxxxxxxxx}{\color{RoyalBlue}let} !a=(assemb!y!(x-1)){\color{RoyalBlue}in}}

\noindent\stepcounter{line}\makebox[2em][l]{\theline}\hbox{\phantom{xxxxxxxxxxxxxx}{\color{RoyalBlue}if} a$\geq$0 {\color{RoyalBlue}then}}

\noindent\stepcounter{line}\makebox[2em][l]{\theline}\hbox{\phantom{xxxxxxxxxxxxxxxx}east \$! tiles!a}

\noindent\stepcounter{line}\makebox[2em][l]{\theline}\hbox{\phantom{xxxxxxxxxxxxxx}{\color{RoyalBlue}else}}

\noindent\stepcounter{line}\makebox[2em][l]{\theline}\hbox{\phantom{xxxxxxxxxxxxxxxx}-1}

\noindent\stepcounter{line}\makebox[2em][l]{\theline}\hbox{\phantom{xxxxxxxxxxxx}{\color{RoyalBlue}else}}

\noindent\stepcounter{line}\makebox[2em][l]{\theline}\hbox{\phantom{xxxxxxxxxxxxxx}-1}

\noindent\stepcounter{line}\makebox[2em][l]{\theline}\hbox{\phantom{xxxxxxxx}\};}

}
If {\tt inS} (respectively {\tt inW}) is negative, and we are not on the south
(respectively west) border, then there is no more tile we can add on the current
row (respectively column). Therefore, we must start a new column (respectively a
new row). Remark that since we keep alternating between adding rows and columns,
we maintain the invariant that $\mathtt{posX}\geq\mathtt{posY}$ exactly when we
are adding a new row.
This is what the following code does. Invariant \ref{l}, on {\tt placeTile}'s recursive
calls, is clearly preserved by all the calls in this portion of the code.

{\tt{}\small{}\setcounter{line}{213}
\noindent\stepcounter{line}\makebox[2em][l]{\theline}\hbox{\phantom{xxxx}{\color{RoyalBlue}if} inS$<$0 $\wedge$ y$>$0 {\color{RoyalBlue}then}}

\noindent\stepcounter{line}\makebox[2em][l]{\theline}\hbox{\phantom{xxxxxx}{\color{RoyalBlue}if} x$<${\color{Green}U}.length (pattern ! 0) {\color{RoyalBlue}then}}

\noindent\stepcounter{line}\makebox[2em][l]{\theline}\hbox{\phantom{xxxxxxxx}placeTile share save results (j \{ posX=x,posY=0 \}:js)}

\noindent\stepcounter{line}\makebox[2em][l]{\theline}\hbox{\phantom{xxxxxx}{\color{RoyalBlue}else}}

\noindent\stepcounter{line}\makebox[2em][l]{\theline}\hbox{\phantom{xxxxxxxx}{\color{RoyalBlue}if} y+1$\geq${\color{Green}V}.length (pattern) {\color{RoyalBlue}then}}

\noindent\stepcounter{line}\makebox[2em][l]{\theline}\hbox{\phantom{xxxxxxxxxx}placeTile share save ({\color{Green}R} \{ tileset=tiles,assembly=assemb \}:results) js}

\noindent\stepcounter{line}\makebox[2em][l]{\theline}\hbox{\phantom{xxxxxxxx}{\color{RoyalBlue}else}}

\noindent\stepcounter{line}\makebox[2em][l]{\theline}\hbox{\phantom{xxxxxxxxxx}placeTile share save results (j \{ posX=0,posY=y+1 \}:js)}

\noindent\stepcounter{line}\makebox[2em][l]{\theline}\hbox{\phantom{xxxx}{\color{RoyalBlue}else}}

\noindent\stepcounter{line}\makebox[2em][l]{\theline}\hbox{\phantom{xxxxxx}{\color{RoyalBlue}if} inW$<$0 $\wedge$ x$>$0 {\color{RoyalBlue}then}}

\noindent\stepcounter{line}\makebox[2em][l]{\theline}\hbox{\phantom{xxxxxxxx}{\color{RoyalBlue}if} y$<${\color{Green}V}.length (pattern) {\color{RoyalBlue}then}}

\noindent\stepcounter{line}\makebox[2em][l]{\theline}\hbox{\phantom{xxxxxxxxxx}placeTile share save results (j \{ posX=0,posY=y \}:js)}

\noindent\stepcounter{line}\makebox[2em][l]{\theline}\hbox{\phantom{xxxxxxxx}{\color{RoyalBlue}else}}

\noindent\stepcounter{line}\makebox[2em][l]{\theline}\hbox{\phantom{xxxxxxxxxx}{\color{RoyalBlue}if} x+1$\geq${\color{Green}U}.length (pattern ! 0) {\color{RoyalBlue}then}}

\noindent\stepcounter{line}\makebox[2em][l]{\theline}\hbox{\phantom{xxxxxxxxxxxx}placeTile share save ({\color{Green}R} \{ tileset=tiles,assembly=assemb \}:results) js}

\noindent\stepcounter{line}\makebox[2em][l]{\theline}\hbox{\phantom{xxxxxxxxxx}{\color{RoyalBlue}else}}

\noindent\stepcounter{line}\makebox[2em][l]{\theline}\hbox{\phantom{xxxxxxxxxxxx}placeTile share save results (j \{ posX=x+1,posY=0 \}:js)}

\noindent\stepcounter{line}\makebox[2em][l]{\theline}\hbox{\phantom{xxxxxx}{\color{RoyalBlue}else}}

}
Else, both the south and west glues are defined, or we are at the beginning of a
row or a column. Hence, there are two possible cases: either there is already a
tile with matching south and west glues, or there is none.  In the first case,
we have no choice but to place that tile at the current position, and move on to
the next position, which is done in case {\tt Just p}:

{\tt{}\small{}\setcounter{line}{241}
\noindent\stepcounter{line}\makebox[2em][l]{\theline}\hbox{\phantom{xxxxxxxx}{\color{RoyalBlue}let} \{ (!nextX,!nextY)={\color{RoyalBlue}if} y$>$x {\color{RoyalBlue}then} (x+1,y) {\color{RoyalBlue}else} (x,y+1);}

\noindent\stepcounter{line}\makebox[2em][l]{\theline}\hbox{\phantom{xxxxxxxxxxxxxx}col=pattern!y ! x;}

\noindent\stepcounter{line}\makebox[2em][l]{\theline}\hbox{\phantom{xxxxxxxxxxxxxx}possible={\color{Green}U}.findIndex ($\lambda\,$a$\rightarrow$south a==inS $\wedge$ west a==inW) tiles \}}

\noindent\stepcounter{line}\makebox[2em][l]{\theline}\hbox{\phantom{xxxxxxxx}{\color{RoyalBlue}in}}

\noindent\stepcounter{line}\makebox[2em][l]{\theline}\hbox{\phantom{xxxxxxxxx}{\color{RoyalBlue}case} possible {\color{RoyalBlue}of} \{}

\noindent\stepcounter{line}\makebox[2em][l]{\theline}\hbox{\phantom{xxxxxxxxxxx}{\color{Green}Just} p$\rightarrow$}

\noindent\stepcounter{line}\makebox[2em][l]{\theline}\hbox{\phantom{xxxxxxxxxxxxxx}{\color{RoyalBlue}let} \{ !color\_h=color (tiles ! p) \} {\color{RoyalBlue}in}}

\noindent\stepcounter{line}\makebox[2em][l]{\theline}\hbox{\phantom{xxxxxxxxxxxxxx}{\color{RoyalBlue}if} color\_h==col $\vee$ color\_h==mgl {\color{RoyalBlue}then}}

\noindent\stepcounter{line}\makebox[2em][l]{\theline}\hbox{\phantom{xxxxxxxxxxxxxxxx}placeTile share save results}

\noindent\stepcounter{line}\makebox[2em][l]{\theline}\hbox{\phantom{xxxxxxxxxxxxxxxx}(j \{ posX=nextX,posY=nextY,}

\noindent\stepcounter{line}\makebox[2em][l]{\theline}\hbox{\phantom{xxxxxxxxxxxxxxxxxxxxx}tileset=}

\noindent\stepcounter{line}\makebox[2em][l]{\theline}\hbox{\phantom{xxxxxxxxxxxxxxxxxxxxxxx}{\color{RoyalBlue}if} color\_h==mgl {\color{RoyalBlue}then} tiles // [(p,(tiles !p){\color{Bittersweet}`withC`}col)]}

\noindent\stepcounter{line}\makebox[2em][l]{\theline}\hbox{\phantom{xxxxxxxxxxxxxxxxxxxxxxx}{\color{RoyalBlue}else} tiles,}

\noindent\stepcounter{line}\makebox[2em][l]{\theline}\hbox{\phantom{xxxxxxxxxxxxxxxxxxxxx}assembly=assemb // [(y,(assemb!y) // [(x,p)])]}

\noindent\stepcounter{line}\makebox[2em][l]{\theline}\hbox{\phantom{xxxxxxxxxxxxxxxxxxx}\}:js)}

\noindent\stepcounter{line}\makebox[2em][l]{\theline}\hbox{\phantom{xxxxxxxxxxxxxx}{\color{RoyalBlue}else}}

\noindent\stepcounter{line}\makebox[2em][l]{\theline}\hbox{\phantom{xxxxxxxxxxxxxxxx}placeTile share save results js;}

\noindent\stepcounter{line}\makebox[2em][l]{\theline}\hbox{\phantom{}}

}
Or there is no matching tile, in which case we simply try all tiles that can be placed
at the current position, which fall in either of two cases:

\begin{itemize}

\item tiles whose color matches the pattern's color at the current position.
\item tiles that have not yet been used, i.e. whose color is not yet defined.
We only need to consider one of them, because we do not consider solutions that are
equivalent by renaming. Therefore, we use the {\tt seenBlank} argument of {\tt
tryAllTiles} to keep track of whether we have already tried an uncolored tile.

\end{itemize}

In both cases, we merge these tiles' west and south glues with {\tt inW} and
{\tt inS}, respectively: according to Lemma \ref{lem:merge}, this means that we
adjust the tileset so that the chosen tile can be placed without mismatches at
the current position.

Formally, we can easily prove by induction on {\tt i}, that {\tt tryAllTiles i
False l} (respectively {\tt tryAllTiles i True l}) is the list of all subjobs of
{\tt j}, that try to place a tile of index {\tt i} or larger in the tileset, not
including (respectively including) uncolored tiles, along with the jobs of list
{\tt l}.

{\tt{}\small{}\setcounter{line}{266}
\noindent\stepcounter{line}\makebox[2em][l]{\theline}\hbox{\phantom{xxxxxxxxxxxxxx}{\color{Green}Nothing}$\rightarrow$}

\noindent\stepcounter{line}\makebox[2em][l]{\theline}\hbox{\phantom{xxxxxxxxxxxxxxxx}{\color{RoyalBlue}let} \{ tryAllTiles i seenBlanks list=}

\noindent\stepcounter{line}\makebox[2em][l]{\theline}\hbox{\phantom{xxxxxxxxxxxxxxxxxxxxxxxxx}{\color{RoyalBlue}if} i$\geq${\color{Green}U}.length tiles {\color{RoyalBlue}then}}

\noindent\stepcounter{line}\makebox[2em][l]{\theline}\hbox{\phantom{xxxxxxxxxxxxxxxxxxxxxxxxxxx}list}

\noindent\stepcounter{line}\makebox[2em][l]{\theline}\hbox{\phantom{xxxxxxxxxxxxxxxxxxxxxxxxx}{\color{RoyalBlue}else}}

\noindent\stepcounter{line}\makebox[2em][l]{\theline}\hbox{\phantom{xxxxxxxxxxxxxxxxxxxxxxxxxxx}{\color{RoyalBlue}let} !t=tiles !i {\color{RoyalBlue}in}}

\noindent\stepcounter{line}\makebox[2em][l]{\theline}\hbox{\phantom{xxxxxxxxxxxxxxxxxxxxxxxxxxx}{\color{RoyalBlue}if} color t==col $\vee$ (color t==mgl $\wedge$ not seenBlanks) {\color{RoyalBlue}then}}

\noindent\stepcounter{line}\makebox[2em][l]{\theline}\hbox{\phantom{xxxxxxxxxxxxxxxxxxxxxxxxxxxxx}{\color{RoyalBlue}let} !tiles=}

\noindent\stepcounter{line}\makebox[2em][l]{\theline}\hbox{\phantom{xxxxxxxxxxxxxxxxxxxxxxxxxxxxxxxxxxx}merge inS (south t) inW (west t) i col tiles}

\noindent\stepcounter{line}\makebox[2em][l]{\theline}\hbox{\phantom{xxxxxxxxxxxxxxxxxxxxxxxxxxxxx}{\color{RoyalBlue}in}}

\noindent\stepcounter{line}\makebox[2em][l]{\theline}\hbox{\phantom{xxxxxxxxxxxxxxxxxxxxxxxxxxxxxx}{\color{RoyalBlue}if} isDirected tiles {\color{RoyalBlue}then}}

\noindent\stepcounter{line}\makebox[2em][l]{\theline}\hbox{\phantom{xxxxxxxxxxxxxxxxxxxxxxxxxxxxxxxx}{\color{RoyalBlue}let} !next=}

\noindent\stepcounter{line}\makebox[2em][l]{\theline}\hbox{\phantom{xxxxxxxxxxxxxxxxxxxxxxxxxxxxxxxxxxxxxx}{\color{Green}J} \{ posX=nextX,posY=nextY,}

\noindent\stepcounter{line}\makebox[2em][l]{\theline}\hbox{\phantom{xxxxxxxxxxxxxxxxxxxxxxxxxxxxxxxxxxxxxxxxxx}tileset=tiles,}

\noindent\stepcounter{line}\makebox[2em][l]{\theline}\hbox{\phantom{xxxxxxxxxxxxxxxxxxxxxxxxxxxxxxxxxxxxxxxxxx}assembly=}

\noindent\stepcounter{line}\makebox[2em][l]{\theline}\hbox{\phantom{xxxxxxxxxxxxxxxxxxxxxxxxxxxxxxxxxxxxxxxxxxxx}(assemb // [(y,(assemb!y) // [(x,i)])]),}

\noindent\stepcounter{line}\makebox[2em][l]{\theline}\hbox{\phantom{xxxxxxxxxxxxxxxxxxxxxxxxxxxxxxxxxxxxxxxxxx}k=0 \}}

\noindent\stepcounter{line}\makebox[2em][l]{\theline}\hbox{\phantom{xxxxxxxxxxxxxxxxxxxxxxxxxxxxxxxx}{\color{RoyalBlue}in}}

\noindent\stepcounter{line}\makebox[2em][l]{\theline}\hbox{\phantom{xxxxxxxxxxxxxxxxxxxxxxxxxxxxxxxxx}tryAllTiles (i+1)}

\noindent\stepcounter{line}\makebox[2em][l]{\theline}\hbox{\phantom{xxxxxxxxxxxxxxxxxxxxxxxxxxxxxxxxx}(seenBlanks $\vee$ (color t==mgl))}

\noindent\stepcounter{line}\makebox[2em][l]{\theline}\hbox{\phantom{xxxxxxxxxxxxxxxxxxxxxxxxxxxxxxxxx}(next:list)}

\noindent\stepcounter{line}\makebox[2em][l]{\theline}\hbox{\phantom{xxxxxxxxxxxxxxxxxxxxxxxxxxxxxx}{\color{RoyalBlue}else}}

\noindent\stepcounter{line}\makebox[2em][l]{\theline}\hbox{\phantom{xxxxxxxxxxxxxxxxxxxxxxxxxxxxxxxx}tryAllTiles (i+1) seenBlanks list}

\noindent\stepcounter{line}\makebox[2em][l]{\theline}\hbox{\phantom{xxxxxxxxxxxxxxxxxxxxxxxxxxx}{\color{RoyalBlue}else}}

\noindent\stepcounter{line}\makebox[2em][l]{\theline}\hbox{\phantom{xxxxxxxxxxxxxxxxxxxxxxxxxxxxx}tryAllTiles (i+1) seenBlanks list;}

\noindent\stepcounter{line}\makebox[2em][l]{\theline}\hbox{\phantom{xxxxxxxxxxxxxxxxxxxxxx}nextJobs=tryAllTiles 0 {\color{Green}False} js \}}

\noindent\stepcounter{line}\makebox[2em][l]{\theline}\hbox{\phantom{xxxxxxxxxxxxxxxx}{\color{RoyalBlue}in}}

}First, if we were asked to share, return all newly created jobs, along with
{\tt js}, the remaining jobs after {\tt j} is divided into subjobs. In this
case, the claim holds: {\tt placeTile} does return all jobs and results found
during the exploration.

{\tt{}\small{}\setcounter{line}{318}
\noindent\stepcounter{line}\makebox[2em][l]{\theline}\hbox{\phantom{xxxxxxxxxxxxxxxxx}{\color{RoyalBlue}if} share {\color{RoyalBlue}then} return (results++nextJobs)}

}
From our proof of the {\tt tryAllTiles} function, the following recursive
call to {\tt placeTile} preserves the invariant on {\tt placeTile}'s recursive
calls. Indeed, {\tt results} has not changed, and {\tt nextJobs} now contains
{\tt js}, along with all subjobs of {\tt j} (up to renaming of unused tiles).
By the definitions of jobs and subtasks (see Definitions \ref{def:tasks} and
\ref{def:jobs}), invariant \ref{l} is clearly preserved.

{\tt{}\small{}\setcounter{line}{323}
\noindent\stepcounter{line}\makebox[2em][l]{\theline}\hbox{\phantom{xxxxxxxxxxxxxxxxx}{\color{RoyalBlue}else} placeTile share save results nextJobs}

\noindent\stepcounter{line}\makebox[2em][l]{\theline}\hbox{\phantom{xxxxxxxxxxx}\}}

\noindent\stepcounter{line}\makebox[2em][l]{\theline}\hbox{\phantom{xxxx}\}}

}\end{proof}

%\begin{lemma}\label{lem:merge}
%  Let $T$ be a directed, non-empty tile set, $\alpha$ a valid configuration of tiles from $T$, and $a$ and $b$ two glues of $T$. If $T'=\mathtt{mergeH}\ T\ a\ b$ (i.e. the result of $\mathtt{mergeH}$ on tile set $T$ and glues $a$ and $b$) is non-empty, it is also directed, and $\alpha'$ (the configuration equal to $\alpha$ with each tile from $T$ replaced by its $T'$ equivalent) is a valid configuration of tiles from $T'$.
%\end{lemma}
%
%\begin{proof}
%  By line $15$ of $\mathtt{mergeH}$, $T'$ is non-empty if and only if it is directed. Therefore, for the rest of the proof, we assume that $T'$ is non-empty.
%
%  Now, let $s$ be the assembly sequence of $\alpha$ on $T$. We build an assembly sequence that assembles $\alpha'$: since at each step of $s$, only one tile matched the current assembly at that step, and $T'$ is directed, it is still the case that only one tile matches.
%  \end{proof}
%
%The lemma and proof for $\mathtt{mergeV}$ are analogous to those for $\mathtt{mergeH}$.

\subsection{Proof of Lemma \ref{prop:gadget}}
\label{subsect:programmatic-final}

We can finally combine all the results of Section \ref{implementation} to get
our Lemma:

\begin{replemma}{prop:gadget}
If the RSA signatures of all messages used when checking the proof were not
counterfeit, then the gadget pattern $G$, shown in Figure~\ref{fig:gadget}, can
only be self-assembled with 13 tile types if a tile set is used which is
isomorphic to $T$.

\begin{proof}
The result follows from the combination of Lemmas \ref{lem:server},
\ref{lem:dowork} and \ref{lem:placeTile}.
\end{proof}
\end{replemma}

\end{document}